\numberwithin{equation}{section}
\DeclarePairedDelimiter\floor{\lfloor}{\rfloor}
\setlist[itemize]{topsep=0pt,before=\leavevmode\vspace{1.0em}}
\theoremstyle{theorem}
\newtheorem{theorem}{Theorem}[section]
\newtheorem{lemma}[theorem]{Lemma}
\newtheorem{proposition}[theorem]{Proposition}
\newtheorem{definition}[theorem]{Definition}
\newtheorem{remark}[theorem]{Remark}
\newtheorem{assumption}[theorem]{Assumption}
\newtheorem{example}[theorem]{Example}
\newtheorem{corollary}[theorem]{Corollary}
\def\spa{\hskip -3pt}
\def\bearray{\begin{eqnarray}}
\def\earray{\end{eqnarray}}
\def\beq{\begin{equation}}
\def\eeq{\end{equation}}
\def\b0{{\bf 0}}
\def\bC{{\mathbb C}}           
\def\bR{{\mathbb R}}
\def\gA{{\mathfrak A}}       
\def\gB{{\mathfrak B}}
\begin{document} 

\par
\bigskip
\large
\noindent
{\bf Gibbs states and their classical limit}
\bigskip
\par
\rm
\normalsize

\noindent {\bf Christiaan J.F. van de Ven}\\
\par

\noindent 
Julius-Maximilians-Universit\"{a}t W\"{u}rzburg, Emil-Fischer-Stra\ss e 40, 97074 W\"{u}rzburg, Germany\\
Email: christiaan.vandeven@mathematik.uni-wuerzburg.de\\

\par

\rm\small

\rm\normalsize


\par
\bigskip

\noindent
\small
{\bf Abstract.} 
A continuous bundle of $C^*$-algebras provides a rigorous framework to study the thermodynamic limit of quantum theories. If the bundle admits the additional structure of a strict deformation quantization (in the sense of Rieffel) one is allowed to study the {\em classical limit} of the quantum system, i.e. a mathematical formalism that examines the convergence of algebraic quantum states to probability measures on phase space (typically a Poisson or symplectic manifold). In this manner we first prove the existence of the classical limit of Gibbs states illustrated with a class of Schr\"{o}dinger operators in the regime where Planck's constant $\hbar$ appearing in front of the Laplacian approaches zero. We additionally show that the ensuing limit corresponds to the unique probability measure satisfying the so-called classical or static KMS- condition. Subsequently, we conduct a similar study on the free energy of mean-field quantum spin systems in the regime of large particles, and discuss the existence of the classical limit of the relevant Gibbs states.  Finally, a short section is devoted to single site quantum spin systems in the large spin limit.
\normalsize
\tableofcontents
\newpage

\section{Introduction}\label{Intro}

A modern and rigorous way that establishes a link between classical and quantum theories is based on the theory of {\em quantization},  generally referring to the passage from a classical to a corresponding quantum theory. This goes back to the time when the correct formalism of quantum mechanics was beginning to be discovered. There is in principle no general ``quantization'' recipe working in all cases, and different quantization schemes may lead to inequivalent results with respect to other quantization methods. This is certainly unsatisfactory and depending on the precise purpose, each method has its pros and cons. For example, in geometric quantization (GQ) one aims to obtain a quantum mechanical system given a classical mechanical system whose procedure basically consists of the following three steps: prequantization of the classical system, a polarization method, and finally a metaplectic correction in order to obtain a nonzero quantum Hilbert space \cite{BaWe}. This quantization scheme focuses on the space of states and therefore on the Schr\"{o}dinger picture. A major advantage of GQ is that this technique is very efficient for controlling the physics of the quantum system. Formal deformation quantization (FDQ) instead is based on the construction of the quantum theory via a so-called $\star$ product defined in terms of a formal parameter (typically Planck’s constant $\hbar$). FDQ is useful for e.g. the construction of quantum states in terms of classical ones. The aforementioned quantization procedures are used to obtain quantum mechanics from classical methods. Even though such approaches often give accurate results, they also have their drawbacks: the quantum theory is pre-existing compared with its classical limit and not vice versa. Therefore, one should be able to address the classical limit without the need of imposing a given structure of the quantum model, i.e. that it is obtained as a suitable quantization of a classical one. It is precisely the latter point of view on which this paper is based. Indeed, in this paper quantization is considered as the study of the possible correspondence between a given classical theory, encoded by a Poisson algebra or a Poisson manifold possibly equipped with a (classical) Hamiltonian and flow, and a given quantum theory, mathematically expressed as a certain algebra of observables or a pure state space, possibly equipped with a time evolution and (quantum) Hamiltonian. This does not require at all that quantum theory is formulated in terms of classical structures, and quantization and the classical limit can therefore be seen as equivalent.

Probably the oldest  and best-known example of a pair of a given classical and given quantum theory is classical mechanics of a particle on $\mathbb{R}^n$ with phase space $\bR^{2n}=\{(q,p)\}$ and ensuing $C^*$-algebra of observables given by $\gA_0=C_0(\bR^{2n})$, i.e. the continuous (complex-valued) functions on $\bR^{2n}$ that vanish at infinity, under pointwise operations and supremum norm. Then, the corresponding quantum theory is quantum mechanics with pertinent $C^*$-algebra $\gA_\hbar$ ($\hbar> 0$) taken to be the compact operators $\gB_\infty(L^2(\bR^n))$ on the Hilbert space $L^2(\bR^n)$ for each non-zero $\hbar$. 

Another perhaps less trivial example, originating in the field of mean-field quantum spin systems, is the case for which the classical theory is encoded by the commutative $C^*$-algebra $C(S(M_k(\mathbb{C})))$, i.e. the continuous functions on the algebraic state space of the $(k\times k)$- matrices, containing observables\footnote{These exist under the name of {\em macroscopic} observables.} of an infinite quantum system which describe classical thermodynamics as a limit of mean-field quantum statistical mechanics. The case $k=2$ corresponds to the closed unit $3$-ball $S(M_2(\mathbb{C}))\cong B^3\subset\bR^3$  with  $C(B^3)$ the corresponding $C^*$-algebra. The associated quantum theory is given by the $N$-fold symmetric tensor product of the matrix algebra $M_2(\bC)$ with itself.

The final example we would like mention has been recently discovered \cite{DRVE}. It concerns, as opposed to the previous example, quantum spin systems encoded by local interactions, e.g. the quantum Heisenberg model. In this case, the classical system is defined by the commutative $C^*-$algebra made of equivalence classes generated by averages of local sequences, whilst the corresponding quantum theory of dimension $N$ is the $C^*-$algebra generated by such sequences of length $N$.

A mathematically correct framework that makes the correspondence between both (different) theories precise exists under the name {\em (strict) deformation quantization}, developed in the 1970s (Berezin \cite{Ber} and Bayen et al. \cite{BFFLS}), further elaborated by Rieffel \cite{Rie89,Rie94} and Landsman \cite{Lan98,Lan17}. In this approach the main idea is to ``quantize'' a given
classical (commutative) Poisson algebra into a  given quantum (non-commutative) $C^*$-algebra. In Landsman's approach \cite{Lan17} the starting point of a deformation quantization is often taken to be a {\em continuous bundle of $C^*$-algebras}, which turns out to be highly effective in the study of the classical limit \cite{LMV,MorVen2,Ven2020}.
Probably the most important ingredient in this framework is the notion of the quantization map, whose design can be traced back to Dirac's fundamental ideas on quantum theory and, in view of the previous discussion, it consists of a map $Q_\hbar: \gA_0\supset\tilde{\gA}_0\ni f\mapsto Q_\hbar(f)\in \gA_\hbar$, where $\gA_0 \ (\hbar=0)$ is a commutative $C^*-$algebra containing a dense $*$-Poisson subalgebra $\tilde{\gA}_0$ playing the role of the observable algebra encoding a classical theory, whilst $\gA_\hbar \ (\hbar\neq 0)$ is a non-commutative $C^*$-algebra characteristic for quantum theories.

\subsection{Strict deformation quantization}
To define a strict deformation quantization we take as starting point a continuous-bundle of $C^*-$algebras (see e.g. \cite[Def. IV.1.6.1]{OA2} for the definition). In essence, this is a triple $\mathcal{A}=(I,\gA,\{\pi_\hbar:\gA\to\gA_\hbar\}_{\hbar\in I})$ made by $C^*-$algebras $\gA$, $\gA_\hbar$ where $\hbar$ takes values in a locally compact Hausdorff space $I$, and surjective homomorphisms $\pi_\hbar:\gA\to\gA_\hbar$ satisfying certain continuity conditions. In this way, a continuous section of the bundle is an element $A\in\Pi_{\hbar\in I}\gA_\hbar$ for which  there exists $A'\in\gA$ fulfilling $A_\hbar=\pi_\hbar(A')$ for each $\hbar\in I$.

For the purpose of this paper we quantize a Poisson $*$-algebra $\tilde{\gA}_0$ densely contained in $C_0(X)$ with $X$ a  Poisson manifold. 

\begin{definition}[Def. 7.1 \cite{Lan17}]\label{def:deformationq}
A {\bf deformation quantization} of a Poisson manifold $(X, \{\cdot,\cdot\})$ consists of:
 \begin{itemize}
\item[(1)]  A {continuous  $C^*$-bundle} $\mathcal{A}:=(I, \gA, \{\pi_\hbar:\gA\to\gA_\hbar\}_{\hbar\in I})$,
 where $I$ is a subset of $\mathbb{R}$ containing $0$ as accumulation point and $\gA_0=C_0(X)$ 
equipped with norms $||\cdot||_{\hbar}$;
\item[(2)] a dense $*$-subalgebra $\tilde{\gA}_0$ of  $C_0(X)$ closed under the action of the Poisson brackets
(so that $(\tilde{\gA}_0, \{\cdot, \cdot\})$ is a complex Poisson algebra);
\item[(3)]  a  collection of   {\bf quantization maps} $\{Q_\hbar\}_{\hbar\in I}$, namely linear maps $Q_{\hbar}:\tilde{\gA}_0  \to \gA_{\hbar}$ 
(possibly defined on $\gA_0$ itself and next restricted to $\tilde{\gA}_0$)
such that: 
\begin{enumerate}
\item[(i)] $Q_0$ is the inclusion map $\tilde{\gA}_0 \hookrightarrow \gA_0$ (and $Q_{\hbar}(1\spa1_{\gA_0})=1\spa1_{\gA_{\hbar}}$
if  $\gA_0$ and $\gA_\hbar$ are unital for all $\hbar \in I$);
\item[(ii)] $Q_{\hbar}(\overline{f}) = Q_{\hbar}(f)^*$, where $\overline{f}(x):=\overline{f(x)}$;
\item[(iii)] for each $f\in\tilde{\gA}_0$, the assignments
$
0\mapsto f, \quad 
\hbar\mapsto Q_{\hbar}(f)$ when $\hbar \in I\setminus \{0\},
$
define a continuous section of $\mathcal{A}=(I, \gA, \{\pi_\hbar:\gA\to\gA_\hbar\}_{\hbar\in I})$, meaning that there exists an element $A^f\in\gA$ such that $\pi_\hbar(A^f)=Q_\hbar(f)$ for each $\hbar\in I$.
\item[(iv)]  each pair $f,g\in \tilde{\gA}_0$ satisfies the {\bf Dirac-Groenewold-Rieffel condition}:
\begin{align*}
\lim_{\hbar\to 0}\left|\left|\frac{i}{\hbar}[Q_{\hbar}(f),Q_{\hbar}(g)]-Q_{\hbar}(\{f,g\})\right|\right|_{\hbar}=0.
\end{align*}
\end{enumerate}
\end{itemize}
 If $Q_\hbar(\tilde{\gA}_0)$ is dense in $\gA_\hbar$ for every $\hbar \in I$, then the deformation quantization is called {\bf strict}.\footnote{It immediately follows from the definition of a continuous bundle of $C^*$-algebras that for any $f\in \tilde{\gA}_0$ the  continuity properties, called the {\bf Rieffel condition}, respectively the {\bf von Neumann condition}
\begin{align}\label{Rifc}
\lim_{\hbar\to 0}\|Q_{\hbar}(f)\|_\hbar=\|f\|_{\infty}; \ \ \ \ \lim_{\hbar\to 0}\|Q_{\hbar}(f)Q_{\hbar}(g)-Q_{\hbar}(fg)\|_\hbar=0
\end{align}
automatically hold.}
(If $Q_\hbar$ is defined on the whole $C_0(X)$, all conditions except (iv) are assumed to be valid on $C_0(X)$.)
\hfill $\blacksquare$
\end{definition}
Elements of $I$ are interpreted as possible values of Planck's constant $\hbar$ and $\gA_{\hbar}$ is the quantum algebra of observables of the theory at the given value of $\hbar\neq 0$. 

As mentioned in the first paragraph of the introduction, a standard example of a strict deformation quantization is induced by the sequence $\gA_0=C_0(\bR^{2n}),\ (\hbar=0)$ and $\gA_\hbar=\gB_\infty(\bR^{n}),\ (\hbar>0)$, with  $\gB_\infty(\bR^{n})$ the $C^*-$algebra of compact operators. Note that these algebras are non-unital.

\subsection{Coherent pure state quantization}

In the foregoing we have discussed the concept of quantization theory from the point of view of observables. In that setting we have seen that the quantization map played a crucial role, particularly defining a continuous cross-section of the given $C^*-$bundle. This is not the whole story as quantization maps may be studied in their own right even without the additional structure of a continuous bundle of $C^*-$ algebras.

To do so, we first recall the well-known fact that  the projective Hilbert space $\mathbb{P}\mathcal{H}$ associated to a Hilbert space $\mathcal{H}$ consisting of one-dimensional complex linear subspaces of $\mathcal{H}$ admits the structure of a symplectic manifold with symplectic form denoted by $\Omega_{\mathbb{P}\mathcal{H}}$ \cite[Ch. I]{Lan98}. Moreover, $\mathbb{P}\mathcal{H}$ is equipped with a transition probability 
\begin{align}\label{transitionprob}
    &p:\mathbb{P}\mathcal{H}\times \mathbb{P}\mathcal{H}\to [0,1];&
    &p(\psi,\xi)=|\langle\Psi,\Xi\rangle|^2,
\end{align}
where $\Psi$ and $\Xi$ are arbitrary lifts of $\psi$ and $\xi$ to unit vectors in $\mathcal{H}$. Therefore, the manifold $\mathbb{P}\mathcal{H}$ can be seen as the pure state space of a quantum system.
This can be understood from a algebraic point of view as well. Indeed, if one focuses on $\partial_e S(\gB_\infty(\mathcal{H}))$ i.e. the  extreme boundary of the state space of the algebra of compact operators on a Hilbert space, which in turn can be identified with the convex set of pure states on $\gB_\infty(\mathcal{H})$, it it not difficult to prove that the latter set is isomorphic  to the projective Hilbert space $\mathbb{P}\mathcal{H}$.\footnote{For the algebra of compact operators states $\omega$ bijectively correspond to density matrices $\rho$  via the map $\omega(\cdot)=Tr(\rho \ \cdot)$, i.e.  each state is {\em normal}. In particular, each pure state corresponds to a normal pure state on $\gB(\mathcal{H})$, which in turn identifies a one-dimensional projection  on $\mathcal{H}$.} In view of the previous discussion one may then put $\gA_\hbar=\gB_\infty(\mathcal{H})$.

On the classical side, the pure state space of a classical system is a symplectic manifold $(S,\Omega_S)$, supporting the Liouville measure $\mu_L$ on $S$. Such a classical pure state space may be seen as carrying the ``classical'' transition probability $p_0$, defined by $p_0(\rho,\sigma):=\delta_{\rho,\sigma}$. Again, regarding the previous discussion one may put $\gA_0=C_0(S)$.

This motivates the idea of quantization of a general symplectic manifold $S$ assumed to have finite dimension $2n<\infty$. We successively present the concept of a pure state quantization of a symplectic manifold, the notion the Berezin quantization map associated to a pure state quantization, and the concept of coherent states. 

\begin{definition}[II. Def. 1.3.3 \cite{Lan98}]\label{purestate}
Let $I_0\subset\mathbb{R}$ that has $0\notin I_0$ as an accumulation point and we write $I:=I_0\cup\{0\}$. A {\bf pure state quantization} of a symplectic manifold $(S, \Omega_S)$ consists of a collection of separable Hilbert spaces $\{\mathcal{H}_{\hbar}\}_{\hbar\in I_0}$ and a collection of smooth injections $\{q_\hbar : S \to \mathbb{P}\mathcal{H}_\hbar\}_{\hbar\in I_0}$ for which the following requirements are satisfied. 
 \begin{enumerate}
\item[(1)]  There exists a positive function $c: I_0\to\mathbb{R}\setminus\{0\}$ such that for all $\hbar\in I_0$ and all $\psi\in\mathbb{P}\mathcal{H}_\hbar$ one has
\begin{align}\label{residentity}
c(\hbar)\int_S d\mu_L(\sigma) p(q_\hbar(\sigma), \psi) = 1.
\end{align}
\item[(2)]  For all fixed $f\in C_c(S)$ and $\rho\in S$ the function
\begin{align}
\hbar\mapsto \int_Sd\mu_L(\sigma)p(q_{\hbar}(\rho), q_{\hbar}(\sigma))f(\sigma)
\end{align}
is continuous on $I_0$ and satisfies
\begin{align}\label{resolutionproperty}
\lim_{\hbar\to 0} c(\hbar)\int_Sd\mu_L(\sigma)p(q_{\hbar}(\rho), q_{\hbar}(\sigma))f(\sigma)=f(\rho).
\end{align}
\item[(3)] For each $\hbar\in I_0$ the map $q_{\hbar}$ is an approximate symplectomorphism, in that, (pointwise)
\begin{align}
\lim_{\hbar\to 0}q_{\hbar}^*\Omega_{\mathbb{P}\mathcal{H}_\hbar}=\Omega_S,
\end{align}
where $q_\hbar^*$ is the pull-back.
\end{enumerate}
In the above we have denoted the Liouville measure induced by the symplectic form $\Omega_S$  by $\mu_L$.
\hfill $\blacksquare$
\end{definition}
We would like to point out to the reader that condition (1) in Def. \ref{purestate} defines a so-called {\em resolution of the identity}. Furthermore, it is not difficult to show the satisfying result
\begin{align}
\lim_{\hbar \to 0}p(q_{\hbar}(\rho), q_{\hbar}(\sigma))=p_0(\rho,\sigma)=\delta_{\rho\sigma},
\end{align}
i.e. in quantizing pure states the quantum mechanical transition probabilities converge to the classical ones as $\hbar\to 0$.
In what follows we shall occasionally adopt the short-hand notation $\mu_{\hbar}:=c(\hbar)\mu_L$. 

A pure state quantization naturally leads to the quantization of observables by means of Berezin quantization maps.

\begin{definition}[II. Def. 1.3.4 \cite{Lan98}]\label{Berezinquan}
Let $\{\mathcal{H}_{\hbar}, q_{\hbar}\}_{\hbar\in I_0}$ be a pure state quantization of a symplectic manifold $(S,\Omega_S)$. The {\bf Berezin quantization} of a function $f\in L^{\infty}(S)$ is the family of operators $\{Q^B_{\hbar}
(f)\}_{\hbar\in I_0}$, where $Q^B_{\hbar}(f)\in \gB(\mathcal{H}_\hbar)$ is defined by polarizing
\begin{align}
 \psi(Q^B_{\hbar}(f)):=c(\hbar)\int_S d\mu_L(\sigma)f(\sigma)(q_{\hbar}(\sigma),\psi),
\end{align}
where $\psi\in\mathbb{P}\mathcal{H}_\hbar$.
Convergence of the integral is guaranteed because of \eqref{residentity}. 
\hfill $\blacksquare$
\end{definition}
In case that $f\in L^1(S,d\mu_\hbar)\cap L^{\infty}(S)$, the operator  $Q^B_{\hbar}(f)$ may be written as a Bochner integral
$$Q^B_{\hbar}(f)=c(\hbar)\int_S d\mu_L(\sigma)f(\sigma)[q_{\hbar}(\sigma)],$$
where $[q_{\hbar}(\sigma)]$ is the projection onto the one-dimensional subspace in $\mathcal{H}_\hbar$ whose image in $\mathbb{P}\mathcal{H}_\hbar$ is $q_{\hbar}(\sigma)$. 
 
One can show that the following properties automatically hold. 
\begin{theorem}[II. Thm. 1.3.5 \cite{Lan98}]\label{propertiesberezinmap}
Assume $f\in L^\infty(S,\bR)$.
\begin{itemize}
\item $Q_\hbar^B$ is positive, that is, $f\geq 0$ almost everywhere on $S$ implies $Q_\hbar^B(f)\geq 0$ in $\gB(\mathcal{H}_\hbar)$.
\item $Q_\hbar^B(f)$ is self-adjoint.
\item If $f\in L^1(S,\mu_\hbar)$ then $Q_\hbar^B(f)\in \gB_1(\mathcal{H}_\hbar)$, i.e. $Q_\hbar^B(f)$ is trace-class. Its trace is given by
\begin{align}\label{niceformula}
Tr(Q_\hbar^B(f))=c(\hbar)\int_Sd\mu_L(\sigma)f(\sigma).
\end{align}
\item The operator $Q_\hbar^B(f)$ is bounded by
\begin{align}
||Q_\hbar^B(f)||\leq ||f||_\infty.
\end{align}
\item If $f\in C_0(S)$, then $Q_\hbar^B(f)\in \gB_\infty(\mathcal{H}_\hbar)$,  (i.e $Q_\hbar^B(f)$ is compact) and $Q_\hbar^B:C_0(S)\to \gB_\infty(\mathcal{H}_\hbar)$ is continuous.
\end{itemize}
\end{theorem}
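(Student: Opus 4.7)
My plan is to treat the five items in an order that lets each build on the preceding ones, starting from the algebraic properties and working up to the functional-analytic ones.

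\textbf{Self-adjointness and positivity.} These are immediate consequences of the weak-integral definition of $Q_\hbar^B(f)$. For any $\xi\in\mathcal{H}_\hbar$, I write
\begin{equation*}
\langle\xi,Q_\hbar^B(f)\xi\rangle=c(\hbar)\int_S d\mu_L(\sigma)\,f(\sigma)\,\langle\xi,[q_\hbar(\sigma)]\xi\rangle,
\end{equation*}
and note that $\langle\xi,[q_\hbar(\sigma)]\xi\rangle=p(q_\hbar(\sigma),\xi)\,\|\xi\|^2\geq 0$. Positivity of the integrand for $f\geq 0$ gives $Q_\hbar^B(f)\geq 0$. Self-adjointness of each rank-one projection $[q_\hbar(\sigma)]$, combined with the reality of $f$, then gives $Q_\hbar^B(f)^*=Q_\hbar^B(f)$ upon taking adjoints under the weak integral.

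\textbf{Trace-class property and trace formula.} I first treat non-negative $f\in L^1(S,\mu_\hbar)$. Picking an orthonormal basis $\{e_n\}$ of $\mathcal{H}_\hbar$ and applying Tonelli (legal because every summand is $\geq 0$), I exchange the sum and integral:
\begin{equation*}
\sum_n\langle e_n,Q_\hbar^B(f)e_n\rangle = c(\hbar)\int_S d\mu_L(\sigma)\,f(\sigma)\sum_n\langle e_n,[q_\hbar(\sigma)]e_n\rangle= c(\hbar)\int_S d\mu_L(\sigma)\,f(\sigma),
\end{equation*}
using $\mathrm{Tr}([q_\hbar(\sigma)])=1$ for the rank-one projection. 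Finiteness of the right side yields \eqref{niceformula} for $f\geq 0$, and decomposing a general complex $f=(f_1^+-f_1^-)+i(f_2^+-f_2^-)$ into four non-negative $L^1$ pieces and using linearity extends the formula to $L^1(S,\mu_\hbar)$.

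\textbf{Norm bound.} Since $Q_\hbar^B(f)$ may not be self-adjoint for complex $f$, I use $\|T\|=\sup_{\|\xi\|=\|\eta\|=1}|\langle\eta,T\xi\rangle|$ together with a Cauchy--Schwarz argument that leverages the resolution of identity \eqref{residentity}. Explicitly, writing the matrix element of $[q_\hbar(\sigma)]$ as an inner product of two projected vectors and applying Cauchy--Schwarz pointwise,
\begin{equation*}
|\langle\eta,[q_\hbar(\sigma)]\xi\rangle|\leq \sqrt{p(q_\hbar(\sigma),\eta)}\,\sqrt{p(q_\hbar(\sigma),\xi)},
\end{equation*}
followed by a second Cauchy--Schwarz on the $\mu_L$-integral reduces the problem to two copies of \eqref{residentity}, each producing $1$. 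This gives $|\langle\eta,Q_\hbar^B(f)\xi\rangle|\leq\|f\|_\infty$ for unit $\eta,\xi$.

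\textbf{Compactness and continuity on $C_0(S)$.} For $f\in C_c(S)$, boundedness of $f$ and compactness of its support yield $f\in L^1(S,\mu_\hbar)$, so by the previous item $Q_\hbar^B(f)$ is trace class and hence compact. For general $f\in C_0(S)$ I use the standard density of $C_c(S)$ in $C_0(S)$ in sup norm to pick $f_n\in C_c(S)$ with $\|f-f_n\|_\infty\to 0$; the norm bound just proved then gives $\|Q_\hbar^B(f)-Q_\hbar^B(f_n)\|\to 0$, and since the compact operators form a norm-closed subspace of $\gB(\mathcal{H}_\hbar)$, the limit $Q_\hbar^B(f)$ is compact. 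Continuity of $Q_\hbar^B:C_0(S)\to\gB_\infty(\mathcal{H}_\hbar)$ is simply the same norm bound read as boundedness of a linear map. The only mildly delicate step is the trace identity, where one must be careful to justify the interchange of sum and integral via Tonelli before extending to signed and complex $f$ by linearity.
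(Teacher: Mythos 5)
Your proof is correct. The paper states this theorem as a quotation of Theorem II.1.3.5 of \cite{Lan98} and supplies no proof of its own; your argument --- the quadratic form $\langle\xi,[q_\hbar(\sigma)]\xi\rangle\geq 0$ for positivity and self-adjointness, Tonelli together with $Tr([q_\hbar(\sigma)])=1$ for the trace formula, Cauchy--Schwarz played against two copies of the resolution of identity \eqref{residentity} for the bound $\|Q_\hbar^B(f)\|\leq\|f\|_\infty$, and density of $C_c(S)$ in $C_0(S)$ plus norm-closedness of the compacts for the last item --- is exactly the standard one.
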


For a given pure state quantization, one may introduce the additional notion of {\em coherent states}. 
\begin{definition}[Def.1.5.1 \cite{Lan98}]\label{coherentsts}
A pure state quantization $\{\mathcal{H}_{\hbar},q_{\hbar}\}_{\hbar\in I_0}$ of $(S,\Omega_S)$ is said to be
{\bf coherent} if each $q_{\hbar}(\sigma)\in \mathbb{P}\mathcal{H}_{\hbar}$ can be lifted to a unit vector $\Psi_{\hbar}^{\sigma}\in\mathcal{H}_{\hbar}$, and the
ensuing map $\sigma\mapsto\Psi_{\hbar}^{\sigma}$ from $S$ to $\mathcal{H}_{\hbar}$ is continuous. The unit vectors  $\Psi_{\hbar}^{\sigma}$ coming
from a coherent pure state quantization are called {\bf coherent states}.
\hfill $\blacksquare$
\end{definition}

Fortunately, in several cases of physical interest the Berezin quantization associated to a coherent pure state quantization of a symplectic manifold also satisfies  the conditions of a strict deformation quantization according to Definition \ref{def:deformationq}.\footnote{For general Poisson manifolds this does not hold since there may not even  exist an additional symplectic structure.} We refer to Appendix \ref{pssm} for two important examples for which this is indeed the case.

\subsection{Classical limit}\label{par:claslim}

A (strict) deformation quantization of a Poisson manifold $X$ naturally leads to the quantization of classical observables. In this fashion, the aforementioned quantization map $Q_\hbar$ which associates (self-adjoint) quantum operators $a\in\gA_\hbar$ to classical observables $f\in\tilde{\gA}_0\subset\gA_0$ plays a crucial role in the study of the  classical limit. In algebraic quantum theory this limit is made rigorous  by means of considering a sequence of {\em algebraic states}\footnote{Positive linear functionals of norm one.} $\omega_\hbar:\gA_\hbar\to \gA_0$, that, depending on the physical situation, may depend on $\hbar$.\footnote{
We stress that $\hbar$ has several interpretations depending on the physical system one considers, e.g. Schr\"{o}dinger operators for which $\hbar$ occurs as Planck's constant ($\S$\ref{SCHrop1}), or quantum spin systems where $\hbar$ plays the role of $1/N$, with $N$ denoting the number of lattice sites (Section \ref{MFQSS}), or the spin quantum number ($\S$\ref{spinnumbernew}), and so on.  Letting $\hbar\to 0$ (provided this limit is taken correctly) should be understood as a way to generate a classical theory, formalized by \eqref{claslim1}.} More precisely, given a sequence of quantization maps $Q_\hbar: \tilde{\gA}_0 \ni f \mapsto Q_\hbar(f) \in \gA_\hbar$, we  say that a sequence of  states $\omega_\hbar:\gA_\hbar\to\mathbb{C}$ is said to be have a {\bf classical limit} if the following limit exists\footnote{We emphasize that this notion of convergence is stronger than the usual approach based on weak-$*$ compactness and the study of converging subsequences. Depending on the situation one can of course weaken the notion of classical limit defined in \eqref{claslim1}.} and defines a state $\omega_0$ on $\tilde{\gA}_0$,
\begin{align}\label{claslim1}
\lim_{\hbar\to 0}\omega_\hbar(Q_\hbar(f))=\omega_0(f), \ (f\in \tilde{\gA}_0),
\end{align}
where $Q_\hbar$ is the quantization map associated with the given (strict  deformation or pure state) quantization. By construction, this approach provides a rigorous meaning of the convergence of algebraic quantum states $\omega_\hbar$  to classical states $\omega_0$ on the commutative algebra  $\tilde{\gA}_0$, when $\hbar \to 0$. The idea behind this formalism is that one is allowed to conveniently exploit the properties of the quantization maps highly suited for studying the semi-classical behavior of various quantum systems \cite{LMV,MorVen2,Ven2020,Ven2021}.  

A special case of interest are the  algebraic quantum (vector) states $\omega_\hbar(\cdot):=\langle \psi_\hbar, (\cdot) \psi_\hbar \rangle$ induced by some normalized unit vector $\psi_\hbar$ forming a sequence in  a Hilbert space on which the observables  $Q_\hbar(f)$ act. The subscript $\hbar$ indicates that the unit vectors might depend on $\hbar$, which is for example the case when $\psi_\hbar$ corresponds to eigenvectors of a $\hbar$-dependent Schr\"{o}dinger operator $H_\hbar$, or in case of spin systems, to eigenvectors $\psi_N$ of a sequence of quantum spin Hamiltonians $H_N$. The main advantage of this $C^*-$algebraic approach is that it typically circumvents convergence problems in Hilbert space. Indeed, the relevant eigenvectors of such operators generally have no limit in the ensuing Hilbert space. These issues have been presented from a technical perspective in \cite{LMV,MorVen2,MV,Ven2020,Ven2021}, where in particular a complete interpretation and rigorous notion of the classical limit of quantum systems have been presented. Furthermore, by taking $\psi_\hbar$ to be a ground state eigenvector of a sequence of quantum Hamiltonians $H_\hbar$, this algebraic approach has shown its efficiency in the study of spontaneous symmetry breaking (SSB) showing up as {\em emergent} phenomenon in the classical limit at zero temperature.

In this paper we extend these ideas to $\beta$-KMS states, initially used to describe quantum states which are in thermal equilibrium at a given (inverse) temperature $\beta$ \cite{Kubo,MarSchw}. Their connection with $C^*$ -dynamical systems has been extensively studied through the years, and $\beta$-KMS states have turned out to be extremely useful in operator algebras as well \cite{BR1,BR2}.  Besides the fact that these states are used in quantum mechanics, also in classical mechanics they have shown major importance. The classical analog of the KMS condition has been introduced in \cite{Katz} and is often studied in the context of infinite classical systems in the continuum. In turns out that the classical KMS condition can be formulated within the context of Poisson and symplectic geometry and naturally leads to the concept of phase transitions \cite{DW}. Given a Poisson manifold $(X,\{\cdot,\cdot\})$ and a vector field $Y$, the set of $\beta$-KMS states $(X,\{\cdot,\cdot\},Y,\beta)$ is a convex set which by construction depends on the choice of $Y$ and $\beta$ \cite{DW}. In this setting, a (classical) phase transition would occur  whenever different choices for $Y,\beta $ produce non-isomorphic convex sets.  In this paper we will not go into these details. We instead first focus on symplectic manifolds of finite dimension and consider a special class of classical $\beta$-KMS functionals, namely Gibbs functionals.\footnote{We stress that a classical $\beta$-KMS functional is not necessary normalizable and therefore does not always define a state.} In some cases these uniquely define the $\beta$-KMS functional for a given dynamics and actually define a state. We examine several cases in quantum as well as classical theories, and finally prove that the classical limit (viz. equation \eqref{claslim1}) for Gibbs states induced by certain quantum Hamiltonians depending on a semi-classical parameter, exists as a  probability measure satisfying the classical (or static) KMS condition. 
Secondly, we extend these ideas to mean-field quantum spin systems in the limit of large particles. We hereto first prove the existence of the limit of the mean-field free energy. This result is not completely ``new'', i.e. similar results are described in \cite{GRV} and the citations herein. However, in such works  the underlying correct algebraic structure of strict deformation quantization establishing the link between classical and quantum theory is not highlighted at all.
Subsequently, we provide a condition which ensures the existence of the classical limit of the Gibbs state. Finally, these ideas are adapted to a study concerning the free energy of a single quantum spin system in the large spin limit.

The paper is structured as follows. In Section \ref{dynamisandKMS} we first introduce the general notion of a $C^*$-dynamical system in quantum and classical theories followed by the definition of Gibbs states in quantum and classical theories (we particularly refer to Prop. \ref{quantum Gibbs}, Def. \ref{def:claskms} and Prop. \ref{classical Gibbs}). Consequently, in Section \ref{claslimsymplecticmanifolds} we use the aforementioned concepts in order to prove a result regarding the convergence of the corresponding quantum Gibbs states to the classical ones (cf. Proposition \ref{generalcase1}). This is illustrated with a class of Schr\"{o}dinger operators (see $\S$\ref{SCHrop1}). In Section \ref{MFQSS} mean-field theories are introduced and emphasized with a well known example.
In Section \ref{The free energy in the regime of large particles} the mean-field free energy associated with mean-field quantum spin systems in the limit of large particles (or lattice sites) is studied. In $\S$\ref{The classical of Gibbs states induced by mean field theories}
the classical limit of the ensuing Gibbs states is discussed, and in $\S$\ref{symmtric case} a special case involving symmetry is addressed. In $\S$\ref{spinnumbernew} a similar study is conducted in the limit of large spin quantum numbers.
Finally, in the appendices useful definitions and technical concepts, particularly adapted to the manifolds $\bR^{2n}$ and $S^2$ necessary for the purpose of this paper are provided.

\section{Dynamics and KMS states in quantum and classical mechanics}\label{dynamisandKMS}
We introduce the general notion of time evolution in algebraic quantum and classical theory followed by some results on quantum and classical KMS states. Let us first recall the definition of a  $C^*$- dynamical system.

\begin{definition}\label{dynamicalsystem}
A $C^*$-{\bf dynamical system} $(\gA, \alpha)$ is a  $C^*$-algebra $\gA$ equipped with a {\bf dynamical evolution}, i.e.,   a one-parameter group of $C^*$-algebra automorphisms $\alpha:= \{\alpha_t\}_{t \in \bR}$ that is {\bf strongly continuous}  on $\gA$:   the map $\bR \ni t \mapsto \alpha_t(A) \in \gA$ is continuous for every $A\in \gA$.
\hfill$\blacksquare$
\end{definition}

In this algebraic context the above definition equally applies to commutative as well as non-commutative $C^*$-algebras.  Let us start on the quantum (non-commutative) side.
\newline
\newline
{\bf Quantum side}\\
Let $t\mapsto U_t$ by a strongly continuous one-parameter group of unitaries acting on a Hilbert space $\mathcal{H}$. As a result of Stone's Theorem, there exists a a self-adjoint operator $H$ densely defined on $\mathcal{H}$ such that $U_t=e^{itH}$, for all $t\in\bR$.  This furthermore induces a one-parameter subgroup of automorphism on $\gB(\mathcal{H})$, namely $\alpha_t(a)=U_t(a)U_t^*$.\footnote{At least on the algebra $\gB(\mathcal{H})$ this construction is invertible, meaning that each strongly continuous one-parameter group of automorphisms $\alpha_t$ on $\gB(\mathcal{H})$ can be obtained in this way.} However, even if the one-parameter group of unitaries on a Hilbert space $\mathcal{H}$ is strongly continuous, it is not necessarily true that the induced one-parameter group of automorphisms $\alpha_t$ on $\gB(\mathcal{H})$ is strongly continuous as well \cite{BR1}, which might happen in the case of unbounded generators. Nonetheless, this condition is satisfied when dealing with the observable algebra of compact operators, even if the self-adjoint generator of $U$ is unbounded \cite[Prop. 6.2]{MorVen2}. 
\\\\
Given a $C^*$-dynamical system $(\gA, \alpha)$ describing a quantum theory, a natural question is to ask how to characterize thermal equilibrium states. The solution relies on a definition introduced by Hugenholtz, Haag and Winnink \cite{HHW} giving a characterization of states satisfying the so-called Kubo Martin Schwinger (KMS) conditions,  firstly studied by Kubo, Martin and Schwinger \cite{Kubo,MarSchw}. In their honour they are therefore  called KMS states.   For sake of completeness let us recall the definition of a KMS state at a given inverse temperature $\beta$, also denoted by KMS$_\beta$-state.

\begin{definition}\label{KMS1}
Consider a $C^*$-dynamical system given by a $C^*$-algebra $\gA$ and a strongly continuous representation $\alpha_t$ of $\bR$ in the automorphism group of $\gA$.
A linear functional $\omega: \gA \to \bC$ is called a {\bf $\boldsymbol{\beta}$-KMS-state} if the following requirements are satisfied:
\begin{itemize}
\item[(1)] $\omega$ is positive, i.e. $\omega(A^*A)\geq 0$ for all $A\in \gA$;
\item[(2)] $\omega$ is normalized, i.e. $\|\omega\|:=\sup \{|\omega(A)| \ : |  \|A\|=1\}=1$;
\item[(3)] $\omega$ satisfies the KMS$_\beta$-condition: 
for all $A,B\in \gA$ there is a holomorphic function $F_{AB}$ on the strip $S_\beta := \bR \times i(0, \beta) \subset \bC$ with a  continuous extension to $\overline{S_\beta}$ such that
$$F_{AB}(t) = \omega(A \alpha_t (B)) \qquad \text{ and } \qquad F_{AB}(t + i\beta) = \omega(\alpha_t (B)A)\,.$$
\end{itemize}
\hfill$\blacksquare$
\end{definition} 
For the purpose of this paper we are interested in an important class of KMS states, namely {\em Gibbs states}.
The reason is that it is not at all clear what generic KMS states look like; this heavily depends on the choice of observable algebra and the given dynamics, which may not assume the standard Heisenberg form. Furthermore, it may happen that the partition function (and hence Gibbs state) is not well defined due to singularities of the pertinent Hamiltonian. The latter is circumvented by imposing additional regularity conditions (cf. Prop. \ref{quantum Gibbs}).

If $\mathcal{H}$ is a Hilbert space and $H$ a self-adjoint linear operator on $\mathcal{H}$, dubbed Hamiltonian, the Gibbs equilibrium state at inverse temperature $\beta$ is defined as a state over $\gB(\mathcal{H})$, i.e. the algebra of bounded operators over $\mathcal{H}$, by \cite{BR2}
\begin{align}\label{quantumgibbsdef}
\omega^\beta(A)=\frac{Tr[e^{-\beta H} A]}{Tr[e^{-\beta H}]},
\end{align}
provided $e^{-\beta H}$ is trace-class. As mentioned above, Gibbs states in general are not the only KMS states at inverse temperature. Nonetheless, the following proposition shows that if the algebra of observables is not too large and the Hamiltonian of physical interest is sufficiently regular, all KMS states  at fixed $\beta$ are uniquely defined in terms of the Gibbs state.

\begin{proposition}[]\label{quantum Gibbs}
For a given Hilbert space $\mathcal{H}$, let $\gA=\gB_\infty(\mathcal{H})$ be the $C^*-$algebra of compact operators over $\mathcal{H}$. Given a self-adjoint Hamiltonian $H$ inducing a (necessarily strongly continuous)  one-parameter group of automorphisms $\alpha^H$ on $\gA$, i.e. $\alpha_t^H(A)=e^{itH}Ae^{-itH}, \ (A\in \gB_\infty(\mathcal{H}))$.  Assume that for any $0<\beta<\infty$ the operator $e^{-\beta H}$ is trace-class.  Then, the following  functional defines the unique $\beta$-KMS state 
on $\gA$ for the  one-parameter group $\alpha^H$,
\begin{align}\label{KMSq}
\omega^{\beta}(\cdot):=Z^{-1}Tr[e^{-\beta H}\cdot],
\end{align}
where $Z$ denotes the partition function associated to $H$. 
\hfill$\blacksquare$
\end{proposition}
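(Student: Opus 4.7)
The plan is to verify that $\omega^\beta$ defined in \eqref{KMSq} is a $\beta$-KMS state and then to argue uniqueness on $\gB_\infty(\mathcal{H})$. Positivity is immediate, since for every $A\in\gB_\infty(\mathcal{H})$ one has $Z\,\omega^\beta(A^*A)=Tr\bigl[(Ae^{-\beta H/2})^*(Ae^{-\beta H/2})\bigr]\geq 0$; normalization is built into the choice $Z=Tr[e^{-\beta H}]$. Strong continuity of $\alpha^H$ on the compacts is the content of the result cited in the paragraph just preceding the statement.

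For the KMS$_\beta$ condition I would exploit that the trace-class hypothesis forces $H$ to have purely discrete spectrum $\{\lambda_n\}$ with $\lambda_n\to+\infty$, and that $e^{-\alpha H}$ is trace-class for every $\alpha>0$. Fixing an orthonormal eigenbasis $\{e_n\}$ of $H$ and writing $A_{nm}=\langle e_n,Ae_m\rangle$, $B_{nm}=\langle e_n,Be_m\rangle$, I would propose
\begin{equation*}
F_{A,B}(z):=Z^{-1}\sum_{n,m}e^{-\beta\lambda_n}\,e^{iz(\lambda_m-\lambda_n)}\,A_{nm}\,B_{mn},\qquad z\in\overline{S_\beta}.
\end{equation*}
A direct spectral calculation yields $F_{A,B}(t)=\omega^\beta(A\alpha_t(B))$, while the algebraic identity $e^{-\beta\lambda_n}e^{-\beta(\lambda_m-\lambda_n)}=e^{-\beta\lambda_m}$ combined with cyclicity of the trace gives $F_{A,B}(t+i\beta)=\omega^\beta(\alpha_t(B)A)$. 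What remains is to show that the series converges absolutely and uniformly on compact subsets of $\overline{S_\beta}$, is holomorphic in its interior, and continuous up to the boundary. I would deduce this from the identity $|e^{-\beta\lambda_n}e^{iz(\lambda_m-\lambda_n)}|=e^{-(\beta-s)\lambda_n-s\lambda_m}$ for $z=t+is\in\overline{S_\beta}$, together with a Cauchy--Schwarz estimate and the fact that $e^{-\alpha H/2}$ is Hilbert--Schmidt for every $\alpha>0$.

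Uniqueness follows from the standard fact that every state on the algebra of compact operators is of the form $\omega(\cdot)=Tr[\rho\,\cdot]$ for a density matrix $\rho$. Testing the KMS condition against the rank-one operators $|e_j\rangle\langle e_k|$ built from the eigenbasis of $H$ (which are entire analytic for $\alpha^H$, since each evolves by the scalar factor $e^{it(\lambda_j-\lambda_k)}$) yields the scalar identities $\rho_{nm}=0$ whenever $\lambda_n\neq\lambda_m$, and $\rho_{nn}\,e^{\beta\lambda_n}=\rho_{mm}\,e^{\beta\lambda_m}$ for all $n,m$. Hence $\rho$ is diagonal in the eigenbasis of $H$ with weights proportional to $e^{-\beta\lambda_n}$, and $Tr[\rho]=1$ then forces $\rho=e^{-\beta H}/Z$.

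The main obstacle I foresee is the uniform convergence and analyticity argument near the two boundary lines of the strip, where the exponential damping $e^{-(\beta-s)\lambda_n-s\lambda_m}$ degenerates on one of the two indices. Handling this requires distributing the factors $e^{-\alpha H}$ carefully between the $A$- and $B$-slots via cyclicity of the trace, so that both resulting Hilbert--Schmidt factors stay finite throughout the closed strip; this is precisely where the full strength of the hypothesis ``$e^{-\alpha H}$ trace-class for every $\alpha>0$'' is used.
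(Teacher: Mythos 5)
The paper itself offers no proof of this proposition (it is invoked as a standard fact, essentially \cite[Example 5.3.31]{BR2}), so there is nothing to compare against line by line; judged on its own, your argument is the correct textbook route and the outline is sound. Two points deserve tightening. First, the convergence issue you flag at the boundary of the strip is real but resolves cleanly: after shifting $H$ by a constant so that all eigenvalues satisfy $\lambda_n\geq 0$ (harmless, since it rescales numerator and $Z$ identically and leaves $\alpha^H$ unchanged), for $z=t+is$ with $0\leq s\leq\beta$ one has either $\beta-s\geq\beta/2$ or $s\geq\beta/2$, whence
\begin{equation*}
e^{-(\beta-s)\lambda_n-s\lambda_m}\;\leq\;e^{-\frac{\beta}{2}\lambda_n}+e^{-\frac{\beta}{2}\lambda_m},
\end{equation*}
and the single-index Cauchy--Schwarz bounds $\sum_m|A_{nm}||B_{mn}|\leq\|A\|\,\|B\|$ and $\sum_n|A_{nm}||B_{mn}|\leq\|A\|\,\|B\|$ give the uniform majorant $2\,\mathrm{Tr}[e^{-\beta H/2}]\,\|A\|\,\|B\|$ on all of $\overline{S_\beta}$; this is where the hypothesis that $e^{-\alpha H}$ is trace-class for \emph{every} $\alpha>0$ enters, exactly as you anticipated. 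Second, in the uniqueness step your stated intermediate conclusion ``$\rho_{nm}=0$ whenever $\lambda_n\neq\lambda_m$'' is too weak when $H$ has degenerate eigenvalues: together with $\rho_{nn}\propto e^{-\beta\lambda_n}$ it still permits off-diagonal entries inside a degenerate eigenspace. Fortunately the very test you propose delivers more: taking $A=|e_c\rangle\langle e_a|$ and $B=|e_a\rangle\langle e_b|$ with $b\neq c$, the KMS identity $e^{-\beta(\lambda_a-\lambda_b)}\omega(AB)=\omega(BA)$ reads $e^{-\beta(\lambda_a-\lambda_b)}\rho_{bc}=0$, so \emph{all} off-diagonal entries in the chosen eigenbasis vanish and $\rho=e^{-\beta H}/Z$ follows. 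You should also record the small lemma implicitly used there, namely that a function holomorphic on the open strip and continuous on its closure is determined by its values on $\bR$ (Schwarz reflection plus the identity theorem), so that the evident continuation $e^{iz(\lambda_a-\lambda_b)}\omega(AB)$ really is the $F_{AB}$ of Definition \ref{KMS1} and may legitimately be evaluated at $t+i\beta$.
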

\noindent
This result follows from fact that each state on $\gB_\infty(\mathcal{H})$ is normal, i.e. it arises from a density matrix via the trace. See \cite[Ex. 5.3.31]{BR2} for the details.
\\\\
To move on our discussion we now focus on classical theories encoded by commutative $C^*$-algebras of the form $\gA_0:= C_0(X)$.
\newline
\newline
{\bf Classical side}\\
A $C^*$-dynamical system can also be constructed for the commutative $C^*$-algebra $\gA_0:= C_0(X)$ (with $X$ a locally compact Hausdorff space)  equipped with the $C^*$-norm $||\cdot||_\infty$. In particular, if $X$ is a symplectic or a Poisson manifold one can consider the associated Poisson subalgebra $(C^\infty(X), \{\cdot, \cdot\})$ of $\gA$ with Poisson structure denoted by $\{\cdot, \cdot\}$. A $C^*$-dynamical system structure is guaranteed in the case that the dynamical evolution is induced by the pullback action of a complete Hamiltonian flow  $\phi^{h}$, generated by a Hamiltonian function $h\in C^\infty(X)$, i.e.,  $\alpha^{h}_t(f) := f \circ \phi_t^{h}$ for every $f\in C_0(X)$ and $t\in \bR$.   It is easy to show  that $(C_0(X), \alpha^{h})$ is a $C^*$-dynamical system (in particular $\alpha^{h}$ leaves $C_0(X)$ invariant and is strongly continuous). In such cases states $\omega$ correspond to regular Borel probability measures $\mu_\omega$ over $X$, or more generally, to positive measures on $X$. The pure states in turn correspond to  Dirac measures $\delta_\sigma \ (\sigma \in X)$.
\newline
\newline
Analog to the quantum case one may wish to obtain a ``classical'' KMS condition characterizing thermal equilibrium states on a commutative $C^*$-algebra $\gA_0$. To this avail, one typically considers a dense $*$-Poisson subalgebra $\tilde{\gA}_0\subset\gA_0$. In the case that 
$C_c^\infty(X)\subset\gA_0$, the algebra of compactly supported smooth functions on $X$, where $X$ is a Poisson (or symplectic) manifold, such a classical KMS condition can indeed be formalized (see \cite{Katz} and also \cite{AIZ,GALL}). For sake of completeness the definition is given below.

\begin{definition}\label{def:claskms}
 Given a Poisson manifold $X$ together with a vector field $Y\in\Gamma(TS)$, a linear, positive functional $\varphi:C_c^\infty(X)\to \mathbb{C}$ is called a {\bf classical $(\mathbf{Y},\boldsymbol{\beta})$-KMS functional for $\beta>0$} if 
\begin{align}\label{conditionclasKMS}
\varphi(\{f, g\})=\beta\varphi (gY(f)), \ \  \forall f, g\in C_c^\infty(X).
\end{align}
\hfill$\blacksquare$
\end{definition}
\noindent
This condition is also called the static classical KMS condition, and can be extended to a dynamical classical KMS condition in the case where $Y$ has a complete flow \cite{BRW}. For symplectic manifolds it is relatively easy to classify the classical KMS states. Similar to the quantum case (cf. Prop. \ref{quantum Gibbs}), we have the following result.
\begin{proposition}\label{classical Gibbs}
Given a finite-dimensional connected symplectic manifold $(S,\Omega_S)$ together with a Hamiltonian vector field $Y^h\in\Gamma(TS)$, where $h\in C^\infty(S)$ denotes the Hamiltonian function. Assume $e^{-\beta h}\in L^1(S)$. Then, the following positive linear functional is (up to a constant) the unique $(Y^h,\beta)$-KMS functional  for $\beta>0$ on any Poisson $*$-subalgebra $\tilde{\gA}_0$ of $C_0(S)$ containing $C_c^\infty(S)$:
\begin{align}\label{KMSc}
\varphi^\beta(f):=\int_Se^{-\beta h}fd\mu_0, \ (f\in C_0(S));
\end{align}
where $\mu_0$ denotes the Liouville measure on $S$. 
\end{proposition}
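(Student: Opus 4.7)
The plan is to treat existence and uniqueness separately, with both arguments hinging on the classical Liouville theorem: the Liouville measure $\mu_0$ is invariant under every Hamiltonian flow, which after Cartan's magic formula and Stokes' theorem is equivalent to the identity $\int_S\{F,G\}\,d\mu_0=0$ for all $F,G\in C_c^\infty(S)$, as long as at least one of the two functions has compact support. Throughout I will adopt the sign convention $Y^h(f)=\{f,h\}$ (equivalently $\iota_{Y^h}\Omega_S=dh$), so that the classical KMS condition \eqref{conditionclasKMS} reads $\varphi(\{f,g\})=\beta\varphi(\{f,h\}g)$.

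For existence, I would substitute $G:=g\,e^{-\beta h}$ into the Liouville identity above; note $G\in C_c^\infty(S)$ since $g\in C_c^\infty(S)$ and $e^{-\beta h}$ is smooth. Expanding via the Leibniz rule for the Poisson bracket gives
\[
\{f,\,g e^{-\beta h}\}=e^{-\beta h}\{f,g\}-\beta\,e^{-\beta h}g\{f,h\},
\]
and integrating both sides against $\mu_0$ yields directly $\varphi^\beta(\{f,g\})=\beta\,\varphi^\beta(Y^h(f)\,g)$. Positivity of $\varphi^\beta$ is inherited from $e^{-\beta h}>0$, and the assumption $e^{-\beta h}\in L^1(S)$ guarantees the functional extends continuously to $C_0(S)$.

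For uniqueness, let $\psi$ be any $(Y^h,\beta)$-KMS functional and define the auxiliary positive linear functional $\tilde\psi(f):=\psi(e^{\beta h}f)$ on $C_c^\infty(S)$; this is well defined because $e^{\beta h}f\in C_c^\infty(S)$. Using the Leibniz identity
\[
e^{\beta h}\{f,g\}=\{f,\,g e^{\beta h}\}-\beta\,g e^{\beta h}\{f,h\},
\]
applying $\psi$, and invoking the KMS condition for $\psi$ on the first term (which is legal since $g e^{\beta h}\in C_c^\infty(S)$), the two pieces cancel and I obtain $\tilde\psi(\{f,g\})=0$ for every $f,g\in C_c^\infty(S)$. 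By Riesz--Markov, $\tilde\psi$ extends to a positive Radon measure $\tilde\mu$ on $S$, and the vanishing on Poisson brackets translates into invariance of $\tilde\mu$ under every Hamiltonian flow. Working in a Darboux chart $(q^i,p_i)$, the Hamiltonians $h=p_i$ and $h=-q^i$ generate the coordinate translations $\partial_{q^i}$ and $\partial_{p_i}$, so $\tilde\mu$ is translation-invariant and hence a positive constant multiple of Lebesgue measure, which coincides locally with $\mu_0$. Connectedness of $S$ forces this constant to be globally the same, giving $\tilde\mu=c\,\mu_0$. Undoing the substitution yields $\psi(g)=c\,\varphi^\beta(g)$ first on $C_c^\infty(S)$ and then, by density of $C_c^\infty(S)$ in $C_0(S)$ combined with positivity of both functionals, on all of $\tilde{\gA}_0$.

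The main obstacle I anticipate is the final step of the uniqueness argument, namely upgrading the local translation invariance in Darboux charts to a globally constant density with respect to $\mu_0$. The gluing requires a standard but careful connectedness argument applied to the set of points admitting a given proportionality constant between $\tilde\mu$ and $\mu_0$. A secondary, more routine technical point is verifying that $\tilde\psi$ genuinely extends to a Radon measure; this rests on the positivity of $\psi$ together with the positivity of $e^{\beta h}$, and should be spelled out to ensure the Riesz--Markov representation applies.
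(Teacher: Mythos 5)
Your proof is correct, but it takes a genuinely more self-contained route than the paper. The paper's proof delegates the heart of the matter --- that $f\mapsto\int_S e^{-\beta h}f\,d\mu_0$ is, up to a constant, the unique $(Y^h,\beta)$-KMS functional on $C_c^\infty(S)$ --- to the references \cite[Theorem 4.1]{BRW} and \cite[Remark 4]{DW}, and then only supplies the extension to $C_0(S)$ by dominated convergence and the restriction to $\tilde{\gA}_0$. You instead re-derive that core result: existence via Liouville invariance ($\int_S\{f,G\}\,d\mu_0=0$) applied to $G=g e^{-\beta h}$ together with the Leibniz rule, and uniqueness via the conjugation $\tilde\psi(f):=\psi(e^{\beta h}f)$, which converts the KMS identity into the statement that $\tilde\psi$ annihilates all Poisson brackets, hence (by Riesz--Markov, local Darboux coordinates and connectedness) is a constant multiple of the Liouville measure. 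This is essentially the argument of the cited references, so your proposal buys transparency and independence from the literature at the cost of having to carry out the two technical points you already flag (the order-zero/Radon-measure extension of $\tilde\psi$ and the chart-gluing). One shared soft spot: both you and the paper pass from uniqueness on $C_c^\infty(S)$ to uniqueness on all of $\tilde{\gA}_0$ by a "density plus positivity" appeal; since a positive linear functional on a dense $*$-subalgebra of $C_0(S)$ need not be norm-bounded a priori, this step deserves an explicit continuity or domination argument, but as the paper glosses it in exactly the same way I would not count it against you.
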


\begin{proof}
We first consider the vector space $C_c^\infty(S)$. As a result of \cite[Theorem 4.1]{BRW} or \cite[Remark 4]{DW}, the assignment $C_c^\infty(S)\ni f\mapsto \int_Se^{-\beta h}fd\mu_0$ defines the unique classical $(Y^h,\beta)$-KMS functional  for $\beta>0$, where $Y^h$ denotes the Hamiltonian vector field induced by $h$. Let us denote this functional by $\tilde{\varphi}^\beta$.  Furthermore, as $C^\infty_c(S)\subset C_0(S)$ is dense in the uniform topology, given $f\in C_0(S)$, we can find a sequence $(f_n)_n\subset C^\infty_c(S)$ such that $f_n\to f$ in the uniform norm. Applying the functional $\tilde{\varphi}^\beta$ to this sequence yields
\begin{align}
\tilde{\varphi}^\beta(f_n)=\int e^{-\beta h}f_nd\mu_0.
\end{align}
By the Lebesgue Dominated Convergence Theorem, we conclude
\begin{align}
\varphi^\beta(f):=\lim_{n\to\infty}\tilde{\varphi}^\beta(f_n)=\int e^{-\beta h}fd\mu_0.
\end{align}
Hence the functional $\tilde{\varphi}^\beta$ extends to $C_0(S)$, and this extension is unique by construction. This functional does not satisfy the classical KMS condition \eqref{conditionclasKMS} on all of $C_0(S)$, as $C_0(S)$ has no differentiable structure. Nonetheless, viewing $(S,\Omega_S)$ as a Poisson manifold we can always restrict $\varphi^\beta$ to any Poisson-subalgebra $\tilde{\gA}_0$ of $C_0(S)$ containing $C_c^\infty(S)$. To show that the ensuing functional on $\tilde{\gA}_0$ is the unique $(Y^h,\beta)$-KMS functional  for $\beta>0$, a similar density argument can be applied exploiting the fact that the functional $\tilde{\varphi}^\beta$ on $C^\infty_c(S)$ is the unique $(Y^h,\beta)$-KMS functional  for $\beta>0$.
\end{proof}
In the proof of Proposition \ref{classical Gibbs} we have uniquely extended the positive linear functional $\tilde{\varphi}^\beta$ initially defined on $C_c^\infty(S)$ to all of $C_0(S)$, which we in turn denoted by $\varphi^\beta$. Scaling this functional by $1/c$ where $c:=\int_Se^{-\beta h}d\mu_0$ (this is a finite number due to the fact that $e^{-\beta h}\in L^1(S)$) and using an approximate identity for the $C^*$-algebra $C_0(S)$ one can show that $||\varphi^{\beta\prime}||=1$, where we defined $\varphi^{\beta\prime}:=\varphi^\beta/c$. Since $C_c^\infty(S)\subset\tilde{\gA}_0\subset C_0(S)$ are both dense in $C_0(S)$, the norm $||\varphi^{\beta\prime}||=1$ coincides with its restriction to $C_c^\infty(S)$ and to $\tilde{\gA}_0$. In particular, the restriction $\varphi^{\beta\prime}$ to $\tilde{\gA}_0$ actually defines a state on $\tilde{\gA}_0$. This state is called a {\bf  classical Gibbs state} at inverse temperature $\beta>0$.

\section{The classical limit in the context of symplectic manifolds}\label{claslimsymplecticmanifolds}
In this section we discuss the classical limit of Gibbs states induced by a certain class of  possibly unbounded (self-adjoint) Hamiltonians $H_\hbar$ 
parametrized by a semi-classical parameter $\hbar$.  The Hamiltonians we consider naturally have a classical counterpart  on a symplectic manifold, which again highlights the idea that both quantum theory and classical theory exist in their own right.
More precisely, we assume the following set-up.
\begin{assumption}\label{Assumption}

\begin{itemize}
\item[(i)] Existence of a coherent pure state quantization $\{\mathcal{H}_{\hbar},\Psi_{\hbar}^\sigma\}_{\hbar\in I}$ of a symplectic manifold $(S,\Omega_S)$ with corresponding Hilbert spaces $\{\mathcal{H}_\hbar\}_{\hbar\in I}$ (cf. Def. \ref{purestate}, Def. \ref{coherentsts}).
\item[(ii)]  $H_\hbar$ is bounded below, its domain (on which $H_\hbar$ is essentially self-adjoint) contains the ensuing coherent states $\Psi_\hbar^\sigma$, and $e^{-tH_\hbar}\in\gB_1(\mathcal{H}_\hbar)$ for all $t>0$ and each $\hbar>0$.
\item[(iii)]  Existence of two continuous functions $\check{h}_\hbar$ and $\hat{h}_\hbar$ on $(S,\Omega_S)$ defined by
\begin{align}
\check{h}_\hbar(\sigma)=\langle\Psi_\hbar^\sigma,H_\hbar\Psi_\hbar^\sigma\rangle,
\end{align}
and the (in general non-unique) function $\hat{h}_\hbar$, via the equation
\begin{align}
H_{\hbar}\phi=\int_Sd\mu_\hbar(\sigma)\hat{h}_\hbar(\sigma)\langle\Psi_\hbar^{\sigma},\phi\rangle\Psi_\hbar^{\sigma}, \ \ \phi\in \mathscr{S}(H_\hbar),
\end{align}
such that $\check{h}_\hbar$ and $\hat{h}_\hbar$ both converge pointwise to a continuous function $h_0$ on $(S,\Omega_S)$, and all $\check{h}_\hbar, \hat{h}_\hbar$ and $h_0$ are exponentially integrable, meaning that $f\in C(S)$ satisfies $e^{-tf}\in L_1(S,d\mu_\hbar)\cap  L^\infty(S,d\mu_\hbar)$, for each $t>0$.
\end{itemize}
\end{assumption}
\begin{remark}
{\em 
The function $h_0$ is also called the {\bf principal symbol}, $\hat{h}_\hbar$ the {\bf upper}  or  {\bf contravariant symbol} and $\check{h}_\hbar$ the {\bf lower} or {\bf covariant symbol} associated to the operator $H_\hbar$.
}
\hfill$\blacksquare$
\end{remark}

The above assumption allows us to prove the existence of the classical limit of Gibbs states.

\begin{proposition}\label{generalcase1}
Let $H_\hbar$ be an operator satisfying Assumption \ref{Assumption}. Consider the $\beta$- Gibbs state $\omega_{\hbar}^{\beta}$ ($\beta<\infty$) given by
\begin{align}
\omega_{\hbar}^{\beta}(\cdot)=\frac{Tr[\ \cdot \ e^{-\beta H_\hbar}]}{Tr[e^{-\beta H_\hbar}]}.
\end{align}
Then the following limit exists for any real-valued $f\in C_0(S)$
\begin{align}\label{claslimKMS}
\lim_{\hbar\to 0}\bigg{|}\omega_{\hbar}^{\beta}(Q_{\hbar}^B(f)) - \frac{\int_{S}d\sigma f(\sigma)e^{-\beta h_0(\sigma)}}{\int_{S}d\sigma e^{-\beta h_0(\sigma)}}\bigg{|}=0,
\end{align}
where $d\sigma$ denotes  the Liouville measure on $S$  and $Q_\hbar^B$ is the Berezin quantization map associated with the given coherent pure state quantization.
\end{proposition}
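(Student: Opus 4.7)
The plan is to rewrite both $Tr(Q_\hbar^B(f)e^{-\beta H_\hbar})$ and $Tr(e^{-\beta H_\hbar})$ as integrals over $S$ by invoking the resolution of the identity of the coherent state system, then to squeeze each of them between the purely classical integrals involving $e^{-\beta \check h_\hbar}$ and $e^{-\beta \hat h_\hbar}$ via the Berezin--Lieb inequalities applied to the convex function $x\mapsto e^{-\beta x}$, and finally to pass to the limit using the pointwise convergence $\check h_\hbar,\hat h_\hbar\to h_0$ and the exponential integrability supplied by Assumption~\ref{Assumption}(iii).

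Concretely, define $g_\hbar(\sigma):=\langle \Psi_\hbar^\sigma, e^{-\beta H_\hbar}\Psi_\hbar^\sigma\rangle$. By cyclicity of the trace and Definition~\ref{Berezinquan},
\[
Tr(Q_\hbar^B(f)e^{-\beta H_\hbar})=\int_S f(\sigma)\,g_\hbar(\sigma)\,d\mu_\hbar(\sigma),\qquad Tr(e^{-\beta H_\hbar})=\int_S g_\hbar(\sigma)\,d\mu_\hbar(\sigma),
\]
the second identity following from \eqref{residentity} applied to the positive trace-class operator $e^{-\beta H_\hbar}$. Since $\mu_\hbar=c(\hbar)\mu_L$, the factor $c(\hbar)$ cancels in the ratio, so it suffices to work with $d\mu_L$. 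The operator Jensen inequality applied to the pure vector state $\Psi_\hbar^\sigma$ with the convex function $x\mapsto e^{-\beta x}$ yields the pointwise lower bound $g_\hbar(\sigma)\ge e^{-\beta\check h_\hbar(\sigma)}$, while the integrated Berezin--Lieb inequality for $H_\hbar=Q_\hbar^B(\hat h_\hbar)$ gives
\[
\int_S g_\hbar\,d\mu_L \;\leq\; \int_S e^{-\beta \hat h_\hbar}\,d\mu_L.
\]

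Using the pointwise convergence $\check h_\hbar,\hat h_\hbar\to h_0$ together with the exponential integrability hypothesis (from which I would extract an $\hbar$-uniform $L^1(d\mu_L)$-dominating function, reducing to a compact exhaustion of $S$ if necessary), the dominated convergence theorem gives that both bracketing integrals $\int_S e^{-\beta\check h_\hbar}\,d\mu_L$ and $\int_S e^{-\beta\hat h_\hbar}\,d\mu_L$ converge to $\int_S e^{-\beta h_0}\,d\mu_L$. Combined with the two sandwich inequalities above, this forces the non-negative difference $g_\hbar-e^{-\beta\check h_\hbar}$ to have vanishing $L^1(d\mu_L)$-norm, so $g_\hbar\to e^{-\beta h_0}$ in $L^1(S,d\mu_L)$. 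Boundedness of $f\in C_0(S)$ then gives
\[
\Bigl|\int_S f\,g_\hbar\,d\mu_L-\int_S f\,e^{-\beta h_0}\,d\mu_L\Bigr|\;\le\;\|f\|_\infty\,\|g_\hbar-e^{-\beta h_0}\|_{L^1(d\mu_L)}\;\to\;0,
\]
and dividing by the denominator (which is bounded below by the same squeeze) produces \eqref{claslimKMS}.

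\textbf{Main obstacle.} The delicate step is the passage from the pointwise convergence $\check h_\hbar,\hat h_\hbar\to h_0$ to $L^1(d\mu_L)$-convergence of their exponentials, since one has to extract an $\hbar$-uniform dominating function from the exponential integrability clause of Assumption~\ref{Assumption}(iii); this is straightforward on compact $S$, but on non-compact $S$ it genuinely requires equi-integrability of the tails $\{e^{-\beta\hat h_\hbar}\}$. A secondary technical point is justifying that the integrated Berezin--Lieb inequality applies to the possibly unbounded $H_\hbar$, for which the trace-class property of $e^{-tH_\hbar}$ and the invariance of the Schwartz-type domain granted by Assumption~\ref{Assumption}(ii) are precisely what is needed to manipulate the spectral theorem and the weak integral defining $Q_\hbar^B$ with impunity.
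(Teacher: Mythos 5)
Your argument is correct (up to one caveat shared with the paper, noted below) but follows a genuinely different route from the paper's. The paper uses the Lieb free-energy trick: it perturbs the Hamiltonian to $H_\hbar+\lambda Q_\hbar^B(f)$, invokes the Peierls--Bogolyubov inequality to sandwich $\omega_\hbar^\beta(Q_\hbar^B(f))$ between difference quotients of the quantum free energy $F_\hbar^Q(\lambda)$, squeezes each perturbed partition function between classical ones by rerunning Lemma \ref{berezinlieblemma} for the perturbed operator, and finally identifies the limit with $\frac{d}{d\lambda}\big|_{\lambda=0}F^{cl}(\lambda)$, using differentiability of the limiting classical free energy. You instead represent both the numerator $Tr(Q_\hbar^B(f)e^{-\beta H_\hbar})$ and the denominator directly as phase-space integrals of the Husimi-type density $g_\hbar(\sigma)=\langle\Psi_\hbar^\sigma,e^{-\beta H_\hbar}\Psi_\hbar^\sigma\rangle$, and prove $L^1$-convergence of $g_\hbar$ to $e^{-\beta h_0}$ by squeezing it between $e^{-\beta\check h_\hbar}$ (pointwise, via Jensen) and $e^{-\beta\hat h_\hbar}$ (after integration, via Berezin--Lieb) --- exactly the two bounds established in Lemma \ref{berezinlieblemma}. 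Your route is more elementary and in one respect stronger: it avoids verifying self-adjointness and the trace-class property of $e^{-\beta(H_\hbar+\lambda Q_\hbar^B(f))}$ and the differentiability of $F^{cl}$ at $\lambda=0$, and it yields convergence of $\omega_\hbar^\beta(Q_\hbar^B(f))$ for every bounded measurable $f$, not just $f\in C_0(S)$. What the paper's route buys is reusability: the same Peierls--Bogolyubov/free-energy-differentiability scheme is deployed again in \S\ref{The classical of Gibbs states induced by mean field theories} for mean-field models, where no phase-space representation of the numerator is available. Finally, the obstacle you flag is real but is not specific to your proof: Assumption \ref{Assumption}(iii) only guarantees exponential integrability for each fixed $\hbar$, not an $\hbar$-uniform dominating function, and the paper invokes dominated convergence at exactly the same point without further comment; on non-compact $S$ both arguments implicitly need equi-integrability of the family $\{e^{-\beta\hat h_\hbar}\}$ (for instance a uniform lower bound $\check h_\hbar,\hat h_\hbar\geq g$ with $e^{-\beta g}\in L^1(S,d\mu_L)$).
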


In order to prove the proposition we start with a result relating the classical and quantum partition functions. This result can be seen as a corollary of the so-called Berezin-Lieb inequality (we refer to the books \cite{Gaz} and \cite{Com} for a detailed discussion on this topic). For sake of completeness we state the result by means of the following lemma.
\begin{lemma}\label{berezinlieblemma}
Under Assumption \ref{Assumption} it holds
$$\lim_{\hbar\to 0}\bigg{|} \frac{1}{c(\hbar)}Tr[e^{-\beta H_\hbar}]-\int_{S}d\sigma e^{-\beta h_0(\sigma)}\bigg{|}=0.$$
\end{lemma}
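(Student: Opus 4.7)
The plan is to bracket $\frac{1}{c(\hbar)}\mathrm{Tr}[e^{-\beta H_\hbar}]$ between two classical integrals by means of the Berezin--Lieb inequalities and then squeeze as $\hbar\to 0$. Concretely, I would apply the Berezin--Lieb bounds to the convex function $\Phi(x)=e^{-\beta x}$ (convex since $\beta>0$) and the operator $H_\hbar$. Using the upper symbol $\hat h_\hbar$ supplied by Assumption \ref{Assumption}(iii) via
\begin{equation*}
H_\hbar\phi=\int_S d\mu_\hbar(\sigma)\,\hat h_\hbar(\sigma)\,\langle\Psi^\sigma_\hbar,\phi\rangle\Psi^\sigma_\hbar,
\end{equation*}
together with the lower symbol $\check h_\hbar(\sigma)=\langle\Psi^\sigma_\hbar,H_\hbar\Psi^\sigma_\hbar\rangle$, one obtains (after dividing through by $c(\hbar)$)
\begin{equation*}
\int_S e^{-\beta\check h_\hbar(\sigma)}\,d\mu_L(\sigma)\ \leq\ \frac{1}{c(\hbar)}\mathrm{Tr}\bigl[e^{-\beta H_\hbar}\bigr]\ \leq\ \int_S e^{-\beta\hat h_\hbar(\sigma)}\,d\mu_L(\sigma).
\end{equation*}
Here the lower bound comes from Jensen applied to each diagonal matrix element of $e^{-\beta H_\hbar}$ in an eigenbasis of $H_\hbar$, together with the resolution of identity \eqref{residentity}, while the upper bound follows by applying Jensen to the integral representation of $H_\hbar$ above. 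Assumption \ref{Assumption}(ii) guarantees the middle quantity is finite, and the exponential integrability clause of Assumption \ref{Assumption}(iii) ensures the two envelopes are finite.

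Next I would pass to the limit $\hbar\to 0$ in both envelopes. Since $\check h_\hbar,\hat h_\hbar\to h_0$ pointwise on $S$ by Assumption \ref{Assumption}(iii), the integrands $e^{-\beta\check h_\hbar}$ and $e^{-\beta\hat h_\hbar}$ converge pointwise to $e^{-\beta h_0}$. A dominated-convergence argument, whose integrable majorant is produced from the exponential-integrability hypothesis, then yields
\begin{equation*}
\lim_{\hbar\to 0}\int_S e^{-\beta\check h_\hbar}\,d\mu_L=\lim_{\hbar\to 0}\int_S e^{-\beta\hat h_\hbar}\,d\mu_L=\int_S e^{-\beta h_0}\,d\mu_L,
\end{equation*}
after which the squeeze of the two-sided inequality above closes the argument.

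The main obstacle I anticipate is the last step: Assumption \ref{Assumption}(iii) only asserts integrability of $e^{-tf}$ for each fixed symbol $f\in\{\check h_\hbar,\hat h_\hbar,h_0\}$ separately, whereas a clean application of dominated convergence demands an $\hbar$-uniform integrable majorant for the family $\{e^{-\beta\check h_\hbar}\}_\hbar$ (and likewise for $\hat h_\hbar$). The natural bypass is to leverage pointwise convergence to deduce that, eventually in $\hbar$, one has $\check h_\hbar\geq h_0-1$ and $\hat h_\hbar\geq h_0-1$, so that both integrands are bounded by $e^\beta e^{-\beta h_0}\in L^1(S,d\mu_L)$; alternatively one may invoke a Vitali-type convergence theorem if the families in question can be shown to be uniformly integrable. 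In the concrete examples of interest (Schr\"odinger operators on $\bR^{2n}$ in $\S$\ref{SCHrop1}, and mean-field spin Hamiltonians later on) the requisite domination is automatic from the explicit form of the symbols, so this subtlety can be handled case by case if a fully abstract majorant is not available.
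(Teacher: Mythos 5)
Your proposal follows essentially the same route as the paper: the Berezin--Lieb sandwich $\int_S e^{-\beta\check h_\hbar}\,d\mu_L \le \frac{1}{c(\hbar)}\mathrm{Tr}[e^{-\beta H_\hbar}] \le \int_S e^{-\beta\hat h_\hbar}\,d\mu_L$, obtained from Jensen's inequality (via the coherent-state resolution of identity for the lower bound and an eigenbasis of $H_\hbar$ together with the upper-symbol representation for the upper bound), followed by dominated convergence as $\hbar\to 0$. The only difference is that you explicitly flag the need for an $\hbar$-uniform integrable majorant in the final limit, a point the paper's proof passes over silently, so if anything your version is the more careful one.
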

\begin{proof}
On the one hand, since  $e^{-tH_\hbar}$ is trace-class we can use the resolution of the identity of coherent state vectors $\Psi_\hbar^{\sigma}$ \cite[Prop. 6]{Com} and obtain
$$Tr[e^{-\beta H_\hbar}]=c(\hbar)\int_{S}d\sigma\langle \Psi_\hbar^{\sigma}, e^{-\beta H_\hbar}\Psi_\hbar^{\sigma}\rangle.$$ By the spectral theorem, $$\langle \Psi_\hbar^{\sigma}, e^{-\beta H_\hbar}\Psi_\hbar^{\sigma}\rangle=\int_{0}^{\infty}e^{-\beta\lambda}d\nu_\hbar^{\sigma}(\lambda),$$ where
$\nu_\hbar^{\sigma}(F)=\langle P_F^{H_\hbar}\Psi_\hbar^{\sigma},\Psi_\hbar^{\sigma}\rangle$, and $P_F^{H_\hbar}$ denotes the spectral probability measure on $[0,\infty)$ associated to the operator $H_\hbar$. Since the function $x\mapsto e^{-\beta x}$ is convex on $[0,\infty)$ we can apply Jensen's inequality for probability measures, obtaining
$$ e^{-\beta \int_{0}^{\infty}\lambda d\nu_\hbar^{\sigma}(\lambda)}\leq\int_{0}^{\infty}e^{-\beta\lambda}d\nu_\hbar^{\sigma}(\lambda).$$
Since $\Psi_\hbar^{\sigma}\in \mathcal{D}(H_\hbar) \ (\sigma\in S)$ it follows that
$$\int_{0}^{\infty}\lambda d\nu_\hbar^{\sigma}(\lambda)=\langle\Psi_\hbar^{\sigma},H_\hbar\Psi_\hbar^{\sigma}\rangle.$$
Combining the above results  and integrating over the phase space $S$ with respect to the measure $d\mu_\hbar(\sigma)=c(\hbar)d\sigma$ yields the inequality
$$c(\hbar)\int_{S}d\sigma e^{-\beta \check{h}_\hbar(\sigma)}\leq Tr[e^{-\beta H_\hbar}].$$
On the other hand, the second hypothesis  of Assumption \ref{Assumption} and the spectral theorem imply the existence of an orthonormal basis of $\mathcal{H}_\hbar$ given by eigenfuntions $\{\phi_\hbar^{(i)}\}$ of $H_\hbar$. 
Then,
$$\langle \phi_\hbar^{(i)},e^{-\beta H_\hbar}\phi_\hbar^{(i)}\rangle=e^{-\beta\langle \phi_\hbar^{(i)},  H_\hbar\phi_\hbar^{(i)}\rangle}=e^{-\beta c(\hbar)\int_Sd\sigma \hat{h}_\hbar(\sigma)|\Phi^{(i)}_\hbar(\sigma)|^2},$$ where $\Phi_\hbar^{(i)}(\sigma)=\langle\Psi_\hbar^\sigma,\phi^{(i)}_\hbar\rangle$. An application of Jensen's inequality applied to the probability measure $c(\hbar)|\Phi_\hbar^{(i)}(\sigma)|^2d\sigma $ yields
$$\langle \phi_\hbar^{(i)},e^{-\beta H_\hbar}\phi_\hbar^{(i)}\rangle \leq c(\hbar)\int_S e^{-\beta \hat{h}_\hbar(\sigma)}|\Phi^{(i)}_\hbar(\sigma)|^2d\sigma .$$
Taking the sum over all $i$ and observing that $\sum_i|\Phi^{(i)}_\hbar(\sigma)|^2=1$, gives
$$Tr[e^{-\beta H_\hbar}]\leq c(\hbar)\int_S e^{-\beta \hat{h}_\hbar(\sigma)}d\sigma,$$
which exists as a result of hypothesis (iii) of Assumption \ref{Assumption}.
In summary,
$$\int_{S}d\sigma e^{-\beta \check{h}_\hbar(\sigma)}\leq \frac{1}{c(\hbar)}Tr[e^{-\beta H_\hbar}]\leq \int_S e^{-\beta \hat{h}_\hbar(\sigma)}d\sigma.$$
Since $\check{h}_\hbar$ and $\hat{h}_\hbar$ are  both assumed to converge pointwise to $h_0$, the result follows as an application of dominated convergence theorem.
\end{proof}

\begin{proof}[Proof of Proposition \ref{generalcase1}]
The proof is based on the ideas mentioned in the paper \cite{Lieb} and a $C^*$-version of the {\em Peierls-Bogolyubov Inequality} \cite[Thm 7]{Rus}, namely
\begin{align}
\frac{Tr[e^AB]}{Tr[e^A]}\leq \text{log}\bigg{(}\frac{Tr[e^{A+B}]}{Tr[e^A]}\bigg{)},
\end{align}
whenever $B$ is bounded and self-adjoint, $A$ is self-adjoint and bounded above such that $Tr(e^A)< \infty$. For $t>0$ we apply this inequality to $B=-\beta t Q^B_\hbar(f)$ (with $f$ real-valued so that $Q_\hbar^B(f)$ is self-adjoint and compact (cf. Prop. \ref{propertiesberezinmap})), and to $A=-\beta H_\hbar$ which by the hypotheses is self-adjoint, bounded above and its exponential $e^{-\beta H_\hbar}$ has finite trace. Let us define
\begin{align}\label{FHQ}
F_\hbar^Q(t):=-{\beta^{-1}}\text{log}\bigg{[}\frac{1}{c(\hbar)}Tr[e^{-\beta(H_\hbar+t Q_\hbar^B(f))}]\bigg{]}.
\end{align}
Hence, with $t>0$ we see
\begin{align}
[F_\hbar^Q(0)-F_\hbar^Q(-t)]/t\geq \omega_{\hbar}^{\beta}(Q_{\hbar}^B(f))\geq [F_\hbar^Q(t)-F_\hbar^Q(0)]/t.\label{tric1}
\end{align}
An application of the previous lemma yields the inequality
\begin{align*}
\check{Z}_{H_\hbar}\leq Z_{H_\hbar}\leq \hat{Z}_{H_\hbar},
\end{align*}
where $Z_{H_\hbar}$ denotes the quantum partition function associated to $H_\hbar$, i.e. $Z_{H_\hbar}=Tr[e^{-\beta H_\hbar}]$,  $\check{Z}_{H_\hbar}=Tr[Q_\hbar^B(e^{-\beta \check{h}_\hbar})]$, and $\hat{Z}_{H_\hbar}=Tr[Q_\hbar^B(e^{-\beta\hat{h}_\hbar})]$. Note that these expressions make sense by hypothesis (iii) of Assumption \ref{Assumption}. It follows that 
\begin{align*}
\hat{F}^{cl}(0,\hbar)\leq F_\hbar^{Q}(0)\leq \check{F}^{cl}(0,\hbar),
\end{align*}
where, similarly as before, we defined for any $t\in\bR$ the functions
\begin{align*}
&\hat{F}^{cl}(t,\hbar):=-\beta^{-1}\text{log}\bigg{[}\int_{S}e^{-\beta (\hat{h}_\hbar(\sigma)+tf(\sigma))}d\sigma\bigg{]};\\
&\check{F}^{cl}(t,\hbar):=-\beta^{-1}\text{log}\bigg{[}\int_{S}e^{-\beta (\check{h}_\hbar(\sigma)+tf(\sigma))}d\sigma\bigg{]}.
\end{align*}
We point out to the reader that, if $A$ and $B$ are operators, then $(A+B)^*=A^*+B^*$ if  $A$ is densely defined and $B \in \gB(\mathcal{H})$. As a result the operator $H_\hbar+\lambda Q_\hbar^B(f)$ is self adjoint on $\mathcal{D}(H_\hbar)$. Moreover,
as a corollary of \cite[Thm. 4]{Rus} using that $e^{-\beta H_\hbar}$ is trace-class, we observe,
$$Tr[|e^{-\beta(H_\hbar+t Q_\hbar^B(f))}|]=Tr[e^{-\beta(H_\hbar+t Q_\hbar^B(f))}]\leq  Tr[e^{-\beta H_\hbar}e^{-\beta t Q_\hbar^B(f))}] <\infty,$$
so that in particular $e^{-\beta(H_\hbar+ t Q_\hbar^B(f))}$ is trace-class. Repeating the same argument as in the lemma applied to $e^{-\beta(H_\hbar+ t Q_\hbar^B(f))}$ we obtain
\begin{align*}
\int_{S}e^{-\beta (\check{h}_\hbar(\sigma)+t f(\sigma))}d\sigma\leq \frac{1}{c(\hbar)}Tr[e^{-\beta(H_\hbar+t Q_\hbar^B(f))}]\leq \int_{S}e^{-\beta (\hat{h}_\hbar(\sigma)+ t f(\sigma))}d\sigma.
\end{align*}
This combined with \eqref{tric1} yields
\begin{align*}
[\hat{F}^{cl}(t,\hbar)-\check{F}^{cl}(0,\hbar)]/t\leq \omega_{\hbar}^{\beta}(Q_{\hbar}^B(f))\leq [\check{F}^{cl}(0,\hbar)]-\hat{F}^{cl}(-t,\hbar)]/t.
\end{align*}
By the dominated convergence theorem using continuity of the logarithmic function on the positive real axis, we observe 
\begin{align}\label{FHC}
    \check{F}^{cl}(t,\hbar)\to F^{cl}(t,0):=-\beta^{-1}\text{log}\bigg{[}\int_{S}e^{-\beta (h_0(\sigma)+t f(\sigma))}d\sigma\bigg{]}, \ \ (\hbar\to 0),
\end{align}
and similarly, $\hat{F}^{cl}(t,\hbar)\to F^{cl}(t,0)$, ($\hbar\to 0$). We can therefore drop the second zero and we write $F^{cl}(t):=F^{cl}(t,0)$. Hence,
\begin{align*}
[F^{cl}(t)-F^{cl}(0)]/t\leq \limsup_{\hbar\to 0} \omega_{\hbar}^{\beta}(Q_{\hbar}^B(f))\leq [F^{cl}(0)-F^{cl}(-t)]/t.
\end{align*}
Since $F^{cl}$  is differentiable in $t\in\bR$ we must have $\lim_{t\to 0^+}[F^{cl}(t)-F^{cl}(0)]/t=\lim_{t\to 0^+}[F^{cl}(0)-F^{cl}(-t)]/t=\frac{d}{d t}|_{t=0}F^{cl}(t)$. It is not difficult to see that the derivative equals
\begin{align*}
\frac{d}{d t}\bigg{|}_{t=0}F^{cl}(t)=\frac{\int_{S}d\sigma f(\sigma)e^{-\beta h_0(\sigma)}}{\int_{S}d\sigma e^{-\beta h_0(\sigma)}}.
\end{align*}
We conclude that 
\begin{align}\label{classicallimitstatenew}
\lim_{\hbar\to 0} \omega_{\hbar}^{\beta}(Q_{\hbar}^B(f))=\frac{\int_{S}d\sigma f(\sigma)e^{-\beta h_0(\sigma)}}{\int_{S}d\sigma e^{-\beta h_0(\sigma)}}.
\end{align}
This proves the proposition.
\end{proof}

By Proposition \ref{classical Gibbs} it immediately follows that \eqref{classicallimitstatenew} corresponds to the unique classical Gibbs states defined on any Poisson $*$-subalgebra $\tilde{\gA}_0$ of $C_0(S)$ containing $C_c^\infty(S)$.

\subsection{Schr\"{o}dinger operators}\label{SCHrop1}
We now discuss the classical limit of Gibbs states induced by Schr\"{o}dinger operators. The relevant manifold $\mathbb{R}^{2n}$ admits a coherent pure state quantization according (Appendix \ref{Beronrealphasespace}) with ensuing quantization maps defined by compact operators provided the quantized  functions are  taken in $C_0(\bR^{2n})$ \eqref{berquanrealphasespace}.

We consider $\hbar$-dependent\footnote{We stress that under a certain scale separation the physical meaning of the limit in Planck’s constant $\hbar\to 0$ in this context can be interpreted as the limit $m\to\infty$, where $m$ denotes the mass of the quantum particle. Indeed, the limit $\hbar\to 0$ where $\hbar$ occurs as  $-\frac{\hbar^2}{2m}$ in front of the Laplacian at fixed mass (usually set to unity) of a general Schr\"{o}dinger operator can be equivalently obtained by sending $m$ to infinity at fixed $\hbar$ \cite{MorVen2}.} (unbounded) Schr\"{o}dinger operators $H_{\hbar}$ defined on some dense domain of $\mathcal{H}=L^2(\mathbb{R}^n, dx)$. Such operators are defined by
\begin{align}
H_{\hbar}:=\overline{-\hbar^2\Delta + V}, \quad \hbar>0\:,\label{Schroper}
\end{align}
where $\Delta$ denotes the Laplacian on $\mathbb{R}^{n}$, and $V$ denotes multiplication by some real-valued function on $\mathbb{R}^n$, playing the role of the potential. 
In order to meet the hypotheses of Assumption \ref{Assumption} one may require the following conditions on the potential.
\begin{itemize}\label{HypotV}
\item[{ (V1)}] $V$ is a  real-valued smooth function on $\bR^n$.
\item[{ (V2)}]  $\inf_{x\in \bR^n} V(x)= \min_{x\in \bR^n}V(x) = c >-\infty\:. \label{defC}$
\item[{ (V3)}] $e^{-tV} \in L^1(\bR^n)\cap L^\infty(\bR^n)$ for $t>0$.
\end{itemize}
\vspace*{0.2cm}
By standard results it follows that the ensuing Schr\"{o}dinger operator $H_\hbar$ is essentially self-adjoint on $C_0(\mathbb{R}^2)$ and the Schwartz space $\mathscr{S}(\bR^n)$ is included in the domain of $H_\hbar$ \cite{RS2}. 

\begin{example}\label{physicalexamples}
{\em A main example of a potential satisfying the hypotheses \ref{HypotV} is the function $V(q):=(q-1)^2$ on $\bR^n$ describing
\begin{itemize}
\item[(i)] a {\em double well} potential on $\bR^1$, i.e. $n=1$;
\item[(ii)] a {\em Mexican hat} potential on  $\bR^{2}$, with $n=2$.
\end{itemize}
}
\hfill$\blacksquare$
\end{example}
As a result of \cite[Chapter 8]{Sim79}, for $h_0(q,p):=p^2+V(q)$ and $\beta>0$ the bound
\begin{align}\label{boundonoperator}
Tr[e^{-\beta H_\hbar}]\leq \frac{1}{(2\pi\hbar)^n}\int_{\bR^{2n}}d^nqd^npe^{-\beta h_0(q,p)}
\end{align}
automatically holds. As a result of condition $(V3)$, the function $e^{-\beta h_0}\in L^1(\bR^{2n})\cap L^\infty(\bR^{2n})\ \ (\beta>0)$, so that by Theorem \ref{propertiesberezinmap} it follows that the right-hand side of \eqref{boundonoperator} is finite for each $\hbar>0$. Since $e^{-\beta H_\hbar}$ is positive, we now may conclude that $e^{-\beta H_\hbar}$  is a trace-class operator. Therefore, Proposition \ref{quantum Gibbs} applies, so that the state $\omega_\hbar^\beta$ defined by \eqref{KMSq} for the Hamiltonian $H_\hbar$ is the unique $\beta$-KMS state for the one-parameter subgroup $t\mapsto\alpha_t^{H_\hbar}\in \text{Aut}(\gB_\infty(\mathcal{H}))$ generated by $H_\hbar$. In order to show that Proposition \ref{generalcase1} applies it suffices to prove Lemma \ref{berezinlieblemma} for $H_\hbar$. To this avail, since $\Psi_\hbar^{(q,p)}\in \mathscr{S}(\bR^{n})\subset \mathcal{D}(H_\hbar) \ ((q,p)\in\bR^{2n})$ is not difficult to see that (see e.g. \cite{SimCL})
$$\hat{h}_\hbar(q,p)=\langle\Psi_\hbar^{(q,p)},H_\hbar\Psi_\hbar^{(q,p)}\rangle=p^2+\frac{n\hbar}{2}+V(q).$$
Combining the above results  and integrating over the phase space $\bR^{2n}$ with respect to the measure $d\mu_\hbar(q,p)=\frac{1}{(2\pi\hbar)^n}d^nqd^nq$ (for $c(\hbar)=\frac{1}{(2\pi\hbar)^n}$) yields the inequality
$$\frac{1}{(2\pi\hbar)^n}\int_{\mathbb{R}^{2n}}d^nqd^npe^{-\beta (p^2+\frac{n\hbar}{2}+V(q))}\leq Tr[e^{-\beta H_\hbar}]\leq \frac{1}{(2\pi\hbar)^n}\int_{\bR^{2n}}d^npd^nqe^{-\beta h_0(q,p)},$$
or equivalently
$$\int_{\mathbb{R}^{2n}}d^nqd^npe^{-\beta (p^2+\frac{n\hbar}{2}+V(q))}\leq (2\pi\hbar)^nTr[e^{-\beta H_\hbar}]\leq \int_{\bR^{2n}}d^npd^nqe^{-\beta h_0(q,p)}.$$
Since $p^2+\frac{n\hbar}{2}+V(q)$ converges pointwise to $p^2+V(q)$ an application of dominated convergence theorem now proves that
$$\lim_{\hbar\to 0}\bigg{|}(2\pi\hbar)^nTr[e^{-\beta H_\hbar}]-\int_{\mathbb{R}^{2n}}d^nqd^npe^{-\beta h_0(q,p)}\bigg{|}=0,$$ which therefore shows the validity of Lemma \ref{berezinlieblemma}.

\section{Mean-field theories}\label{MFQSS}
Mean-field theories (MFTs) play an important role as approximate models of the more complex nearest neighbor interacting spin systems. Their relatively simple structure allows for a detailed analysis of limit $|\Lambda|\to\infty$, especially in view of spontaneous symmetry breaking (SSB) and phase transitions. 

Homogeneous mean-field quantum spin systems fall into the class of MFTs. They are defined by a single-site Hilbert space $\mathcal{H}_x=\mathcal{H}=\mathbb{C}^k$ and local Hamiltonians of the type
\begin{align}
\tilde{H}_{\Lambda}=|\Lambda|\tilde{h}(T_0^{(\Lambda)},T_1^{(\Lambda)},\cdot\cdot\cdot, T_{k^2-1}^{(\Lambda)}),\label{meanfield}
\end{align}
where $\tilde{h}$ is a polynomial in $k^2-1$ variables, and $\Lambda\subset\mathbb{Z}^d$ denotes a finite lattice on which $\tilde{H}_{\Lambda}$ is defined and $|\Lambda|$ denotes the number of lattice points (see e.g. \cite[Chapter 10]{Lan17}). Here $T_0= 1_{M_k(\mathbb{C})}$, and the matrices $(T_i)_{i=1}^{k^2-1}$ in $M_k(\mathbb{C})$ form a basis of the real vector space of traceless self-adjoint $k\times k$ matrices; the latter may be identified with $i$ times the Lie algebra $\mathfrak{su}(k)$ of $SU(k)$, so that $(T_0,T_1,...,T_{k^2-1})$ is a basis of $i$ times the Lie algebra $\mathfrak{u}(k)$ of the unitary group $U(k)$ on $\mathbb{C}^k$. The macroscopic average spin operators are now defined by
\begin{align}
T_\mu^{(\Lambda)}=\frac{1}{|\Lambda|}\sum_{x\in\Lambda}T_\mu(x), \ \ (\mu=0,\cdots ,k^2-1).
\end{align}
Here $T_\mu(x)$ stands for $I_{k} \otimes \cdots \otimes T_\mu\otimes \cdots \otimes I_{k}$, where $T_\mu$ occupies  slot $x$, and $I_k$ is the unit matrix on $\mathbb{C}^k$.

It follows directly from the definition that such models are characterized by the property that all spins interact with each other which implies that these models are permutation-invariant and that the geometric configuration including the dimension is irrelevant. In what follows we therefore consider homogeneous mean-field quantum spin chains. We will often leave out the term ``homogeneous'' in the forthcoming  discussion.

\begin{example}\label{UnscaledLMG}
{\em 
The Lipkin-Meshkov-Glick Model, or simply the LMG model, used to serve to describe phase transitions in atomic nuclei \cite{Lip}, but later it was found that the LMG model is relevant in the study of many other quantum systems such as cavity quantum electrodynamics (cavity QED) and spontaneous symmetry breaking (SSB) \cite{Morr}. Its Hamiltonian describes a mean-field interaction, and in the one-dimensional case, it is defined on a chain of matrix algebras $M_2(\mathbb{C})^{\otimes N}$ of length $N=|\Lambda|$,\footnote{As homogeneous mean-field quantum Hamiltonians systems are related to the quantization maps  $Q_{1/N}$ defined through \eqref{deformationquantization1}--\eqref{deformationquantization2}, we use the subscript $1/N$ rather than $N$ or $|\Lambda|$.}
\begin{align}
\tilde{H}^{LMG}_{1/N}: &  \underbrace{\mathbb{C}^2 \otimes \cdots  \otimes\mathbb{C}^2}_{N \: times}  \to 
\underbrace{\mathbb{C}^2 \otimes \cdots  \otimes\mathbb{C}^2}_{N \: times};\nonumber \\
\tilde{H}^{LMG}_{1/N} &=\frac{\lambda}{N}\bigg{(}(\sum_{x\in\Lambda}\sigma_1(x))^2+\gamma (\sum_{x\in\Lambda}\sigma_2(x))^2\bigg{)}-B\sum_{x\in\Lambda}\sigma_3(x), \label{CWhamunscaled}
\end{align}
where $\lambda,\gamma,B \in \mathbb{R}$ are given constants defining the type of interaction, the anisotropic in-plane coupling, and the external magnetic field along $z$ direction, respectively.
Regarding \eqref{meanfield} is it not difficult to see that
\begin{align}
\tilde{h}^{LMG}(T_0^{(N)},T_1^{(N)},T_2^{(N)},T_3^{(N)})=\lambda(S_1^2+\gamma S_2^2)-BS_3, \label{tildepol}
\end{align}
with $S_\mu\equiv T_\mu^{(N)}$ ($\mu=1,2,3$).

}
\hfill $\blacksquare$
\end{example}

\subsection{Mean-field quantum spin systems, symbols and strict deformation quantization}

The traditional way of studying the large limit of lattice sites $N=|\Lambda|$ exists under the name {\em thermodynamic limit}, that is, a rigorous formalism in which $N$ as well as the volume $V$ of the system at constant density $N/V$, are sent to infinity. The limiting system constructed in the limit $N=\infty$ is typically identified with quantum statistical mechanics in infinite volume.  In this setting the so-called {\em quasi-local} observables are studied: these give rise to a non-commutative continuous bundles of $C^*$-algebras $\mathcal{A}^{q}$ over base space $I$\footnote{The elements of $I$ can be thought of ``quantize'' values of Planck’s constant $\hbar=1/N$ upon which the limit $N\to\infty$ is formally the same as the limit $\hbar\to 0$.}  defined by
\begin{equation}
I= \{1/N \:|\: N \in  \mathbb{N}_*\} \cup\{0 \}\equiv (1/\mathbb{N}_*) \cup\{0 \}, \label{defI}\
\end{equation}
with the  topology inherited from $[0,1]$ and $\mathbb{N}_*:=1,2,3,\cdots,$. For finite $N$, the fibers $\mathcal{A}_{1/N}$ at $1/N$ are given by the $N$-fold (projective) tensor product of a unital $C^*$-algebra $\gA$ (e.g. a matrix algebra) with itself, and $\mathcal{A}_0$, i.e. the fiber at $N=\infty$ or equivalently at $1/N=0$, is given by the quasi-local algebra  (Appendix \ref{Qss}).

This is not the whole story, since the limit $N\to\infty$ may also describe the relation between classical (spin) theories viewed as limits of quantum statistical mechanics. In this case the  {\em quasi-symmetric} (also called {\em macroscopic}) observables are studied and these induce a continuous bundle of $C^*$-algebras $\mathcal{A}^{c}$  which is defined over the same base space $I=1/\mathbb{N}_*\cup \{0\}$, with fibers at $1/N>0$ given by the $N-$fold symmetric tensor powers of $\gA$ with itself, but as opposed to the quasi-local bundle, the $C^*-$algebra $\mathcal{A}_0^{\pi}$ at $N=\infty$ is {\em commutative} (Appendix \ref{Qss}). Indeed, it can be shown that $\mathcal{A}_0^{\pi}\cong C(S(\gA))$, the $C^*-$algebra of continuous functions over the algebraic state space of the single site algebra $\gA$.

It is precisely the latter interpretation of the limit $N\to\infty$ that relates mean field quantum spin systems to strict deformation quantization, since the  ensuing MF Hamiltonians correspond to quasi-symmetric sequences which in turn are defined by quantization maps, at least when $\gA$ assumes the form $\gA=M_k(\mathbb{C})$  as precisely occurs when dealing with spin systems \cite{LMV} (Appendix \ref{Qss}). To make this precise we need to introduce the notion of a {\bf classical symbol}, that is, generally speaking a function
\begin{align}
\tilde{h}_N:=\sum_{j=0}^{M}N^{-j}\tilde{h}_j+O(N^{-(M+1)}), \ \ \label{classicalsymbol1}
\end{align}
for some $M\in\mathbb{N}$ and where each $\tilde{h}_j$ is a real-valued function on the manifold one considers. The first term $\tilde{h}_0$ is called the {\bf principal symbol}. 

For mean-field theories on a lattice it is the commutative $C^*$-bundle with fibers $S_N(\gA^{N})$ ($N>0$) and $C(S(M_k(\mathbb{C})))$ ($N=\infty$) and  associated quantization maps  $Q_{1/N}$ defined through equations \eqref{deformationquantization1}--\eqref{deformationquantization2} that relates the corresponding quantum Hamiltonian to these symbols. The classical symbol $\tilde{h}_N$ will be a polynomial on $S(M_k(\bC))$ and its image under the maps $Q_{1/N}$ yields the scaled mean-field quantum Hamiltonian $\tilde{H}_{1/N}/N$ in question, i.e. $\tilde{h}_N$ is said to be the {\em classical symbol} of $\tilde{H}_{1/N}$ \cite{Ven2020}. By construction, $\tilde{H}_{1/N}/N$ is a quasi-symmetric sequence in the sense of \eqref{quasisymmnew}. We moreover stress that the scaling factor is essential: it principally says that the mean-field Hamiltonian is of order $N$, an essential property for the study of phase transitions in the regime $N\to\infty$. Furthermore, we will see below that the associated principal symbol exactly plays the role of the polynomial $\tilde{h}$ defined in the very beginning of Section \ref{MFQSS}.

\begin{example}[Example \ref{UnscaledLMG} revisited]
{\em Let us go back to the example of the LMG model.  
We first note that 
 \begin{equation}
 \tilde{H}^{LMG}_{1/N} \in \mathrm{Sym}(M_2(\mathbb{C})^{\otimes N}),
\end{equation}
where $\mathrm{Sym}(M_2(\mathbb{C})^{\otimes N})$ is the range of the symmetrizer introduced in Appendix \ref{Qss}, cf. \eqref{def:SN}.
We scale the quantum LMG model with a global factor $1/N$, and we rewrite $\tilde{H}_{1/N}^{LMG}/N$ as
\begin{align}
&\tilde{H}^{LMG}_{1/N}/N\nonumber=\\& \frac{\lambda}{N^2} \bigg{(}\sum^N_{i \neq j, \:i,j=1} \sigma_1(i)\sigma_1(j)+\gamma \sum^N_{i \neq j, \:i,j=1} \sigma_2(i)\sigma_2(j) \bigg{)} - \frac{B}{N} \sum_{j=1}^N \sigma_3(j) + O(1/N)\nonumber=\\
&\lambda\bigg{(}S_{2,N}(\sigma_1\otimes\sigma_1)+\gamma S_{2,N}(\sigma_2\otimes\sigma_2)\bigg{)}-BS_{1,N}(\sigma_3)+ O(1/N)\nonumber =\\
&Q_{1/N}(\tilde{h}_0^{LMG})+O(1/N),
\label{QNh}
\end{align}
where $O(1/N)$ is meant in norm (i.e. the operator norm on $M_2(\mathbb{C})^{\otimes N}$) and the classical LMG Hamiltonian under the identification $S(M_2(\mathbb{C}))\cong B^3=\{(x,y,z) \ | \ x^2+y^2+z^2\leq 1\}$ reads
\begin{align}
&\tilde{h}_0^{LMG}: B^3\to \mathbb{R};\\
&\tilde{h}_0^{LMG}(x,y,z)=\lambda(x^2+\gamma y^2)-Bz .   
\end{align}
Therefore, the LMG model indeed defines a quasi-symmetric sequence according to \eqref{quasisymmnew}).
}
\hfill $\blacksquare$
\end{example}

\subsection{The free energy in the regime of large particles}\label{The free energy in the regime of large particles}
In the forthcoming discussion  we focus on (local) Gibbs states induced by mean-field quantum spin Hamiltonians $\tilde{H}_{1/N}$. Our interest is the limit $N=|\Lambda|\to\infty$ where, as before, $N=|\Lambda|$ correspond to the number of lattice points.


Let us first recall some basics in the case of matrix algebras.
Each operator $A_N\in \gA^{N}=\bigotimes_{x\in\Lambda}M_{k}(\bC)$ induces 
the Gibbs state defined by \eqref{KMSq}, i.e.  $\omega_{N}^\beta(\cdot)=\frac{Tr(e^{-\beta A_N}\cdot)}{Tr(e^{-\beta A_N})}$. This  is the unique KMS state (at inverse temperature $\beta$) on the matrix algebra $\gA^{N}$ for the Heisenberg dynamics implemented by $A_{N}$. Here, $Tr\equiv Tr_N$ denotes the usual trace on $\mathcal{H}_N=\bigotimes_{x\in\Lambda}\mathcal{H}_x$. 

From a different perspective, one can introduce the local internal energy of any operator $A_N$ by
\begin{align}
    U_{N}(A_N,\omega_{N})=\omega_{N}(A_N),
\end{align}
and the local  entropy  by
\begin{align}
    S_{N}(\omega_{N})=-Tr(\rho_{N}\log(\rho_{N})),
\end{align}
where we have identified $\omega_{N}$ with $\rho_{N}\in \gA^{N}$, i.e. the density matrix $\rho_{N}$ uniquely corresponds to the state $\omega_{N}$ in $S(\gA^{N})$ via $\omega_{N}(\cdot)=Tr(\rho_{N}\cdot)$. Finally, the local free energy at $\beta\in (0,\infty)$ is defined as
\begin{align}
    F_{N}^\beta(A_N,\omega_{N})= U_{N}(A_N,\omega_{N})-\frac{1}{\beta}S_{N}(\omega_{N}),
\end{align}
Given a local Gibbs state $\omega_{N}^\beta(\cdot)=\frac{Tr(e^{-\beta A_N}\cdot)}{Tr(e^{-\beta A_N})}$ on $\mathfrak{A}^{N}$ it can be shown that
\begin{align}\label{finite free energy}
     F_{N}^\beta(A_N,\omega_{N}^\beta)=\inf_{\omega_{N}\in S(\mathfrak{A}^N)}F_{N}^\beta(A_N,\omega_{N})=-\frac{1}{\beta }\log{Tr(e^{-\beta A_N})},
\end{align}
and this infimum is uniquely attained for the local Gibbs state. Hence, the unique $\beta-$KMS local Gibbs state is precisely the unique minimizer of the local free energy. 
We write 
\begin{align}
    F_{N}^\beta(A_N):=\inf_{\omega_{N}\in S(\gA^{N})}F_{N}^\beta(A_N,\omega_{N}).
\end{align}
 The above implies that in finite quantum systems nothing really happens. In contrast, the behavior of the  system in the limit of large particles  (viz. lattice sites) does allow for interesting physical phenomena, in particular when considering mean-field Hamiltonians $\tilde{H}_{1/N}$. To see this, we recall these $\tilde{H}_{1/N}$ are determined (via $\tilde{H}_{1/N}=NH_{1/N}$) by the continuous-cross sections $H=(H_{1/N})_N$ of the continuous bundle of $C^*-$ algebras $\mathcal{A}^c$ (Appendix \ref{Qss}). These sequences $(H_{1/N})_N$ (and thus also $(\tilde{H}_{1/N})_N$) are in particular permutation invariant for each $N>0$, so that the induced Gibbs states are permutation invariant as well, i.e. $\omega_{1/N}^\beta\in S^\pi(\mathcal{A}_{1/N})$ for each $N>0$.\footnote{In what follows we will often use the subscript $1/N$ and notations  $\mathcal{A}_{1/N}=\gA^N$ and  $\mathcal{A}_0=[\gA]^\infty$ defined in  \eqref{Eq: quasi-local bundle} to indicate the correspondence between finite (quantum) and infinite (classical) theory via the notion of continuous bundle of $C^*-$algebras over base space $I$, cf. \eqref{defI}.} By a standard Hahn-Banach argument we can extend the Gibbs states to states $\hat{\omega}_{1/N}^\beta\in S(\mathcal{A}_0)$.\footnote{Despite the fact that each limit point $\omega_0$ of the local Gibbs state is permutation-invariant, i.e. an element of $S^\pi(\mathcal{A}_0)=\{\omega_0\in S(\mathcal{A}_0)\ | \ \omega_0\circ\varphi_M\in S^\pi(\gA^M), \ \forall M\in\mathbb{N}\}$ (see \eqref{def:SN} for the definition), it is not necessarily true that at fixed $N$ the extended state $\hat{\omega}_{1/N}^\beta$ is in $S^\pi(\mathcal{A}_0)$.} In this way, we obtain a sequence  $(\hat{\omega}_{1/N}^\beta)$ of states in $S(\mathcal{A}_0)$ of which we can take a convergent subsequence $\hat{\omega}_{1/{N_j}}^\beta$ with limit $\omega_0^\beta$ (with respect with the weak-$*$ topology) 
\begin{align}\label{limitpointvar}
\omega_0^\beta:=\lim_{j\to\infty}\hat{\omega}_{1/{N_j}}^\beta\,,
\end{align}
which is necessarily permutation invariant.
It follows that for a local sequences $(A_{1/N})_N$ one has
\begin{align}
&\omega_0^\beta([A_{1/N}]_N)=(\omega_0^\beta\circ\varphi^M)(A_{1/M})=\lim_{j\to\infty}(\hat{\omega}_{1/{N_j}}^\beta\circ\varphi^M)(A_{1/M})\nonumber\\&=\lim_{j\to\infty}(\hat{\omega}_{1/{N_j}}^\beta\circ\varphi^{N_j})(\varphi^M_{N_j}A_{1/M})=\lim_{j\to\infty}(\omega_{1/{N_j}}^\beta\circ\varphi^M_{N_j})(A_{1/M}),
\end{align}
where  $\varphi^M(A_{1/M})\to[A_{1/M}\otimes I^{N-M}]_N$ is the canonical  embedding $\mathcal{A}_{1/N}\mapsto \mathcal{A}_0$ (cf. \eqref{localsequence}), and $\omega_{1/{N_j}}^\beta=\hat{\omega}_{1/{N_j}}^\beta\circ\varphi^{N_j}$. By the celebrated quantum De Finetti Theorem the permutation invariant state $\omega_0^\beta$ assumes the form
\begin{align}\label{limiting state1}
     \omega_0^\beta=\int_{S(M_k(\mathbb{C}))}d\mu_{0}^\beta(\omega')(\omega')^{\otimes\infty},
\end{align}
for a unique probability measure $\mu_{0}^\beta$ on $S(M_k(\mathbb{C}))$, where for $\omega'\in S(M_k(\mathbb{C}))$ the state $(\omega')^{\otimes\infty}$ on $\mathcal{A}_0$ denotes the associated product state, which is clearly permutation invariant.
\\\\
In order to extend the notions of free energy to the infinite system, we consider a (not necessarily permutation-invariant) state $\omega$ on the  quasi-local algebra $\mathcal{A}_0$. The restriction of  $\omega$  to $\mathcal{A}_{1/N}$  with respect to the canonical embedding will be denoted by $\omega|_{1/N}$. We furthermore assume the local algebras to be matrix algebras, so that as before
\begin{align}
    \omega_{|_{1/N}}(A_{1/N})=Tr_{N}(\rho_{|_{1/N}}A_{1/N}), \ \ (A_{1/N}\in\mathcal{A}_{1/N})
\end{align}
where $\rho_{|_{1/N}}\in \gB(\mathcal{H}_N)$ is the corresponding density operator. We occasionally adapt the notation $\rho^\omega$ to indicate the dependence on the state $\omega$. For a quasi-symmetric sequence $H=(H_{1/N})_N$ and ensuing Hamiltonian $\tilde{H}_{1/N}:=NH_{1/N}$ we  define the densities
\begin{itemize}
    \item[-] mean-field internal energy $U(\tilde{H},\omega)=\lim_{N\to\infty}\frac{1}{N}\omega_{|_{1/N}}(\tilde{H}_{1/N})$
    \item[-] mean-field entropy $S(\omega)=\lim_{N\to\infty}-\frac{1}{N}tr_N(\rho^\omega_{|_{1/N}}\log{(\rho^\omega_{|_{1/N}})})$
    \item[-] mean-field free energy $F^\beta(\tilde{H},\omega)=U(\tilde{H},\omega)-\frac{1}{\beta}S(\omega)$,\\
\end{itemize}
providing the corresponding weak-$*$ limits exist.
The following result is proved in the case of permutation invariant states.
\begin{proposition}\label{existenceoflimits}
For all permutation invariant states $\omega$ on the quasi-local algebra $\mathcal{A}_0$ and all quasi-symmetric sequences $H=(H_{1/N})_N$, the above limits exist and define weak-$*$ continuous and affine functionals on $S(C(S(M_k(\mathbb{C}))))$.
\end{proposition}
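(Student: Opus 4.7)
The plan is to reduce everything, by means of the quantum de Finetti theorem already invoked in \eqref{limiting state1}, to the case of product states $\omega=(\omega')^{\otimes\infty}$, and then to integrate the resulting densities against the representing probability measure $\mu^\omega$ on $S(M_k(\bC))$. The correspondence $\omega \leftrightarrow \mu^\omega$ is a weak-$*$ homeomorphism between $S^\pi(\mathcal{A}_0)$ and the probability measures on $S(M_k(\bC))$, the latter being identified with $S(C(S(M_k(\bC))))$; hence affineness and continuity established on one side transfer automatically to the other.

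First I would treat the product case. For $\omega' \in S(M_k(\bC))$ with density matrix $\rho'$ and $\omega=(\omega')^{\otimes\infty}$: since $H=(H_{1/N})_N$ is a continuous cross-section of the commutative bundle $\mathcal{A}^c$ with principal symbol $h_0 \in C(S(M_k(\bC)))$, one has $H_{1/N}=Q_{1/N}(h_0)+O(1/N)$ in norm (cf.\ Appendix \ref{Qss}), and the construction of the symmetric quantization maps yields $(\omega')^{\otimes N}(Q_{1/N}(h_0)) \to h_0(\omega')$ uniformly in $\omega'$; hence $U(\tilde H,\omega)=h_0(\omega')$. The local density matrix is exactly $\rho^\omega_{|_{1/N}}=(\rho')^{\otimes N}$, whose von Neumann entropy is precisely $N\,S(\rho')$, so $S(\omega)=S(\rho')$. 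The mean-field free energy is the indicated linear combination.

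For a generic $\omega\in S^\pi(\mathcal{A}_0)$, de Finetti yields $\omega=\int(\omega')^{\otimes\infty}\,d\mu^\omega(\omega')$, and by restriction $\rho^\omega_{|_{1/N}}=\int(\rho')^{\otimes N}\,d\mu^\omega(\omega')$. For the internal energy I would use linearity in $\mu^\omega$ together with the uniform bound $\|H_{1/N}\| \le C$ (available because $H$ is a continuous cross-section) and dominated convergence to deduce $U(\tilde H,\omega)=\int h_0\,d\mu^\omega$. For the entropy density I would apply the standard concavity/subadditivity sandwich
\[
\int S\bigl((\rho')^{\otimes N}\bigr)\,d\mu^\omega(\omega') \;\leq\; S\bigl(\rho^\omega_{|_{1/N}}\bigr) \;\leq\; \int S\bigl((\rho')^{\otimes N}\bigr)\,d\mu^\omega(\omega') + \mathsf{H}(\mu^\omega),
\]
where $\mathsf{H}(\mu^\omega)$ denotes the classical entropy of the mixing measure. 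Since both integrands equal $N \int S(\rho')\,d\mu^\omega(\omega')$, division by $N$ and passage to the limit (after a standard approximation of $\mu^\omega$ by finitely supported measures, on which $\mathsf{H}$ is finite and $N$-independent) yields $S(\omega)=\int S(\rho')\,d\mu^\omega(\omega')$.

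Finally, each limiting functional is of the form $\mu \mapsto \int g\,d\mu$ with $g\in C(S(M_k(\bC)))$: namely $g=h_0$ for $U$, which is polynomial and hence continuous on the compact set $S(M_k(\bC))$, and $g(\omega')=S(\rho')$ for $S$, which is continuous since the von Neumann entropy is continuous on density matrices in finite dimension. Affineness in $\mu^\omega$ is automatic, and weak-$*$ continuity then follows from that of integration against a fixed continuous function. The main obstacle I anticipate is the entropy limit for $\mu^\omega$ that are not finitely supported, which requires carefully justifying the approximation step (typically via Fannes' inequality combined with a weak-$*$ density argument); the internal-energy and free-energy convergences, by contrast, are essentially routine consequences of the defining properties of the quantization maps $Q_{1/N}$.
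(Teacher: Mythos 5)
Your overall strategy coincides with the paper's: reduce to product states via the quantum de Finetti decomposition, compute the energy and entropy densities there, and transport the result to general permutation-invariant states through the homeomorphism $\omega\leftrightarrow\mu^\omega$, finally observing that both limiting densities are integrals of fixed continuous functions ($h_0$, and $\omega'\mapsto -\mathrm{Tr}(\rho'\log\rho')$) against $\mu^\omega$, from which affineness and weak-$*$ continuity follow. Your internal-energy argument is complete and matches \eqref{meanfieldenergy3}; note that the paper itself delegates the entropy identity \eqref{entropy3} and the continuity/affineness claim to \cite{GRV} without proof.

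The one genuine gap is exactly where you suspect it, and the repair you sketch does not close it. In the sandwich
\begin{align*}
\int S\bigl((\rho')^{\otimes N}\bigr)\,d\mu^\omega \;\le\; S\bigl(\rho^\omega_{|_{1/N}}\bigr)\;\le\;\int S\bigl((\rho')^{\otimes N}\bigr)\,d\mu^\omega + H(\mu^\omega),
\end{align*}
the upper bound is vacuous for non-atomic $\mu^\omega$, and replacing $\mu^\omega$ by a finitely supported discretization $\mu_n$ does not commute with $N\to\infty$: if $\mu_n$ has mesh $\delta$, the trace-norm error between $\int(\rho')^{\otimes N}d\mu^\omega$ and $\int(\rho')^{\otimes N}d\mu_n$ is only $O(N\delta)$, while Fannes' inequality carries the factor $\log\dim = N\log k$; after dividing by $N$ the resulting error is $O(N\delta\log k)$, which does not vanish for fixed $\delta$. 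The standard way to finish is Choquet-theoretic rather than metric. By subadditivity of the von Neumann entropy, $S(\rho^\omega_{|_{1/(N+M)}})\le S(\rho^\omega_{|_{1/N}})+S(\rho^\omega_{|_{1/M}})$, so the mean entropy equals $\inf_N \frac{1}{N}S(\rho^\omega_{|_{1/N}})$ and is therefore weak-$*$ upper semicontinuous (each term being weak-$*$ continuous in finite local dimension); affineness under finite convex combinations follows from your sandwich, where $H$ is finite and $N$-independent; and an affine, bounded, upper semicontinuous function on the metrizable simplex $S^\pi(\mathcal{A}_0)$ satisfies the barycenter formula with respect to the unique de Finetti measure, which is precisely \eqref{entropy3}. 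With that substitution your argument is complete.
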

\begin{proof}

Let us first make the following observation. For product states $\omega^{\otimes N}$ and a local sequence $(A_{1/N})_N$ (the quasi-local case follows from this) defined as in \eqref{localsequence}, the following limit
\begin{align}\label{claslilh01}
    \lim_{N\to\infty}\omega^{\otimes N}(A_{1/N})=:a_0(\omega),
\end{align}
exists and defines a continuous function $a_0\in C(S(M_k(\mathbb{C})))$. Since the product state is permutation-invariant, the same function is obtained by considering its symmetrized form $A_{1/N}:=S_N(A_{1/N})$ (again, the result applies to the quasi-symmetric case as well).

By the hypothesis the state $\omega\in S(\mathcal{A}_0)$ is permutation invariant and thus assumes the form \eqref{limiting state1}. The above observation \eqref{claslilh01} implies that, for any quasi-local sequence $(A_{1/N})_N$ one may view $\omega$ as a state on $C(S(M_k(\mathbb{C})))$, in that
\begin{align}\label{limiting state 2}
    \omega(a_0):=\int_{S(M_k(\mathbb{C}))}d\mu_{\omega}(\omega')a_0(\omega').
\end{align}
To prove the theorem we take $H=(H_{1/N})$ to be a symmetric sequence, so that $H_{1/N}=S_{M,N}(a_{1/M})$ for some $M$ whenever $N\geq M$.  Then
\begin{align}\label{meanfieldenergy3}
   U(\tilde{H},\omega)=\lim_{N\to\infty}\frac{1}{N}\omega_{|_{1/N}}(\tilde{H}_{1/N})=\omega_{|_{1/M}}(H_{1/M})=\omega(h_{0}),
\end{align}
where $\omega(h_0)$ defined through \eqref{limiting state 2}.
For the entropy a similar computation \cite{GRV} shows that
\begin{align}\label{entropy3}
    S(\omega)=\lim_{N\to\infty}-\frac{1}{N}tr_N(\rho^\omega_{|_{1/N}}\log{(\rho^\omega_{|_{1/N}})})=-\int_{S(M_k(\mathbb{C}))}d\mu_{\omega}(\omega')Tr(\rho^{\omega'}\log{\rho^{\omega'}}),
\end{align}
with $\rho^{\omega'}\in M_k(\mathbb{C})$ the density matrix associated to the state $\omega'\in S(M_k(\mathbb{C}))$.
In a similar fashion, for each $\omega\in S(M_k(\mathbb{C}))$ we can define $s_0(\omega):=-Tr(\rho^{\omega}\log{\rho^{\omega}})$. Clearly $s_0\in C(S(M_k(\mathbb{C})))$, so that equation \eqref{entropy3} becomes
\begin{align}
    \lim_{N\to\infty}-\frac{1}{N}tr_N(\rho^\omega_{|_{1/N}}\log{(\rho^\omega_{|_{1/N}})})=\omega(s_0).
\end{align}
Hence,
\begin{align}\label{F0new}
    F^{\beta}(\tilde{H},\omega):=\omega(h_0)-\frac{1}{\beta}\omega(s_0), \ \ (\omega\in S(C(S(M_k(\mathbb{C}))))),
\end{align}
or, similarly
\begin{align}\label{MF-limit1}
    F^\beta(\tilde{H},\omega)=\int_{S(M_k(\mathbb{C}))}d\mu_{\omega}(\omega')\bigg{(}h_0(\omega')-\frac{1}{\beta}s_0(\omega')\bigg{)}.
\end{align}
The fact that the limits define weak-$*$ continuous and affine functionals on $S^\pi(\mathcal{A}_0)$ is standard, as explained e.g. in \cite[Prop 3.2]{GRV}. To conclude, we  observe that $S^\pi(\mathcal{A}_0)$, the convex set of permutation-invariant states on the algebra of equivalence classes of quasi-local sequences, is isomorphic to $S(\mathcal{A}_0^\pi)$, i.e. the state space of the algebra of equivalence classes of quasi-symmetric sequences. As explained in Appendix \ref{Qss}, the algebra $\mathcal{A}_0^\pi$ is isomorphic to $C(S(M_k(\mathbb{C})))$. This completes the proof.
\end{proof}
We have the following result for the mean-field free energy applied to Gibbs states.
\begin{theorem}\label{Gibbsstatelimit}
Let $\beta$ be a fixed inverse temperature and let $H=(H_{1/N})_N$ be a continuous cross-section of the continuous bundle of $C^*-$algebras defined by $\mathcal{A}^{c}$ (cf. Appendix \ref{Qss}). Writing $\tilde{H}_{1/N}=NH_{1/N}$, one has 
\begin{align}
     \frac{1}{N}F_{1/N}^\beta(\tilde{H}_{1/N})\to F^{\beta}(\tilde{H}) \ \text{when} \ N\to\infty,
\end{align}
where
\begin{align}\label{twoequalities}
    F^{\beta}(\tilde{H})=\inf_{\omega\in S(C(S(M_k(\mathbb{C})))}F^{\beta}(\tilde{H},\omega)=\inf_{\rho\in S(M_k(\mathbb{C}))}F^{\beta}(\tilde{H},\delta_\rho),
\end{align}
the function $F^{\beta}(\tilde{H},\cdot)$ defined by \eqref{F0new} and $\delta_\rho$ denotes the state given by point evaluation (i.e. $\delta_\rho(f)=f(\rho))$.
\end{theorem}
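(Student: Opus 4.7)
The plan is to establish the theorem via matching upper and lower bounds on $\tfrac{1}{N}F_{1/N}^\beta(\tilde{H}_{1/N})$, followed by an argument identifying the infimum over all states with that over Dirac measures. As observed at the end of the proof of Proposition \ref{existenceoflimits}, one has the identifications $S^\pi(\mathcal{A}_0) \cong S(\mathcal{A}_0^\pi) \cong S(C(S(M_k(\mathbb{C}))))$, so that $F^\beta(\tilde{H},\cdot)$ is a weak-$*$ continuous affine functional on the compact convex set $S(C(S(M_k(\mathbb{C}))))$. Under this identification, a product state $(\omega')^{\otimes\infty}$ on $\mathcal{A}_0$ corresponds exactly to the Dirac measure $\delta_{\omega'}$.

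For the upper bound, I would fix an arbitrary $\omega \in S^\pi(\mathcal{A}_0)$ and use its restrictions $\omega|_{1/N}$ as trial states in the finite-volume variational problem, obtaining $\tfrac{1}{N}F_{1/N}^\beta(\tilde{H}_{1/N}) \leq \tfrac{1}{N}F_{1/N}^\beta(\tilde{H}_{1/N},\omega|_{1/N})$. Passing to the limit $N\to\infty$ and applying Proposition \ref{existenceoflimits} term by term (energy $\to \omega(h_0)$, entropy per site $\to \omega(s_0)$), the right-hand side converges to $F^\beta(\tilde{H},\omega)$. Since $\omega$ is arbitrary, this gives $\limsup_N \tfrac{1}{N}F_{1/N}^\beta(\tilde{H}_{1/N}) \leq \inf_{\omega \in S(C(S(M_k(\mathbb{C}))))} F^\beta(\tilde{H},\omega)$.

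For the lower bound, I would take the local Gibbs minimizers $\omega_{1/N}^\beta$, which are permutation-invariant since $\tilde{H}_{1/N}$ belongs to the symmetric subalgebra, and extend them by Hahn--Banach to permutation-invariant states $\hat\omega_{1/N}^\beta \in S^\pi(\mathcal{A}_0)$. By weak-$*$ compactness there is a subsequence $\hat\omega_{1/N_j}^\beta \to \omega_0^\beta$ along which $\liminf_N \tfrac{1}{N}F_{1/N}^\beta(\tilde{H}_{1/N})$ is realised. Along this subsequence the energy term $\tfrac{1}{N_j}\omega_{1/N_j}^\beta(\tilde{H}_{1/N_j}) = \hat\omega_{1/N_j}^\beta(H_{1/N_j})$ converges to $\omega_0^\beta(h_0)$ because $(H_{1/N})_N$ is a continuous cross-section with principal symbol $h_0$. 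For the entropy term, I invoke the classical upper semicontinuity of the mean entropy on permutation-invariant states (the standard ingredient in the mean-field literature, cf. \cite{GRV}) to get $\limsup_j \tfrac{1}{\beta N_j} S_{N_j}(\omega_{1/N_j}^\beta) \leq \tfrac{1}{\beta}S(\omega_0^\beta)$. Combining,
\begin{align*}
\liminf_N \tfrac{1}{N}F_{1/N}^\beta(\tilde{H}_{1/N}) \geq \omega_0^\beta(h_0) - \tfrac{1}{\beta}S(\omega_0^\beta) = F^\beta(\tilde{H},\omega_0^\beta) \geq \inf_{\omega} F^\beta(\tilde{H},\omega),
\end{align*}
which together with the upper bound yields convergence to the first infimum in \eqref{twoequalities}.

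The second equality in \eqref{twoequalities} follows from Bauer's maximum principle applied to $-F^\beta(\tilde{H},\cdot)$: an affine weak-$*$ continuous functional on a compact convex set attains its infimum at an extreme point, and the extreme points of $S(C(S(M_k(\mathbb{C}))))$ are precisely the Dirac measures $\delta_\rho$, $\rho \in S(M_k(\mathbb{C}))$. The hard part I anticipate is the entropy half of the lower bound: although the mean entropy is weak-$*$ continuous at each fixed $\omega$ by Proposition \ref{existenceoflimits}, controlling it along a sequence of states whose dependence on $N$ is nontrivial requires the (classical but delicate) upper semicontinuity property of mean entropy for permutation-invariant states, which must be imported from the existing mean-field theory.
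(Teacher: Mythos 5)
Your proposal is correct and follows essentially the same route as the paper: trial states for the upper bound, a weak-$*$ convergent subsequence of the Gibbs states combined with the De Finetti representation and upper semicontinuity of the mean entropy for the lower bound, and Bauer's minimum principle for the second equality in \eqref{twoequalities}. The only substantive difference is that the step you flag as needing to be ``imported'' --- the upper semicontinuity of the mean entropy along the Gibbs sequence --- is exactly the part the paper proves explicitly, via subadditivity of the entropy of the reduced density matrices and Klein's inequality, rather than citing it.
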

\begin{remark}
{\em In view of Theorem \ref{Gibbsstatelimit} we can extract a subsequence of the Gibbs state converging to a probability measure supported on the set of minizers of $F^{\beta}(\tilde{H},\cdot)$. In the case of a unique minimizer we obtain convergence for the whole sequence.
$\blacksquare$\hfill
}
\end{remark}
\begin{proof}[Proof of Proposition \ref{Gibbsstatelimit}]
Let us first prove the first equality in \eqref{twoequalities}. For simplicity we take the continuous-cross section $(H_{1/N})_N$ to be symmetric (for the quasi-symmetric case the result does not change due to the uniform topology in which these sequences are approximated by a symmetric one).
An upper bound for the mean-field free energy is obtained as follows. Given $\omega\in S(M_k(\mathbb{C}))$ we clearly have
\begin{align}
    \frac{1}{N}F_{1/N}^\beta(\tilde{H}_{1/N})\leq \frac{1}{N}F_{1/N}^\beta(\tilde{H}_{1/N},\omega^{\otimes N}).
\end{align}
Applying the limsup to the above inequality yields
\begin{align}\label{inequalitymeanfield}
    \limsup_{N\to\infty}\frac{1}{N}F_{1/N}^\beta(\tilde{H}_{1/N})\leq F^{\beta}(\tilde{H},\omega^{\otimes\infty}),
\end{align}
where we have used Proposition \ref{existenceoflimits} applied to the permutation-invariant state $\omega^{\otimes \infty}$. In particular,
\begin{align}\label{boundupper}
    \limsup_{N\to\infty}\frac{1}{N}F_{1/N}^\beta(\tilde{H}_{1/N})\leq F^{\beta}(\tilde{H}).
\end{align} 
For the lower bound we proceed in a similar way as in Proposition \ref{existenceoflimits} considering a fixed convergent subsequence  $\omega_{1/{N_j}}^\beta\subset\omega_{1/{N}}^\beta$ of the ensuing Gibbs state induced by $\tilde{H}_{1/N}$ as explained in the text around \eqref{limitpointvar}.
It follows that
\begin{align}\label{internal energy inequality}
&\lim_{j\to\infty}\frac{1}{N_j}\omega_{1/N_j}^\beta(\tilde{H}_{1/N_j})=\int_{S(M_k(\mathbb{C}))}d\mu_{0}^\beta(\omega')h_0(\omega')=\omega_0^\beta(h_0),
\end{align}
where $\omega_0^\beta$ is the relevant limit point and $h_0$ is defined through $\eqref{claslilh01}$. 
In order to estimate the (minus) entropy term we use the sub-additivity property 
\begin{align}\label{inequality for entropy}
   Tr[\rho_{1/{N_j}}^\beta\log{\rho_{1/{N_j}}^\beta}]\geq\floor*{\frac{N_j}{n}}Tr[\rho_{1/{N_j}}^{(n)}\log{\rho_{1/{N_j}}^{(n)}}] + Tr[\rho_{1/{N_j}}^{(N_j-n\floor*{\frac{N_j}{n}})}\log{\rho_{1/N_j}^{(N_j-n\floor*{\frac{N_j}{n}})}}],
\end{align}
with $\floor{x}$ the floor function and $\rho_{1/N_j}^{(n)}$ is the $n^{th}-$particle reduced density matrix associated with $\tilde{H}_{1/N_j}$.\footnote{Here we have omitted the very unreadable superscript $\beta$ in $\rho_{1/{N_j}}^{\beta,(n)}$.} For any density matrix $\sigma\in M_k(\mathbb{C})$, we can estimate the second term in \eqref{inequality for entropy} in the following manner
\begin{align}\label{secondestimate}
    &Tr\bigg{[}\rho_{1/N_j}^{(N_j-n\floor*{\frac{N_j}{n}})}\log{\rho_{1/N_j}^{(N_j-n\floor*{\frac{N_j}{n}})}}\bigg{]}\nonumber\\&=Tr\bigg{[}\rho_{1/N_j}^{(N_j-n\floor*{\frac{N_j}{n}})}\bigg{(}\log{\rho_{1/N_j}^{(N_j-n\floor*{\frac{N_j}{n})}}}-\log{\sigma^{\otimes (N_j-n\floor*{\frac{N_j}{n}})}\bigg{)}}\bigg{]} \nonumber\\&+
    Tr\bigg{[}\rho_{1/N_j}^{(N_j-n\floor*{\frac{N_j}{n}})}\log{\sigma^{\otimes (N_j-n\floor*{\frac{N_j}{n}})}}\bigg{]}\nonumber\\&\geq
    \bigg{(}N_j-n\floor*{\frac{N_j}{n}}\bigg{)}Tr\bigg{[}\rho_{1/N_j}^{(1)}\log{\sigma}\bigg{]},
\end{align}
using the basic properties of the partial trace and the fact that for any density matrices $\sigma_1,\sigma_2$ 
\begin{align}
    Tr[\sigma_1(\log{\sigma_1}-\log{\sigma_2})]\geq 0,
\end{align}
as a result of Klein's inequality.
If we take $\sigma$ to be a density matrix of the form $e^{a}/Tr[e^{a}]$ where $a\geq 0$, it follows that $Tr\bigg{[}\rho_{1/N_j}^{(1)}\log{\sigma}\bigg{]}$ is bounded from below independently of $N$. Consequently,
\begin{align}\label{entropy inequality}
     \liminf_{j\to\infty}\frac{1}{N_j}Tr[\rho_{1/N_j}^\beta\log{\rho_{1/N_j}^\beta}]\geq\frac{1}{n}Tr[\rho_0^{(n)}\log{\rho_0^{(n)}}],
\end{align}
for all $n\in\mathbb{N}$. By a similar argument as in the proof in Proposition \ref{existenceoflimits} we obtain  
\begin{align}\label{entropy inequality1}
    \liminf_{n\to\infty}-\frac{1}{n}Tr[\rho_0^{(n)}\log{\rho_0^{(n)}}]=\int_{S(M_k(\mathbb{C}))}Tr(\rho^{\omega'}\log{\rho^{\omega'}})d\mu_{0}^\beta(\omega')=\omega_0^\beta(s_0),
\end{align}
where $s_0$ is defined in Proposition \ref{existenceoflimits}.
In view of \eqref{internal energy inequality}--\eqref{entropy inequality1} it now follows that
\begin{align*}
    &\liminf_{j\to\infty}\frac{1}{N_j}F_{1/N_j}^\beta(\tilde{H}_{1/N_j})\geq
    \int_{S(M_k(\mathbb{C}))}d\mu_{0}^\beta(\omega')\bigg{(}h_0(\omega')+\frac{1}{\beta}Tr(\rho^{\omega'}\log{\rho^{\omega'}})\bigg{)}=F^{\beta}(\tilde{H},\omega_0^\beta). 
\end{align*}
This and the already obtained upper bound \eqref{boundupper} in particular show that
\begin{align}\label{finalthing}
    \lim_{j\to\infty}\frac{1}{N_j}F_{1/N_j}^\beta(\tilde{H}_{1/N_j})=F^{\beta}(\tilde{H}).
\end{align}
Moreover, it may be clear that \eqref{finalthing} holds for any convergent subsequence. Hence,
$$\lim_{N\to\infty}\frac{1}{N}F_{1/N}^\beta(\tilde{H}_{1/N},\omega_{1/N}^\beta)=F^{\beta}(\tilde{H}).$$
This proves the first equality of \eqref{twoequalities}. 

For the second equality in \eqref{twoequalities} we observe that the infimum of  $F^{\beta}(\tilde{H},\cdot)$ is attained in the extreme boundary of $S(C(S(M_k(\mathbb{C}))))$ as a result of Bauer minimum principle, which is applicable since $F^{\beta}(\tilde{H},\cdot)$ is weak-$*$ continuous and affine. The proof is concluded by observing that $\partial_e S(C(S(M_k(\mathbb{C}))))\cong S(M_k(\mathbb{C}))$, under the isomorphism $\delta_\rho(f)=f(\rho)$.
\end{proof}
\begin{remark}
{\em 
The mean-field free energy functional, hence well-defined for all permutation invariant states on the quasi-local algebra, can therefore be recasted as functional on $S(M_k(\mathbb{C}))$. Since $S(M_k(\mathbb{C}))$  admits a Poisson structure \cite{LMV} this perfectly resembles the idea of the classical limit introduced in $\S$\ref{par:claslim}.
$\blacksquare$\hfill
}
\end{remark}

To conclude this section we stress that the minimizers $\rho$ of the mean-field free energy functional obtained by Theorem \ref{Gibbsstatelimit} are solutions of the so-called {\em Gap-equation} \cite{NDRW}
\begin{align}
    \rho(A)=\frac{Tr(e^{-\beta \tilde{H}_{eff}(\rho)}A)}{Tr(e^{-\beta \tilde{H}_{eff}(\rho)})}, \ \ (A\in M_k(\mathbb{C})),
\end{align}
where $\tilde{H}_{eff}$ is the effective one-particle Hamiltonian which depends on $\rho$. Hence, each minimizer corresponds to a Gibbs state. Of course, each of them satisfies the KMS condition on $M_k(\mathbb{C})$, but for an a priori different $\rho-$dependent dynamics.\footnote{This relies on the fact that due to the long-range interactions inherent to mean-field models, no global (uniform) dynamics can be defined.} This is somewhat unsatisfactory in the algebraic formulation of phase transitions for these models where one usually wishes to investigate the presence of multiple KMS states for a {\em given} dynamics. A correct characterization of a ``classical'' KMS condition on $S(M_k(\mathbb{C}))$  therefore remains to be settled.

\subsection{The classical limit of Gibbs states induced by mean-field theories}\label{The classical of Gibbs states induced by mean field theories}

We discuss the existence of the classical limit of the Gibbs state associated to a general mean-field quantum spin Hamiltonian. As opposed to Section \ref{claslimsymplecticmanifolds} where the classical limit corresponded to the unique Gibbs state satisfying the classical KMS condition (Proposition \ref{generalcase1}), the classical limit of the Gibbs state associated with a mean-field Hamiltonian may not even exist.
To illustrate this we make the following observations. Consider two continuous cross-sections $H=(H_{1/N})_N$ and $A=(A_{1/N})_N$ of the continuous $C^*-$bundle $\mathcal{A}^c$ (cf. Appendix \ref{Qss}), fix a real parameter $t$ and consider the sum $H+tA$. As seen before the following mean-field free energy 
\begin{align*}
    F^\beta(\tilde{H},\omega,t):=U(\tilde{H}+t\tilde{A},\omega)-\frac{1}{\beta}S(\omega)=U(\tilde{H},\omega)+tU(\tilde{A},\omega)-\frac{1}{\beta}S(\omega),
\end{align*}
exists for permutation invariant states as a result of Proposition \ref{existenceoflimits}. For any fixed $t\in\bR$ it can be shown that the set of minimizers of $F^\beta(\tilde{H},\cdot,t)$ is non-empty.

We now recall the following fact. For  a concave, continuous function $G$  over a normed space $X$, an element $r_x\in X^*$ is said to be a tangent functional to the graph of $G$ at $x$ if
\begin{align}
G(x + \xi)\leq G(x) + r_x(\xi)
\end{align}
for all $\xi\in X$. Moreover, there is a unique tangent functional to the graph of $G$ at $x$, if and only if $G$ is differentiable at $x$ and in this case the tangent functional is the
derivative of $G$ at $x$. Thus, differentiability  of $G$  at $x$ is equivalent to uniqueness of  the  tangent functional at $x$.

The idea is to apply this to the mean-field free energy functional $F^\beta(\tilde{H},t):= \inf_{\omega}\{F^\beta(\tilde{H},\omega,t)$. We have the following result.
\begin{lemma}
The function $t\mapsto F^\beta(\tilde{H},t):=\inf_{\omega}\{F^\beta(\tilde{H},\omega,t)\}$ is concave and  continuous.
\end{lemma}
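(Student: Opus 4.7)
The plan is to leverage the fact that, for each fixed permutation-invariant state $\omega$, the map $t\mapsto F^\beta(\tilde{H},\omega,t)$ is \emph{affine} in $t$. Indeed, by the very definition
\[
F^\beta(\tilde{H},\omega,t)=U(\tilde{H},\omega)-\tfrac{1}{\beta}S(\omega)+t\,U(\tilde{A},\omega),
\]
only the last term depends on $t$ and it does so linearly. Consequently, $F^\beta(\tilde{H},\cdot)$ is the pointwise infimum of a family of affine functions of $t$, which is automatically concave on $\mathbb{R}$. This disposes of concavity without any further work.

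For continuity, the strategy is to invoke the classical fact that a real-valued concave function which is finite on an open interval is automatically continuous on the interior of that interval; so it suffices to prove that $F^\beta(\tilde{H},t)$ is finite for every $t\in\mathbb{R}$. Using Proposition \ref{existenceoflimits}, for any permutation-invariant $\omega$ we may write $U(\tilde{H},\omega)=\omega(h_0)$, $U(\tilde{A},\omega)=\omega(a_0)$ and $S(\omega)=\omega(s_0)$, where $h_0,a_0,s_0\in C(S(M_k(\mathbb{C})))$ are the principal symbols and the entropy function $s_0(\rho)=-\mathrm{Tr}(\rho\log\rho)$, respectively. Since each $\omega$ corresponds to a probability measure on the compact set $S(M_k(\mathbb{C}))$ and $0\leq s_0\leq \log k$, one obtains the two-sided bound
\[
-\|h_0\|_\infty-|t|\,\|a_0\|_\infty-\tfrac{\log k}{\beta}\;\leq\; F^\beta(\tilde{H},\omega,t)\;\leq\; \|h_0\|_\infty+|t|\,\|a_0\|_\infty,
\]
uniformly in $\omega$. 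Taking the infimum over $\omega$ yields $|F^\beta(\tilde{H},t)|<\infty$ for all $t$, and combined with concavity this gives continuity on all of $\mathbb{R}$.

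The only step that requires a bit of care is the uniform bound on the entropy term, which relies on interpreting $S(\cdot)$ through its realization as $\omega(s_0)$ provided by Proposition \ref{existenceoflimits}; without this reformulation it is not obvious a priori that the infimum is not $-\infty$. I do not expect any real obstacle: once concavity is seen to be automatic, the continuity part reduces to the standard convex-analytic fact recalled above.
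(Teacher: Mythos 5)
Your proof is correct and follows essentially the same route as the paper: concavity comes from $t\mapsto F^\beta(\tilde{H},\omega,t)$ being affine for each fixed $\omega$ so that the pointwise infimum is concave, and continuity is reduced to compactness of $S(M_k(\mathbb{C}))$. The paper dismisses the continuity step as ``standard arguments,'' and your explicit version -- uniform two-sided bounds via $\omega(h_0)$, $\omega(a_0)$, $0\leq s_0\leq\log k$, hence finiteness everywhere, hence continuity of a finite concave function on $\mathbb{R}$ -- is a correct and welcome filling-in of that gap.
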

\begin{proof}
These are mainly standard arguments based on the affinity of the map $t\mapsto F^\beta(\tilde{H},\omega,t)$. Let $R(t)=F^\beta(\tilde{H},t)$.   Given $\alpha\in[0,1]$ and $t,s\in\bR$ we may thus estimate
\begin{align}
&R(\alpha t+(1-\alpha)s)\geq
\alpha \inf_{\omega}\{F^\beta(\tilde{H},\omega,t)\}+(1-\alpha) \inf_{\omega}\{F^\beta(\tilde{H},\omega,s)\}\nonumber\\&=\alpha R(t)+ (1-\alpha)R(s).
\end{align}
Hence, $R$ is concave. Continuity of $R(t)$ again follows from standard arguments and the fact that $S(M_k(\mathbb{C}))$ is compact.
\end{proof}
In view the previous discussion differentiability of $R$ would be equivalent to uniqueness of the tangent functional associated to the graph of $R$ at $t$. We see below that this condition is sufficient for the existence of the classical limit of the Gibbs state.

To this avail, we let $H=(H_{1/N})_N$ be a continuous cross-section of the $C^*-$bundle $\mathcal{A}^{c}$, and consider the Gibbs state induced by the local Hamiltonians $\tilde{H}_{1/N}=NH_{1/N}$. The idea is to evaluate this state on those $A_{1/N}$ for which $A=(A_{1/N})_N$ is a continuous cross-section of the same bundle $\mathcal{A}^{c}$.
In view of the proof of Proposition \ref{generalcase1} for a real parameter $t\in \bR$ we obtain the following inequalities
\begin{align}\label{Ruskaiineq2}
&\frac{\log{(Tr[e^{-\beta(\tilde{H}_{1/N})}])} -\log{(Tr[e^{-\beta(\tilde{H}_{1/N}-t\tilde{A}_{1/N})}])}}{\beta N t}\geq\omega_{1/N}^\beta(A_{1/N})\nonumber\\&\geq \frac{\log{(Tr[e^{-\beta(\tilde{H}_{1/N}+t\tilde{A}_{1/N})}])} -\log{(Tr[e^{-\beta(\tilde{H}_{1/N})}])}}{\beta N t},
\end{align}
where $\tilde{A}_{1/N}=NA_{1/N}$. Analogous to \eqref{finite free energy}, for $t>0$ the local free energy associated to $\tilde{H}_{1/N}+t\tilde{A}_{1/N}$ assumes the form
\begin{align}
F_{1/N}^\beta(\tilde{H}_{1/N},t)=-\frac{1}{\beta}\log{(Tr[e^{-\beta(\tilde{H}_{1/N}+t\tilde{A}_{1/N})}])},
\end{align}
so that \eqref{Ruskaiineq2} reads
\begin{align*}
&\frac{\frac{1}{N}F_{1/N}^\beta(\tilde{H}_{1/N},0)-\frac{1}{N}F_{1/N}^\beta(\tilde{H}_{1/N},-t)}{t}\geq \omega_{1/N}^\beta(A_{1/N})\geq\nonumber\\& \frac{\frac{1}{N}F_{1/N}^\beta(\tilde{H}_{1/N},t)-\frac{1}{N}F_{1/N}^\beta(\tilde{H}_{1/N},0)}{t}.
\end{align*}
By Theorem \ref{Gibbsstatelimit} it follows that
\begin{align*}
&\limsup_{N\to\infty}\omega_{1/N}^\beta(A_{1/N})\leq \frac{F^\beta(\tilde{H},0)-F^\beta(\tilde{H},-t)}{t};\\ 
&\liminf_{N\to\infty}\omega_{1/N}^\beta(A_{1/N})\geq\frac{F^\beta(\tilde{H},t)-F^\beta(\tilde{H},0)}{t}.
\end{align*}
Moreover, since $F^\beta(\tilde{H},t)$ is concave in $t$ the left and right derivative both exist
\begin{align*}
&F_-^\beta:=\lim_{t\to 0^+}\frac{F^\beta(\tilde{H},0)-F^\beta(\tilde{H},-t)}{t};\\
&F_+^\beta:=\lim_{t\to 0^+}\frac{F^\beta(\tilde{H},t)-F(\tilde{H},0)}{t}.
\end{align*}
If they are equal the function $F^\beta(\tilde{H},\cdot)$ is differentiable at $t=0$ and one finds $\frac{d}{dt}|_{t=0}F^\beta(\tilde{H},t)=\lim_{N\to\infty}\omega_{1/N}^\beta(A_{1/N})$ (cf. Thm. \ref{generalcase1}), which for macroscopic observables is defined by \eqref{limiting state 2}.\footnote{Notice that the same statement is true for quasi-local sequences $A=(A_{1/N})_N$, as the local Gibbs state is permutation-invariant for each $N$, and so are its limit points.} 
Of course, if the the minimizer for the mean-field free energy is unique then it is clear that the classical limit exists, but this in general does not hold. 
In any case, the following result is automatic due to the above comments. 

\begin{corollary}
The classical limit of the sequence of local Gibbs states exists (cf. \eqref{claslim1}) if the mean-field free energy $\mathbb{R}\ni t\mapsto F^\beta(\tilde{H},t)$ is differentiable at $t=0$.
\hfill $\blacksquare$
\end{corollary}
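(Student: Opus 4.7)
The plan is to read off the corollary directly from the chain of inequalities established immediately before its statement, using the hypothesis of differentiability to collapse the sandwich. Essentially all of the heavy lifting has already been done: the Peierls--Bogolyubov-type inequalities \eqref{Ruskaiineq2} together with Theorem \ref{Gibbsstatelimit} (applied to both the reference sequence $H=(H_{1/N})_N$ and the perturbed sequences $H\pm tA$, which are themselves continuous cross-sections of $\mathcal{A}^c$) produce two bracketing difference quotients of the mean-field free energy. What remains is to let $t\to 0^+$ and argue that under differentiability the bracket closes.

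More precisely, I would first fix an arbitrary continuous cross-section $A=(A_{1/N})_N$ of $\mathcal{A}^c$ (this is the sequence on which the classical limit is to be evaluated) and invoke the two inequalities
\begin{align*}
\limsup_{N\to\infty}\omega_{1/N}^\beta(A_{1/N})&\leq \frac{F^\beta(\tilde{H},0)-F^\beta(\tilde{H},-t)}{t},\\
\liminf_{N\to\infty}\omega_{1/N}^\beta(A_{1/N})&\geq \frac{F^\beta(\tilde{H},t)-F^\beta(\tilde{H},0)}{t},
\end{align*}
valid for every $t>0$, as derived in the main text. Next, I would observe that concavity of $t\mapsto F^\beta(\tilde{H},t)$ (established in the lemma above the corollary) guarantees that the one-sided derivatives $F_+^\beta$ and $F_-^\beta$ exist at $t=0$ and satisfy $F_+^\beta\leq F_-^\beta$, and that the hypothesized differentiability at $t=0$ is precisely the statement $F_+^\beta=F_-^\beta=:F'(0)$.

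Taking $t\to 0^+$ in the displayed inequalities, the upper bound tends to $F_-^\beta$ and the lower bound tends to $F_+^\beta$. Under differentiability these two limits coincide with $F'(0)$, so the sandwich forces
\begin{align*}
\lim_{N\to\infty}\omega_{1/N}^\beta(A_{1/N})=\tfrac{d}{dt}\big|_{t=0}F^\beta(\tilde{H},t),
\end{align*}
and this defines a well-posed classical limit in the sense of \eqref{claslim1} for the Berezin-type quantization map $Q_{1/N}:C(S(M_k(\mathbb{C})))\to\mathcal{A}_{1/N}$ associated with the continuous bundle $\mathcal{A}^c$ (the classical observable being the principal symbol $a_0$ of the cross-section $A$, cf.\ \eqref{claslilh01}).

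There is no real obstacle here, since all nontrivial analytic ingredients---existence of the thermodynamic limit of the free energy (Theorem \ref{Gibbsstatelimit}), concavity and continuity of $t\mapsto F^\beta(\tilde{H},t)$, and the operator Peierls--Bogolyubov inequality used to derive \eqref{Ruskaiineq2}---are already in place. The only delicate point is conceptual rather than technical, namely making sure that the perturbed sequences $(H_{1/N}+tA_{1/N})_N$ are themselves continuous cross-sections of $\mathcal{A}^c$ so that Theorem \ref{Gibbsstatelimit} is genuinely applicable to $F^\beta(\tilde{H},\pm t)$; this follows immediately from the fact that the continuous cross-sections form a $C^*$-module closed under real-linear combinations. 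Once this is noted the corollary is essentially a one-line consequence of the derivation preceding it.
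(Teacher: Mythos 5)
Your proposal is correct and follows essentially the same route as the paper: the paper's own ``proof'' is precisely the preceding derivation of the two bracketing difference quotients from the Peierls--Bogolyubov inequalities together with Theorem \ref{Gibbsstatelimit}, followed by the observation that concavity yields the one-sided derivatives and differentiability collapses the sandwich. Your additional remark that $H+tA$ must itself be a continuous cross-section of $\mathcal{A}^c$ is a point the paper uses implicitly, and you justify it correctly.
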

In more complex physical situations, the classical limit typically does not exist. Nevertheless, we have seen in Theorem \ref{Gibbsstatelimit} that in such a cases one  is still allowed to study the possible weak-$*$ limit points of the sequence of Gibbs states.

\begin{remark}
{\em 
We would like to point out to the reader that the main difference with the proof of Prop. \ref{quantum Gibbs} is a different scaling in the free energy, which is due to the fact that a one-particle system is considered there. In that case, the analog of the limiting free energy is just the function \eqref{FHC}, which is obviously differentiable and therefore ensures the existence of the classical limit of the relevant Gibbs state. For mean-field quantum spin systems, the scaling is different due to the many-body character; the pertinent mean-field free energy is not necessarily differentiable anymore as a function of $t$. Consequently, the limiting Gibbs state may not exist. In addition, no coherent state are available on a Poisson manifold. Therefore, the conditions of Assumption \ref{Assumption} do  not apply to mean-field systems, despite that fact both the upper and lower symbols will still  converge to the principal symbol, which is always a polynomial in $k^2-1$ real variables.
\hfill $\blacksquare$
}
\end{remark}

\subsection{Mean-field models with symmetry}\label{symmtric case}
In this section we examine another special class of local Gibbs states and study the possible convergence to some probability measure on $S(M_k(\mathbb{C}))$. More specifically, we consider mean-field theories for which there is an additional symmetry implemented by a Lie group $G$ (assumed to be compact or discrete). In other words, we assume the existence of a group action on the manifold $S(M_k(\mathbb{C}))$. For any $g\in G$ we denote the pullback of the action of $G$ on functions $f:S(M_k(\mathbb{C}))\to\mathbb{C}$ by $\zeta_g$, i.e
\begin{align}\label{inducedGaction}
   \zeta_g(f)(x_1,\cdots,x_{k^2-1})=f(g^{-1}\cdot(x_1,\cdots,x_{k^2-1})).
\end{align}
Now, if the associated principal symbol $\tilde{h}_0$ is invariant under the induced $G$ action defined by \eqref{inducedGaction}, it can be shown that there is a unitary representation $g\mapsto U_g$ such that 
\begin{align}
    \tilde{h}_0(T_0,T_1,\cdot\cdot\cdot, T_{k^2-1})=\tilde{h}_0(T_0,U_gT_1U_g^*,\cdot\cdot\cdot, U_gT_{k^2-1}U_g^*),
\end{align}
and hence the sequence of local mean-field Hamiltonians is $G$-invariant, and as seen before, this transfers to the ensuing local Gibbs state and  each possible limit point.  We have the following result on Gibbs states which once again encompasses the interplay between the quantization map and the classical limit.

\begin{proposition}
Let $\tilde{H}_{1/N}$ define a homogeneous mean-field theory with compact or discrete Lie group $G$ such that its principal symbol $\tilde{h}_0$ is $G-$invariant under the map \eqref{inducedGaction}. Then each limit point of the ensuing sequence $(\omega_{1/N}^{\beta})_N$ of local Gibbs states gives a $G$-invariant probability measure  $\mu_0^{\beta}$ whose support is concentrated on some $G$-orbit in $S(M_k(\mathbb{C}))$. If there is only one such orbit, the classical limit exists, in that
\begin{align}
\lim_{N\to\infty}\bigg{|}\omega_{1/N}^\beta(Q_{1/N}(f))-\omega_0^{\beta}(f)\bigg{|}=0, \ \ (f\in C(S(M_k(\mathbb{C}))),
\end{align}
where $\omega_0^{\beta}\in S(C(S(M_k(\mathbb{C}))))$ is identified with the unique $G$-invariant probability measure   $\mu_0^{\beta}$ and $Q_{1/N}$ denotes the quantization map defined through \eqref{deformationquantization1}--\eqref{deformationquantization2}.
\end{proposition}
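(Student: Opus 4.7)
The plan is to combine the symmetry of the principal symbol with the characterization of limit points of the Gibbs states provided by Theorem \ref{Gibbsstatelimit} and the remark following it.

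First I would establish $G$-invariance at finite $N$. By hypothesis, the principal symbol $\tilde{h}_0$ is invariant under the $G$-action \eqref{inducedGaction}, and this invariance lifts to the quantum Hamiltonian: there exists a unitary representation $g \mapsto U_g$ on $\mathbb{C}^k$ such that conjugation by $U_g^{\otimes N}$ leaves $\tilde{H}_{1/N}$ invariant up to terms of order $1/N$ (or exactly invariant if $\tilde{h}$ in \eqref{meanfield} is itself $G$-invariant, not only its principal symbol). Since the trace is unitary-invariant, the local Gibbs state $\omega_{1/N}^\beta$ is then $G$-invariant under the induced action on $\gA^{\otimes N}$. Extending $\omega_{1/N}^\beta$ to the quasi-local algebra $\mathcal{A}_0$ by Hahn--Banach as in \eqref{limitpointvar} and passing to a weak-$*$ convergent subsequence, one obtains a permutation-invariant state $\omega_0^\beta$ which, by the quantum de Finetti theorem \eqref{limiting state1}, corresponds to a probability measure $\mu_0^\beta$ on $S(M_k(\mathbb{C}))$. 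The $G$-invariance at each finite $N$, together with weak-$*$ continuity of the $G$-action, transfers to $\omega_0^\beta$ and hence to $\mu_0^\beta$.

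Next I would identify the support of $\mu_0^\beta$. By Theorem \ref{Gibbsstatelimit} and the remark following it, $\mu_0^\beta$ is supported on the set $\mathcal{M}$ of minimizers of $\rho \mapsto F^\beta(\tilde{H}, \delta_\rho)$ on $S(M_k(\mathbb{C}))$. By \eqref{F0new} this functional is the sum of the principal symbol $h_0$ (assumed $G$-invariant) and the von Neumann entropy $-\mathrm{Tr}(\rho \log \rho)$ which is $G$-invariant under conjugation by any unitary. Thus $\mathcal{M}$ is $G$-invariant, i.e.\ a union of $G$-orbits. Since $\mu_0^\beta$ is itself $G$-invariant and supported in $\mathcal{M}$, its support is indeed a union of $G$-orbits; this is the first statement of the proposition.

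Under the uniqueness hypothesis that $\mathcal{M}$ consists of a single $G$-orbit $\mathcal{O}$, every weak-$*$ limit point $\mu_0^\beta$ is a $G$-invariant probability measure concentrated on $\mathcal{O}$. Because $G$ acts transitively on $\mathcal{O}$ and $G$ is compact or discrete, there is a unique such measure, namely the pushforward of normalized Haar measure (or counting measure, in the discrete case) under the orbit map $G/G_\rho \to \mathcal{O}$, where $G_\rho$ is the stabilizer of any $\rho \in \mathcal{O}$. Consequently all subsequential limit points coincide, so the full sequence $(\omega_{1/N}^\beta)_N$ converges weak-$*$ to a unique state $\omega_0^\beta \in S(C(S(M_k(\mathbb{C}))))$. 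To convert this into the stated classical-limit statement, I would use the compatibility of the quantization maps $Q_{1/N}$ with the continuous bundle structure of $\mathcal{A}^c$ described in Appendix \ref{Qss}: for any continuous $f$ on $S(M_k(\mathbb{C}))$ the assignment $1/N \mapsto Q_{1/N}(f)$ defines a continuous cross-section with classical value $f$, so $\omega_{1/N}^\beta(Q_{1/N}(f)) \to \omega_0^\beta(f)$ by exactly the weak-$*$ convergence established above.

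The main obstacle is the symmetry-lifting step: ensuring that the $G$-symmetry of the polynomial $\tilde{h}_0$ on $S(M_k(\mathbb{C}))$ actually implements as a genuine unitary symmetry of $\tilde{H}_{1/N}$ at the operator level, so that the finite-$N$ Gibbs state is truly $G$-invariant. Once that is in place, the remaining ingredients---the de Finetti representation, the variational characterization of limit points via Theorem \ref{Gibbsstatelimit}, and uniqueness of invariant measures on a single transitive orbit---combine cleanly.
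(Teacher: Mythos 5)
Your proposal is correct in substance and follows the route the paper itself has in mind: the paper offers no argument at all, delegating everything to \cite[Prop.~2.9]{GRV} ``combined with the previous results,'' and your reconstruction --- finite-$N$ invariance of the Gibbs states, the de Finetti representation \eqref{limiting state1}, the variational characterization of limit points from Theorem~\ref{Gibbsstatelimit}, and uniqueness of the $G$-invariant probability measure on a single transitive orbit --- is precisely the content being cited. Two remarks. First, the obstacle you flag (lifting the symmetry of $\tilde{h}_0$ from $S(M_k(\bC))$ to the operator level) is resolved by equivariance of the quantization maps: since $Q_{1/N}$ in \eqref{deformationquantization1}--\eqref{deformationquantization2} is built from symmetrized elementary tensors, it is multilinear and symmetric in the $T_{j_i}$ and hence factors through polynomials in commuting variables, so $Q_{1/N}(\zeta_g(f))=U_g^{\otimes N}Q_{1/N}(f)U_g^{*\otimes N}$ and $G$-invariance of $\tilde{h}_0$ gives \emph{exact} unitary invariance of $NQ_{1/N}(\tilde{h}_0)$ and of the finite-$N$ Gibbs states. (Your caution is warranted if $\tilde{H}_{1/N}$ carries non-invariant lower-order corrections: after scaling by $N$ the conjugation error is only $O(1)$ in norm, which controls the free energy density but not obviously the limit points of the states themselves; one should therefore take $\tilde{H}_{1/N}=NQ_{1/N}(\tilde{h}_0)$ or assume the full symbol is invariant, as the paper implicitly does.) Second, your argument honestly yields that the support of each limit point is a $G$-invariant subset of the minimizer set of \eqref{F0new}, i.e.\ a \emph{union} of $G$-orbits, not necessarily a single orbit as the proposition's first sentence asserts; this discrepancy is harmless for the second and main claim, where the minimizer set is assumed to consist of a single orbit and your transitivity argument identifies the unique invariant measure, forces all subsequential limits to coincide, and converts weak-$*$ convergence into the stated limit via the cross-section property of $1/N\mapsto Q_{1/N}(f)$.
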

The proof is a direct consequence of \cite[Prop 2.9]{GRV} combined with the previous results and is therefore omitted.

\subsection{The classical limit in the limit of large spin quantum number}\label{spinnumbernew}
In the final part we discuss the classical limit for local Gibbs states in the regime of large spin quantum number. To this avail the relevant manifold is the two-sphere $S^2$ which can be shown to admit a coherent pure state quantization (cf. Appendix \ref{cpsqS}) with ensuing Berezin quantization maps defined by \eqref{defquan3}, satisfying the conditions of a strict deformation quantization in the sense of Definition \ref{def:deformationq} as well. The pertinent Hilbert space $\mathcal{H}_N$ has dimension $N+1$. 
Due to the uniform boundedness in $N$ the limit of large number of particles is physically not that interesting. In the same spirit as for mean-field theories the idea is to scale the sections with a factor $N$ and to consider the ensuing operators $\tilde{H}_{1/N}:=NH_{1/N}$. To see what these physically represent we make the following observation.

The Hilbert space $\mathcal{H}_N$ can be realized as the symmetrized tensor product of $\bigotimes^N\mathbb{C}^2$ used to describe permutation invariant systems of $N$ qubits. This is however not the only interpretation as already noticed by E. Majorana \cite{Mar}. Indeed, he observed that a permutation invariant system of $N$ qubits can be identified with an arbitrary pure quantum state of dimension $2J$, where $J$ is the {\bf spin quantum number} defined by $J:=N/2$. Therefore, the limit $N\to\infty$ may also be interpreted as a classical limit in the spin quantum number $J$ of a single quantum spin system. It is the latter understanding which gives physical significance to the operators $\tilde{H}_{1/N}$. 
Indeed, by a result obtained by Lieb \cite{Lieb} one has a correspondence between operators $A$ on $\mathbb{C}^{N+1}$ and continuous functions $f\in C(S^2)$ (upper symbols) such that $A=Q_{1/N}^B(f)$. In the particular case where $A$ denotes a quantum spin operator, the functions $f$ can be determined (see Table \ref{TABLE} below).
\begin{table}[ht]
\begin{center}
\begin{tabular}
{ |p{3cm}||p{9cm}|p{3cm}|p{3cm}|}
 \hline
 $\text{Spin operator}$ & $f(\theta,\phi)$ \\
 \hline
 $S_3$   &  $\frac{1}{2}(N+2)\cos{(\theta)}$\\
 $S_1$ & $\frac{1}{2}(N+2)\sin{(\theta)}\cos{(\phi)}$ \\
 $S_2$ & $\frac{1}{2}(N+2)\sin{(\theta)}\sin{(\phi)}$ \\
 \hline
\end{tabular}
\caption{Quantum spin operators on $\mathbb{C}^{N+1}$ and their corresponding upper symbols $G$ in spherical coordinates ($\theta\in (0,\pi)$, $\phi\in (0,2\pi)$).}
\label{TABLE}
\end{center}
\end{table}
From this table one directly observes that, analogous to the mean-field Hamiltonians, the spin operators $S_\mu$ ($\mu=1,2,3$) become unbounded as $N\to\infty$. It is furthermore clear that the scaled operators $\frac{1}{N}S_{\mu}$ are asymptotically norm-equivalent to the operators $Q_{1/N}^B(x_\mu)$, where the $x_\mu$ denote the standard coordinates on the sphere. Inspired by this idea we obtain the following result on the classical limit in the regime of large spin quantum numbers.

\begin{proposition}\label{spinnumber}
Let $H=(H_{1/N})_N$ be a continuous cross-section of the continuous bundle of $C^*-$algebras defined in Appendix \ref{cpsqS}. Consider $\tilde{H}_{1/N}=NH_{1/N}$. Then, there is a  $h_0\in C(S^2)$ such that 
\begin{align}
     \frac{1}{N}F_{1/N}^\beta(\tilde{H}_{1/N})\to\inf_{\Omega\in S^2}{h_0(\Omega)}, \ \text{when} \ N\to\infty.
\end{align}

\end{proposition}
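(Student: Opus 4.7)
The candidate limit function is $h_0:=\pi_0(H)\in C(S^2)$, the $\hbar=0$ evaluation of the continuous section $H$; continuity of $H$ in the bundle forces $\varepsilon_N:=\|H_{1/N}-Q_{1/N}^B(h_0)\|\to 0$ as $N\to\infty$. My first step would be to reduce the problem to the ``fully quantized'' operator $A_N:=NQ_{1/N}^B(h_0)$ in place of $\tilde H_{1/N}=NH_{1/N}$. Writing $\tilde H_{1/N}=A_N+NE_{1/N}$ with $\|E_{1/N}\|=\varepsilon_N$, the elementary Peierls--Bogolyubov bound $|\log Tr(e^{C+D})-\log Tr(e^C)|\le\|D\|$ (valid for bounded self-adjoint operators on the finite-dimensional space $\mathbb{C}^{N+1}$, cf.\ \cite[Thm 7]{Rus}) gives
\[
\Big|\tfrac{1}{N}\log Tr(e^{-\beta \tilde H_{1/N}})-\tfrac{1}{N}\log Tr(e^{-\beta A_N})\Big|\le \beta\varepsilon_N\xrightarrow[N\to\infty]{}0,
\]
so only the asymptotics of $\tfrac{1}{N}\log Tr(e^{-\beta A_N})$ matter.

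Second, I would apply the Berezin--Lieb sandwich, which is exactly the argument of Lemma~\ref{berezinlieblemma} transplanted to the sphere via the coherent pure state quantization of Appendix~\ref{cpsqS}. The operator $A_N$ has upper symbol $N h_0$ and lower symbol $N\,(B_Nh_0)$, with $B_Nh_0(\Omega):=\langle \Psi_{1/N}^\Omega, Q_{1/N}^B(h_0)\Psi_{1/N}^\Omega\rangle$ the Berezin transform; the latter converges to $h_0$ uniformly on $S^2$. Jensen's inequality for the convex function $x\mapsto e^{-\beta x}$, once via the resolution of identity by coherent states and once via an orthonormal eigenbasis of $A_N$, yields
\[
c(1/N)\int_{S^2}e^{-\beta N B_Nh_0(\Omega)}\,d\mu_L(\Omega)\;\le\; Tr(e^{-\beta A_N})\;\le\; c(1/N)\int_{S^2}e^{-\beta N h_0(\Omega)}\,d\mu_L(\Omega),
\]
with $c(1/N)=N+1$. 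Since $\tfrac{1}{N}\log(N+1)\to 0$ and $\|B_Nh_0-h_0\|_\infty\to 0$, both bracketing integrals share the $\tfrac{1}{N}\log$-asymptotics of $\int_{S^2}e^{-\beta N h_0}d\mu_L$.

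The third step is a Laplace-type estimate on the compact manifold $S^2$: for any $g\in C(S^2)$,
\[
\lim_{N\to\infty}\tfrac{1}{N}\log\int_{S^2}e^{-\beta N g(\Omega)}\,d\mu_L(\Omega)=-\beta\inf_{\Omega\in S^2}g(\Omega).
\]
The upper bound is immediate from $\int\le e^{-\beta N\inf g}\mu_L(S^2)$; the matching lower bound follows by restricting the integral to the nonempty open set $\{g<\inf g+\eta\}$, which carries positive $\mu_L$-measure by continuity of $g$, and then sending $\eta\downarrow 0$. Combining the three steps gives $\tfrac{1}{N}\log Tr(e^{-\beta\tilde H_{1/N}})\to -\beta\inf_\Omega h_0(\Omega)$, hence the claim after dividing by $-\beta$.

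The only non-formal ingredient is the uniform Berezin-transform convergence $B_Nh_0\to h_0$ for merely continuous $h_0\in C(S^2)$, which is where the explicit $SU(2)$-representation-theoretic structure of the sphere quantization enters: density of polynomial symbols (for which the Berezin transform is computable in closed form and uniform control is immediate) combined with the contraction bound $\|B_N\|_{C(S^2)\to C(S^2)}\le 1$ reduces the general case to that one. The remainder of the argument is essentially finite-dimensional linear algebra plus dominated convergence.
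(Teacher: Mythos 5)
Your proposal is correct and follows essentially the same route as the paper: reduction to the operator $NQ_{1/N}^B(h_0)$, the Berezin--Lieb sandwich via Jensen's inequality, and the Laplace asymptotics of $\tfrac{1}{N}\log\int_{S^2}e^{-\beta N h_0}\,d\mu_L$ (which the paper obtains by invoking Varadhan's theorem, and you prove directly). The only added value is that you make explicit two steps the paper leaves implicit, namely the Peierls--Bogolyubov norm-perturbation bound justifying the reduction and the density argument for the uniform convergence of the lower symbol.
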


\begin{proof}
By definition of the continuous bundle there is a $h_0\in C(S^2)$ such that $H_{1/N}$ satisfies
\begin{align}\label{brbrbr}
    ||H_{1/N}-Q_{1/N}^B(h_0)||_N\to 0, \ \text{when} \ N\to\infty,
\end{align}
where $Q_{1/N}^B$ is defined by \eqref{defquan3}. To prove the proposition, a simple estimation shows that it suffices to prove the statement for  $NQ_{1/N}^B(h_0)$. 
Working with this operator we first use Berezin-Lieb's inequality and obtain
\begin{align*}
\frac{(N+1)}{4\pi}\int_{S^2}d\mu_L(\Omega)e^{-N\beta \langle \Psi_N^{\Omega},Q_{1/N}^B(h_0)\Psi_N^{\Omega}\rangle}\leq Tr[e^{-\beta N Q_{1/N}^B(h_0)}]\leq Tr[Q_{1/N}^B(e^{-\beta N h_0})].
\end{align*}
This implies that
 \begin{align}
    &-\frac{1}{N\beta}\log{\bigg{(}\frac{N+1}{4\pi}\int_{S^2}e^{-\beta N h_0(\Omega)}d\mu_L(\Omega)\bigg{)}}\leq \frac{1}{N}F_{1/N}^\beta(NQ_{1/N}^B(h_0))
    \nonumber\\&\leq
-\frac{1}{N\beta}\log{\bigg{(}\frac{N+1}{4\pi}\int_{S^2}d\mu_L(\Omega)e^{-N\beta \langle \Psi_N^{\Omega},Q_{1/N}^B(h_0)\Psi_N^{\Omega}\rangle}\bigg{)}}    
    \label{estimatenew}
\end{align}
where we have applied \eqref{niceformula}
Since $\Omega\mapsto \langle \Psi_N^{\Omega}, Q_{1/N}^B(h_0)\Psi_N^{\Omega}\rangle$ converges  to $\Omega\mapsto h_0(\Omega)$ uniformly,\footnote{This is not difficult to see using the explicit form of the coherent spin states defined through \eqref{om1}--\eqref{om2}.} it follows from Varadhan's Theorem that $\frac{1}{N}F_{1/N}^\beta(NQ_{1/N}^B(h_0))$ converges to $\inf_{\Omega\in S^2}{h_0(\Omega)}$. 
This concludes the proof of the proposition.
\end{proof}

\appendix

\section{Canonical examples of coherent pure state quantizations}\label{pssm}

\subsection{Coherent pure state quantization of $\mathbb{R}^{2n}$}\label{Beronrealphasespace}
We consider the manifold $\mathbb{R}^{2n}$ with symplectic structure $\sum_{k=1}^n dp_k \wedge dq^k$. 
It is well-know that this manifold admits a strict deformation quantization in the sense of Definition \ref{def:deformationq}. Additionally, it turns out that this manifold admits a coherent pure state quantization as well. We hereto recall a result in \cite[II. Prop 2.3.1]{Lan98}.
\begin{theorem}[II. Prop. 2.3.1]
Let $I=[0,\infty)$ and $\mathcal{H}_\hbar=L^2(\mathbb{R}^n,dx)$ for each $\hbar>0$.  Denote by $\mu_L$ the Liouville measure on $\bR^{2n}$ which coincides with the standard $2n$-dimensional Lebesgue measure $d^nqd^np$. For any $(p,q)\in\bR^{2n}$ define a unit vector $\Psi_\hbar^{(q,p)}\in \mathcal{H}_\hbar$ by 
\begin{align}
\Psi_\hbar^{(q,p)}(x) := (\pi \hbar)^{-n/4}e^{-\frac{i}{2}p\cdot q/\hbar} e^{ip\cdot x/\hbar} e^{-(x-q)^2/2\hbar} \:, \quad (x \in \bR^n)\:, \hbar >0.\:
\end{align}
Denote the projection of $\Psi_\hbar^{(q,p)}\in\mathbb{S}\mathcal{H}$ to $\mathbb{P}\mathcal{H}$ by $\psi_\hbar^{(q,p)}$. Then the choices, 
\begin{align}
&q_\hbar(p,q)=\psi_\hbar^{(q,p)};\\
& c(\hbar)=\frac{1}{(2\pi\hbar)^n},
\end{align}
so that $d\mu_\hbar(p,q)=\frac{d^npd^nq}{(2\pi\hbar)^n}$, yield a coherent pure state quantization of $(\mathbb{R}^{2n},\sum_{k=1}^n dp_k \wedge dq^k)$.  The unit vectors $\Psi_\hbar^{(q,p)}$ are  called  {\bf Schr\"{o}dinger coherent states}.  For $f\in L^\infty(\bR^{2n})$ the Berezin quantization map $Q_\hbar^B(f)$ then reads
\begin{align}\label{berquanrealphasespace}
Q_\hbar^B(f)=\frac{1}{(2\pi\hbar)^n}\int_{\bR^{2n}}d^nqd^np f(q,p)|\Psi_\hbar^{(q,p)}\rangle\langle \Psi_\hbar^{(q,p)}|,
\end{align} 
where $|\Psi_\hbar^{(q,p)}\rangle\langle\Psi_\hbar^{(q,p)}|$ is the one-dimensional projection onto the subspace spanned by $\Psi_\hbar^{(q,p)}\in \mathcal{H}_\hbar$.
\hfill $\blacksquare$
\end{theorem}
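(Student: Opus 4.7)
The plan is to verify the three conditions of Definition \ref{purestate} together with the coherence condition of Definition \ref{coherentsts} for the proposed family $\{\Psi_\hbar^{(q,p)}\}$, after which the displayed formula for $Q_\hbar^B(f)$ is immediate from Definition \ref{Berezinquan}. The key computational input, obtained by straightforward completion of the square in the Gaussian integrand, is the overlap
\begin{align*}
|\langle \Psi_\hbar^{(q_0,p_0)},\Psi_\hbar^{(q,p)}\rangle|^2 = e^{-[(q-q_0)^2 + (p-p_0)^2]/(2\hbar)}.
\end{align*}
The coherence condition is free: the explicit formula shows that $(q,p)\mapsto \Psi_\hbar^{(q,p)}$ is smooth (a fortiori continuous) from $\mathbb{R}^{2n}$ into $L^2(\mathbb{R}^n,dx)$.

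First I would verify the resolution of identity (1) by testing the operator $c(\hbar)\int d\mu_L(q,p)\,|\Psi_\hbar^{(q,p)}\rangle\langle\Psi_\hbar^{(q,p)}|$ weakly against Schwartz vectors, where Fubini and explicit Gaussian integration reduce the statement to the normalization of a standard Gaussian density on $\mathbb{R}^{2n}$ (equivalently, to Plancherel for the Fourier transform in the $p$-variable after integrating out $q$). Condition (2) then follows directly from the overlap formula above: setting $\rho=(q_0,p_0)$, the left-hand side of \eqref{resolutionproperty} becomes the convolution of $f\in C_c(\mathbb{R}^{2n})$ with the normalized Gaussian of variance $\hbar$ in each of the $2n$ coordinates, which is a standard approximate identity delivering pointwise convergence to $f(q_0,p_0)$; continuity of the integral in $\hbar\in I_0$ follows from dominated convergence using the uniform bound provided by $\|f\|_\infty$.

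The main obstacle is condition (3), the approximate symplectomorphism. Here I would pass to the Bargmann-type holomorphic parametrization $z=(q+ip)/\sqrt{2\hbar}$, which identifies the rescaled family as a Kähler coherent-state system on $\mathbb{C}^n$. The symplectic form $\Omega_{\mathbb{P}\mathcal{H}_\hbar}$ admits, in a local non-normalized holomorphic trivialization $\sigma\mapsto \Psi_\hbar^{\sigma}$, the Kähler-potential description proportional to $\hbar\,\partial\bar{\partial}\log\|\Psi_\hbar^{\sigma}\|^2$. A direct calculation from the explicit formula (after stripping off the global phase $e^{-ipq/(2\hbar)}$ which is killed by projectivization) gives $\log\|\Psi_\hbar^{(q,p)}\|^2 \propto (q^2+p^2)/(2\hbar)$ up to a holomorphic $+$ anti-holomorphic term, so that $q_\hbar^*\Omega_{\mathbb{P}\mathcal{H}_\hbar}= dp\wedge dq$ identically in $\hbar$. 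Hence the limit in (3) holds trivially; it is in fact an equality. Once (1)-(3) and coherence have been established, the stated formula for $Q_\hbar^B$ is obtained by substituting $d\mu_\hbar=(2\pi\hbar)^{-n}\,d^nq\,d^np$ and $[q_\hbar(q,p)]=|\Psi_\hbar^{(q,p)}\rangle\langle\Psi_\hbar^{(q,p)}|$ into the weak integral in Definition \ref{Berezinquan}. Among these steps the pullback computation in (3) is the only one that requires genuine care with conventions; everything else reduces to explicit Gaussian integrals.
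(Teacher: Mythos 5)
Your proposal is correct, but note that the paper itself offers no proof of this statement: it is recalled verbatim from \cite[II.~Prop.~2.3.1]{Lan98} in an appendix, so there is no in-paper argument to compare against. Your verification is essentially the standard one from the coherent-state literature. The overlap formula $|\langle\Psi_\hbar^{(q_0,p_0)},\Psi_\hbar^{(q,p)}\rangle|^2=e^{-[(q-q_0)^2+(p-p_0)^2]/(2\hbar)}$ is correct, and it does simultaneously settle condition (1) (after a density/positivity argument to pass from Schwartz vectors to all of $\mathbb{P}\mathcal{H}_\hbar$) and condition (2) (Gaussian approximate identity, consistent with $(2\pi\hbar)^{-n}\int_{\bR^{2n}}e^{-|u|^2/(2\hbar)}\,du=1$). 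The one point a referee would press you on is condition (3): the exact equality $q_\hbar^*\Omega_{\mathbb{P}\mathcal{H}_\hbar}=\sum_k dp_k\wedge dq^k$ depends on the convention that $\Omega_{\mathbb{P}\mathcal{H}_\hbar}$ carries the factor $\hbar$ relative to the bare Fubini--Study form (as the subscript $\hbar$ in the paper indicates and as Landsman's conventions require); with the unnormalized Fubini--Study form the pullback would be $\hbar^{-1}$ times the flat form and condition (3) would fail. You do include this factor in your K\"ahler-potential expression, and the potential $|z|^2=(q^2+p^2)/(2\hbar)$ in the holomorphic trivialization is right, so the argument goes through; it would be worth spelling out that the given normalized section differs from a holomorphic one by the factor $e^{-|z|^2/2}$ times a phase, which is exactly what is discarded under projectivization. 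The final formula for $Q_\hbar^B$ is indeed immediate from Definition \ref{Berezinquan} once $c(\hbar)=(2\pi\hbar)^{-n}$ and $[q_\hbar(\sigma)]=|\Psi_\hbar^{(q,p)}\rangle\langle\Psi_\hbar^{(q,p)}|$ are substituted.
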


Furthermore, for each $\hbar>0$ the map $Q_\hbar^B:C_0(\bR^2)\to\gB_\infty(L^2(\bR^n))$ is a surjection. 

\subsection{Coherent pure state quantization of $S^2$}\label{cpsqS}
In this section we focus on the symplectic manifold $(S^2,\sin{\theta}d\theta\wedge d \phi)$ where $\theta\in (0,\pi)$ and $\phi\in (0,2\pi)$. The existence of a strict deformation quantization has been shown in \cite[Thm. 8.1]{Lan17}. Also now a coherent pure state quantization can be constructed. For $f\in C(S^2)$ the Berezin quantization map is defined by
\begin{align}\label{defquan3}
Q_{1/N}^B(f)& :=
 \frac{N+1}{4\pi}\int_{S^2}d\Omega f(\Omega)|\Psi_N^\Omega\rangle\langle\Psi_N^\Omega|\: \in \gB(\text{Sym}^N(\mathbb{C}^2)),
\end{align}
where $d\Omega$ indicates the unique $SO(3)$-invariant Haar measure on $S^2$ with $\int_{S^2} d\Omega =4\pi$, $\gB(\text{Sym}^N(\mathbb{C}^2))$ denotes the algebra of bounded operators on the symmetrized tensor product $\text{Sym}^N(\mathbb{C}^2)\subset \bigotimes_{n=1}^N\bC^2$, and $\Psi_N^\Omega$ are the {\bf spin coherent states} constructed in the following manner.
The vector space of the $N$-fold symmetric tensor product on $\bC^2$, i.e. $\text{Sym}^N(\mathbb{C}^2)\subset \bigotimes_{n=1}^N\bC^2$ has dimension equal to $N+1$. Using the bra-ket notation, let  $|\!\uparrow\rangle, |\!\downarrow\rangle$ be the eigenvectors of $\sigma_3$ in $\mathbb{C}^2$, so that
$\sigma_3|\!\uparrow\rangle=|\!\uparrow\rangle$ and $\sigma_3|\!\downarrow\rangle=- |\!\downarrow\rangle$, and where  $\Omega \in {S}^2$, with  polar angles  
$\theta_\Omega \in (0,\pi)$, $\phi_\Omega \in (-\pi, \pi)$, we then define the unit vector
\begin{align}\label{om1}
|\Omega\rangle_1= \cos \frac{\theta_\Omega}{2} |\!\uparrow\rangle + e^{i\phi_\Omega}\sin   \frac{\theta_\Omega}{2} |\!\downarrow\rangle.
\end{align}
 If $N \in \mathbb{N}$, the associated {\bf $N$-coherent spin state} $\Psi_N^{\Omega}:=|\Omega\rangle_N\in \mathrm{Sym}^N(\mathbb{C}^2)$, equipped with the usual scalar product $\langle \cdot ,\cdot \rangle_N$ inherited from    
$(\mathbb{C}^2)^N$,  is defined as follows \cite{Pe72}:
\begin{align}\label{om2}
|\Omega\rangle_N=  \underbrace{|\Omega\rangle_1 \otimes \cdots \otimes |\Omega \rangle_1}_{N \: times}.
\end{align}

The result stated in the following proposition provides a coherent pure state quantization of  $S^2$.

\begin{proposition}
Let $I=(1/\mathbb{N}_*)\cup\{0\}$ (with the  topology inherited from $[0,1]$),  where $\mathbb{N}_*=1,2,3,\cdots$, $\hbar=1/N$ ($N\in\mathbb{N}_*$). Define $\mathcal{H}_\hbar=\text{Sym}^N(\mathbb{C}^2)$. Denote by $\mu_L$ the Liouville measure on $S^2$ which coincides with the spherical measure $\sin{\theta}d\theta d\phi$ ($\theta\in (0,\pi) \ \phi\in (0, 2\pi)$). For any $\Omega:=(\theta,\phi)\in S^2$ define a unit vector $\Psi_N^\Omega\in \mathcal{H}_\hbar$ by \eqref{om2}. Denote the projection of $\Psi_N^\Omega\in\mathbb{S}\mathcal{H}_\hbar$ to $\mathbb{P}\mathcal{H}_\hbar$ by $\psi_N^{\Omega}$. 
Then the choices, 
\begin{align}
&q_\hbar(\Omega)=\psi_N^\Omega;\\
& c(\hbar)=(N+1)/4\pi;
\end{align}
so that $\mu_{1/N}(\theta,\phi)=\frac{N+1}{4\pi}\sin{\theta}d\theta d\phi$ yield a coherent pure state quantization of $S^2$ on $I$. For $f\in C(S^2)$ the associated Berezin quantization maps $Q_{1/N}^B(f)$ is defined by \ref{defquan3}.
\hfill $\blacksquare$
\end{proposition}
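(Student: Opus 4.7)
The plan is to verify in order the three conditions of Definition \ref{purestate} together with the coherence condition of Definition \ref{coherentsts}, using the explicit form of the spin coherent states $|\Omega\rangle_N$ defined in \eqref{om1}--\eqref{om2}. Throughout, the discrete parameter $\hbar=1/N$ makes the continuity-in-$\hbar$ requirement of axiom (2) automatic, so only the algebraic/analytic content must be checked.

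\textbf{Axiom (1), resolution of identity.} The goal is to show
\begin{equation*}
\frac{N+1}{4\pi}\int_{S^2}|\Omega\rangle_N\langle\Omega|_N\, d\Omega = I_{\mathrm{Sym}^N(\mathbb{C}^2)}.
\end{equation*}
First I would note that the left-hand side commutes with the natural irreducible $SU(2)$-action on $\mathrm{Sym}^N(\mathbb{C}^2)$, since the Haar measure $d\Omega$ is $SO(3)$-invariant and rotations act on the coherent states by $|R\Omega\rangle_N$ up to a phase which cancels in $|\Omega\rangle_N\langle\Omega|_N$. Schur's lemma then forces the operator to be a scalar multiple of the identity, and the constant is fixed by taking the trace and using $\langle\Omega|\Omega\rangle_N=1$, giving $(N+1)/(4\pi)\cdot 4\pi/(N+1)=1$. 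Pairing both sides with any $\psi\in\mathbb{P}\mathcal{H}_\hbar$ yields \eqref{residentity}.

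\textbf{Axiom (2), delta-convergence of the transition probability.} A direct computation from \eqref{om1}--\eqref{om2} gives
\begin{equation*}
p_N(\Omega,\Omega') = |\langle\Omega|\Omega'\rangle_N|^2 = \left(\frac{1+\Omega\cdot\Omega'}{2}\right)^N,
\end{equation*}
where $\Omega\cdot\Omega'$ is the Euclidean inner product of the unit vectors on $S^2$. Since $(1+\Omega\cdot\Omega')/2\in[0,1]$ with maximum $1$ attained only at $\Omega=\Omega'$, the family of measures $\mu_N^\rho := \tfrac{N+1}{4\pi}p_N(\rho,\cdot)\,d\mu_L$ concentrates at $\rho$ as $N\to\infty$. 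I would make this quantitative by a Laplace-type argument: in geodesic polar coordinates around $\rho$, $(1+\Omega\cdot\Omega')/2=\cos^2(d/2)$, so $p_N$ decays like $\exp(-Nd^2/4)$ on scale $d\sim 1/\sqrt N$, while axiom (1) guarantees total mass one. Combined with continuity of $f\in C_c(S^2)$, dominated convergence then yields \eqref{resolutionproperty}.

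\textbf{Axiom (3), approximate symplectomorphism.} I would compute the pullback $q_{1/N}^*\Omega_{\mathbb{P}\mathcal{H}_N}$ directly using the Fubini--Study potential. For a normalized holomorphic family $\Omega\mapsto \Psi_N^\Omega$, the Fubini--Study form pulls back to $i\,\partial\bar\partial\log\|\Psi_N^\Omega\|^2$ up to a factor of $N$; computing $\|\cdot\|^2$ in the stereographic chart $z=e^{i\phi}\tan(\theta/2)$ gives $(1+|z|^2)^N$, whence
\begin{equation*}
q_{1/N}^*\Omega_{\mathbb{P}\mathcal{H}_N} = \frac{N}{N}\cdot\frac{2i\,dz\wedge d\bar z}{(1+|z|^2)^2} = \sin\theta\,d\theta\wedge d\phi = \Omega_{S^2},
\end{equation*}
so the condition in fact holds \emph{exactly} (modulo the chosen normalization of $\Omega_{\mathbb{P}\mathcal{H}_N}$, in which case the $N$-factors cancel and no limit is needed). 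This is the step that requires the most care, as the normalization conventions for the Fubini--Study symplectic form must be matched to those used in Definition \ref{purestate}; the explicit holomorphic parametrization of $|\Omega\rangle_N$ via $(1,z)^{\otimes N}/(1+|z|^2)^{N/2}$ makes the computation straightforward.

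\textbf{Coherence.} The map $\Omega\mapsto |\Omega\rangle_N$ is manifestly continuous away from the south pole (where the coordinates $(\theta,\phi)$ degenerate) by inspection of \eqref{om1}--\eqref{om2}, and continuity at the pole is repaired by the standard phase adjustment (using the chart $z'=1/z$). This gives the continuous lift required by Definition \ref{coherentsts}, so the pure state quantization is coherent. Finally, by Definition \ref{Berezinquan} the Berezin map associated to this data is exactly \eqref{defquan3}, completing the proof. The main obstacle is axiom (3): aligning the conventions for the Fubini--Study form and carrying out the short but delicate computation of the pullback.
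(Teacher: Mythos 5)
The paper does not actually prove this proposition: it is stated with a closing $\blacksquare$ and the surrounding text defers to the literature (Perelomov \cite{Pe72}, Lieb \cite{Lieb}, and \cite[Thm.~8.1]{Lan17} for the related strict quantization), so there is no in-paper argument to compare yours against. Judged on its own, your plan is the standard and essentially correct one: Schur's lemma plus the trace normalization for the resolution of identity is exactly how \eqref{residentity} is established for spin coherent states; the overlap formula $|\langle\Omega|\Omega'\rangle_N|^2=\bigl(\tfrac{1+\Omega\cdot\Omega'}{2}\bigr)^N=\cos^{2N}(\gamma/2)$ is correct, and your concentration argument for \eqref{resolutionproperty} works (continuity in $\hbar$ is indeed vacuous on the discrete set $I_0=1/\mathbb{N}_*$), though what you invoke is really an approximate-identity estimate --- split the integral at a ball of radius $\delta$ and use $(N+1)\cos^{2N}(\delta/2)\to 0$ --- rather than dominated convergence. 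The Fubini--Study computation in the chart $z=e^{i\phi}\tan(\theta/2)$ with $\|(1,z)^{\otimes N}\|^2=(1+|z|^2)^N$ is also correct, and you are right that the only delicate point there is matching the $\hbar$-dependent normalization of $\Omega_{\mathbb{P}\mathcal{H}_\hbar}$ so that the factor $N$ from $\log(1+|z|^2)^N$ cancels against $\hbar=1/N$.

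The one claim I would push back on is the coherence step. The section $\Omega\mapsto|\Omega\rangle_N$ of \eqref{om1}--\eqref{om2} is continuous on $S^2$ minus the south pole, but a ``phase adjustment using the chart $z'=1/z$'' only moves the discontinuity elsewhere: a globally continuous unit-vector lift of $q_{1/N}$ cannot exist, because the pullback of the tautological line bundle is $\mathcal{O}(-N)$ over $\mathbb{CP}^1\cong S^2$, which is nontrivial for $N\geq 1$. So Definition \ref{coherentsts}, read literally as requiring a continuous map $\sigma\mapsto\Psi^\sigma_\hbar$ on all of $S$, is not satisfiable here; one must either settle for local continuous lifts or note that every quantity used downstream (the projections $|\Omega\rangle_N\langle\Omega|_N$ in \eqref{defquan3}, the symbols $\langle\Psi_N^\Omega,H\Psi_N^\Omega\rangle$) is phase-independent, so the defect is harmless. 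This is a gap the paper shares by asserting coherence without comment, but your proof should not claim the obstruction can be ``repaired''; it should be acknowledged and circumvented.
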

Moreover, as a result of \cite{MV} the maps \eqref{defquan3} define a surjection of $C(S^2)$ onto $\gB(\text{Sym}^N(\mathbb{C}^2))\cong M_{N+1}(\bC)$, the complex vector space of $(N+1)\times (N+1)$ matrices.

Finally, it can be shown that this family of Berezin quantization maps generates a continuous bundle of $C^*-$algebras whose continuous cross-sections are given by all sequences $(H_{1/N})$ for which $H_{1/N}\in\gB(\text{Sym}^N(\mathbb{C}^2))$ and $h_0\in C(S^2)$, and the sequence satisfies the following norm asymptotic equivalence property \cite[Thm. 8.1]{Lan17}
\begin{align}\label{normapproxasym}
\lim_{N\to\infty}||H_{1/N} - Q_{1/N}^B(h_0)||=0. 
\end{align}

\section{Quasi-locality, quasi-symmetric sequences and strict deformation quantization}\label{Qss}
We consider the standard $C^*$-inductive system over the projective tensor product over $I:=1/\mathbb{N}_*\cup\{0\}$. For this purpose,  we take any unital $C^*$-algebra $\gA$ and set
\begin{align}\label{Eq: quasi-local bundle}
    \mathcal{A}_{1/N}:=\begin{dcases}
        \gA^{N}\quad N\in\mathbb{N};
        \\
        [\gA]^\infty\quad N=\infty,
    \end{dcases}
\end{align}
where $\gA^N:=\gA^{\otimes N}$ denoted the projective tensor product of $\gA$ with itself. The $C^*-$algebra $[\gA]^\infty$ in turn is constructed in terms of the standard embedding maps $\varphi_N^M:\gA^M \hookrightarrow \gA^N$ with $N\geq M$ defining {\em local sequences}
\begin{align}
   &\varphi^M_N\colon \gA^M\ni A_M
    \mapsto A_M\otimes I^{N-M}\in \gA^N, \label{localsequence}
\end{align}
whose notion extends to {\em quasi-local sequences} (see \cite{OA2,BR1,BR2}). More precisely, the space  $[\gA]^\infty$ is defined by the quotient
\begin{align}\label{quotient3}
[\gA]^\infty:=\{(A_{1/N})_N\ | \ (A_{1/N})_N \ \text{quasi-local sequence}\}/\sim,
\end{align}
where $(A_{1/N})\sim (A_{1/N}')$ if and only if $\lim_{N\to\infty}||A_{1/N} - A_{1/N}'||=0$. Elements of $[\gA]^\infty$ are therefore equivalence classes, and the ensuing norm is given by
\begin{align*}
||[A_{1/N}]_N||:=\lim_{N\to\infty}||A_{1/N}||,
\end{align*}
so that $[\gA]^\infty$ is the completion of the space of these equivalence classes in this norm, dubbed {\em quasi-local algebra}.
Additionally, the fibers $\mathcal{A}_{1/N}$ and $\mathcal{A}_0$ can be shown to constitute a continuous bundle of $C^*$- algebras $\mathcal{A}^{q}$ over the base space $I=\{0\}\cup 1/\mathbb{N}_*\subset[0,1]$ (with relative topology, so that $(1/N)\to 0$ as $N\to\infty$).\footnote{The superscript $q$ occurring in $\mathcal{A}^{q}$ indicates that this continuous bundle of $C^*$-algebras corresponds to a non-commutative $C^*$-algebra of observables used to describe the thermodynamic limit.} The corresponding continuous cross-sections are the quasi-local sequences.
\\\\
We now extend the previous construction to the case of additional permutation symmetry. We set
\begin{align}\label{Eq: quasi-local bundle}
    \mathcal{A}_{1/N}^\pi:=\begin{dcases}
        S_N(\gA^{N})\quad  N\in\mathbb{N};
        \\
        [\gA]_\pi^\infty\quad N=\infty,
    \end{dcases}
\end{align}
where $S_N$ is the \textbf{symmetrization operator} defined by continuous and  linear extension on elementary tensors
\begin{align}\label{def:SN}
    S_N(A_1\otimes\ldots\otimes A_N):=
    \frac{1}{N!}\sum_{\pi}A_{\pi(1)}\otimes\cdots A_{\pi(N)}\,,
\end{align}
where the summation is over the elements $\pi$ in the permutation group of order $N!$. The space $[\gA]_\pi^\infty$ is defined in a similar was as before, that is, by the quotient \eqref{quotient3}, where in this case  ``quasi-local sequences'' is replaced by {\em quasi-symmetric sequences}.
To construct these we need to generalize the definition of $S_N$. For $N\geq M$ define a bounded  operator  $S_{M,N}: {\gA}^M 	\to {\gA}^N$, defined by linear and continuous extension of  
\begin{align}S_{M,N}(A) = S_N(A \otimes \underbrace{I \otimes \cdots \otimes I }_{N-M \mbox{\scriptsize times}}),\quad A \in {\gA}^M. \label{defSMN}
\end{align}
Clearly, $S_{N,N}=S_N$. Now, a sequence
 $(A_{1/N})_{N\in\mathbb{N}}$ is called {\bf symmetric}
if there exist $M \in \mathbb{N}$ and $A_{1/M} \in {\gA}^{\otimes M}$ such that 
\begin{align}\label{one}
    A_{1/N} = S_{M,N}(A_{1/M})\:\mbox{for all }  N\geq M,
\end{align}
and {\bf quasi-symmetric} if  $A_{1/N} = S_{N}(A_{1/N})$ if $N\in \mathbb{N}$,
and for every $\epsilon > 0$, there is a symmetric sequence $(A_{1/N}')_{N\in\mathbb{N}}$
as well as  $M \in\mathbb{N}$ (both depending on $\epsilon$) such that 
\begin{align}\label{quasisymmnew}
 \|A_{1/N}-A_{1/N}'\| < \epsilon\: \mbox{ for all } N > M.
\end{align} 
Furthermore, for any quasi-symmetric sequence the following limit exists
\begin{align}
a_0(\omega)=\lim_{N\to\infty}\omega^N(A_{1/N})\label{8.46},
\end{align}
where $\omega\in S({\gA})$, and $\omega^N=  \underbrace{\omega\otimes \cdots \otimes \omega}_{N\: \mbox{\scriptsize times}} \in S({\gA}^{\otimes N})$
is  the unique (norm) continuous linear extension of the following map that is defined on elementary tensors
\begin{align}
\omega^N(A_1\otimes\cdot\cdot\cdot\otimes A_N)=\omega(A_1)\cdot\cdot\cdot\omega(A_N).\label{linearstateextension}
\end{align}
The limit in \eqref{8.46} defines a function in $C(S(\gA))$ provided that $(A_{1/N})_{N\in\mathbb{N}}$ is quasi-symmetric, otherwise it may not exist. 

In fact, a non-trivial result shows that $[\gA]_\pi^\infty$ is commutative and isomorphic to $C(S(\gA))$, i.e. the $C^*-$algebra of continuous functions on the state space of $\gA$. Also in this case we have a continuous bundle of $C^*-$algebras $\mathcal{A}^{c}$ over the same base space $I=\{0\}\cup 1/\mathbb{N}_*\subset[0,1]$, but as opposed to the previous bundle, the $C^*-$ algebra at $N=\infty$ satisfies $[\gA]_\pi^\infty\cong C(S(\gA))$.\footnote{The superscript $c$ occurring in $\mathcal{A}^{c}$ indicates that this continuous bundle of $C^*$-algebras corresponds to a commutative $C^*$-algebra of observables of infinite quantum systems used to describe classical thermodynamics as a limit of quantum statistical mechanics.} The continuous cross sections of the bundle are the quasi -symmetric or macroscopic sequences.
\\\\
It turns out that the $C^*$-bundle $\mathcal{A}^{c}$ relates to a strict deformation quantization  the state space $S(\gA)$ in the case that $\mathfrak{A}=M_k(\mathbb{C})$ \cite{LMV}. Indeed, $X_k:=S(M_k(\mathbb{C}))$ is shown to admit a canonical Poisson structure. The quantization maps $Q_{1/N}$ to be defined on a dense Poisson algebra $\Tilde{\mathcal{A}}_0^\pi\subset\mathcal{A}_0^\pi:=C(S(M_k(\mathbb{C})))$ are defined as follows. First, our Poisson subalgebra $\Tilde{\mathcal{A}}_0^\pi$ is made of the restrictions to $S(M_k(\mathbb{C}))$ of polynomials in $k^2-1$ coordinates of $\mathbb{R}^{k^2-1}$. As each elementary symmetrized tensor of the form $T_{j_1}\otimes_s\cdot\cdot\cdot\otimes_sT_{j_L}$ (where $iT_1,\ldots, iT_{k^2-1}$ form a basis of the Lie algebra of $SU(k)$) may be uniquely identified with a monomial $p_L$ of degree $L$, one is allowed to define the quantization map $Q_{1/N}$. More precisely,  if $$p_L(x_1,\ldots, x_{k^2-1}) = x_{j_1} \cdots x_{j_L}\quad \mbox{where $j_1,\ldots, j_L \in   \{1,2,\ldots, k^2-1\}$,}$$ 
the quantization maps $Q_{1/N}: \Tilde{\mathcal{A}}_0^\pi\subset \mathcal{A}_0^\pi \to S_N(M_k(\bC)^N)$ act as 
\begin{align}\label{deformationquantization1}
 Q_{1/N}(p_L) &=
\begin{cases}
    S_{L,N}(T_{j_1}\otimes_s\cdot\cdot\cdot\otimes_sT_{j_L}), &\ \text{if} \ N\geq L \\
    0, & \ \text{if} \ N < L,
\end{cases}\\
Q_{1/N}(1) &= \underbrace{I_k \otimes \cdots \otimes I_k}_{\scriptsize N \: times}. \label{deformationquantization2},
\end{align}
and more generally they are defined as  the unique continuous and linear extensions of the written maps. It can be shown that the  maps $Q_{1/N}$ satisfy all the axioms of Definition \ref{def:deformationq}, yielding a strict deformation quantization of $X_k$. It follows directly from this construction that $X_k$ does not admit a (coherent) pure state quantization.

\section*{Acknowledgments} This author is supported by a postdoctoral fellowship granted by the Alexander von Humboldt-Stiftung (Germany). The author is grateful to Nicolò Drago for his feedback and illuminating discussions. The author also thanks the referee for the feedback and indicating points for improvement.

\end{document}